\renewcommand\IEEEkeywordsname{Keywords}
\newif\ifcomment
\newcommand{\nodeset}{\mathcal{N}}
\newcommand{\branchset}{\mathcal{B}}
\newcommand{\conflictset}{\mathcal{C}}
\newcommand{\blockset}{\mathcal{T}}
\newcommand{\ledger}{\mathcal{L}}
\newcommand{\votingset}{\mathcal{V}}
\newcommand{\LedgerDAG}{D_{\ledger}}
\newcommand{\ConflictDAG}{D_{\conflictset}}
\newcommand{\ConflictGraph}{G_{\conflictset}}
\newcommand{\TangleDAG}{D_{\blockset}}
\newcommand{\VotingDAG}{D_{\votingset}}
\newcommand{\LocalTangleDAG}[1]{D_{\blockset_{#1}}}
\let\phi=\varphi
\newcommand{\cone}[3]{\mathrm{cone}_{#2}^{(#1)}\left(#3\right)}
\newcommand{\parent}[2]{\mathrm{par}_{#1}\left(#2\right)}
\newcommand{\child}[2]{\mathrm{child}_{#1}\left(#2\right)}
\newcommand{\maximal}[2]{\max_{#1}\left(#2\right)}
\newcommand{\minimal}[2]{\min_{#1}\left(#2\right)}
\newcommand{\val}{v}
\newcommand{\genesis}{\rho}
\newcommand{\cond}{\mathrm{cond}}
\newcommand{\reference}{\mathrm{ref}}
\newcommand{\unlock}{\mathrm{unlock}}
\newcommand{\branch}{\mathrm{branch}}
\newcommand{\inputs}{\mathrm{in}}
\newcommand{\outputs}{\mathrm{out}}
\newcommand{\weight}{\mathbf{w}}
\newcommand{\mathV}{\mathcal{V}}
\newcommand{\E}{\mathbb{E}}
\renewcommand{\P}{\mathbb{P}}
\newcommand{\tips}{\mathbf{T}}
\newcommand{\algo}{\mathbf{A}}
\newcommand{\M}{\mathcal{M}}
\newcommand{\Q}{\mathcal{Q}}
\newcommand{\dFPCS}{\mathcal{D}}
\newcommand{\ddRNG}{\mathbf{d}}
\newcommand{\AW}{\mathbf{AW}}
\newcommand{\WW}{\mathbf{WW}}
\newcommand{\supporter}{\mathrm{sprt}}
\newcommand{\supporterAW}{\mathrm{sprt}^{\ledger}}
\newcommand{\tx}[1]{\hat{#1}}
\let\phi=\varphi
\newcommand{\hash}{\mathop{\mathrm{hash}}}
\newcommand{\iss}{\mathbf{issue}}
\DeclareMathOperator*{\argmax}{arg\,max}
\newtheorem{theorem}{Theorem}
\newtheorem{cor}{Corollary}
\newtheorem{lemma}{Lemma}
\newtheorem{proposition}{Proposition}
\newtheorem{remark}{Remark}
\theoremstyle{definition}
\newtheorem{definition}{Definition}
\newtheorem{example}{Example}
\newtheorem{ass}{Assumption}
\numberwithin{definition}{section}
\numberwithin{lemma}{section}
\numberwithin{theorem}{section}
\numberwithin{cor}{section}
\numberwithin{ass}{section}
\numberwithin{remark}{section}
\numberwithin{example}{section}
\numberwithin{proposition}{section}
\begin{document}
\title{Tangle 2.0\\  Leaderless Nakamoto Consensus on the Heaviest DAG}

\author{
\IEEEauthorblockN{Sebastian M\"uller\IEEEauthorrefmark{1}\IEEEauthorrefmark{2}, Andreas Penzkofer\IEEEauthorrefmark{2},
Nikita Polyanskii\IEEEauthorrefmark{2},
Jonas Theis\IEEEauthorrefmark{2}, 
William Sanders\IEEEauthorrefmark{2},
Hans Moog\IEEEauthorrefmark{2}}
\IEEEauthorblockA{\IEEEauthorrefmark{1}Aix Marseille Universit\'e, CNRS, Centrale Marseille, I2M - UMR 7373,  13453 Marseille, France}
\IEEEauthorblockA{\IEEEauthorrefmark{2}IOTA Foundation, 
Berlin, Germany}}



\maketitle

\begin{abstract}
We introduce the theoretical foundations of the Tangle 2.0, a probabilistic leaderless consensus protocol based on a directed acyclic graph (DAG) called the Tangle. The Tangle naturally succeeds the blockchain as its next evolutionary step as it offers features suited to establish more efficient and scalable distributed ledger solutions. 

Consensus is no longer found in the longest chain but on the heaviest DAG, where PoW is replaced by a stake- or reputation-based weight function.
The DAG structure and the underlying Reality-based UTXO Ledger allow parallel validation of transactions without the need for total ordering. Moreover, it enables the removal of the intermediary of miners and validators, allowing a pure two-step process that follows the \emph{propose-vote} paradigm at the node level and not at the validator level.

We propose a framework to analyse liveness and safety under different communication and adversary models.  This allows providing impossibility results in some edge cases and in the  asynchronous communication model.  We provide formal proof of the security of the protocol assuming a common random coin.
\end{abstract}

\renewcommand\IEEEkeywordsname{Keywords}
\begin{IEEEkeywords}
consensus protocol, leaderless, asynchronous, fault-tolerance, directed acyclic graph, security
\end{IEEEkeywords}

\section{Introduction}

In distributed systems, different events may happen \emph{at the same time}, but participants may perceive them in different orders. 
In contrast, distributed ledger technologies (DLTs) such as Bitcoin~\cite{nakamoto2008bitcoin} typically use a totally ordered data structure, a blockchain, to record the transactions that define the state of the ledger. This design creates a bottleneck, e.g. a miner or validator, through which each transaction must pass. The creation of blocks can also happen concurrently at different parts of the network, leading to bifurcations of the chain that must be resolved. This is typically done by the longest--chain rule ~\cite{nakamoto2008bitcoin} or  some variant of the heaviest sub-tree \cite{GHOST}.  To guarantee the security of the system, the throughput of the system is artificially suppressed so that each block propagates fully before the next block is created, and very few ``orphan blocks''  spontaneously split the chain. 
Another effect that limits scalability is that the transactions are handled in batches. The miners create these batches or blocks of transactions and the blockchain can be seen as a three-step process. In the first step, a client sends a transaction to the block producers, then some block producer proposes the block containing a batch of transactions, and in the last step, validators validate the block.

A more novel approach that addresses the asynchronous setting of the distributed system has been taken by IOTA~\cite{popov2015}. This approach eliminates the need for clustered transactions and
uses a directed acyclic graph (DAG) (as the underlying data structure) to express simultaneous events. In this model, individual transactions are added to the ledger, and each transaction refers to at least two previous transactions. This property reduces the update of the ledger to two steps: One node proposes a transaction to the ledger and waits for the other nodes to validate it.  The removal of the intermediary of miners or validators promises to solve (or at least mitigate) several problems associated with them, e.g. mining races~\cite{DEVRIES2021105901}, centralisation~\cite{NBERw29396}, miner extractable value \cite{MEV}, and negative externalities \cite{Rosenthal22}  and allows for a fee-less architecture.  
However, the parallelism involved in adding new transactions to the ledger means that consensus must be found on a ``wider'' subgraph than just the longest chain or the heaviest sub-tree. 

\subsection{Results}
Two main problems of Nakamoto's ``longest-chain rule'' are the severely limited scalability and the lack of parallelisability.  The lack of parallelisability results in the underlying communication network requiring strong assumptions about synchronicity. 
We propose a consensus protocol that works efficiently and fast in an asynchronous model and allows a high degree of parallelisation.  This is achieved by replacing the ``longest-chain rule'' with the ``heaviest-DAG rule''.
As the resulting consensus is not based on a total ordering of the transactions, it enables the transactions to be stream processed. An optimization that becomes more and more relevant in the validation of smart contract updates and optional sharding solutions.

Another disadvantage in blockchains, which is perhaps not so well known, is the need for intermediaries in the form of miners or validators. By enabling leaderless writing access to the ledger we remove this dependency and reduce the system to a dichotomy of fund owners and nodes, where nodes take additional roles akin to validators. Nodes propose new blocks, which contain transactions from fund owners, and append them to the Tangle.  Nodes utilise the append process to   validate and vote on previous blocks in a highly efficient implicit voting scheme.  

We propose a generalisation of the voting power of nodes in form of a generalised weight function. This generalisation allows for a high level of configurability of our protocol, making it   adaptable to the needs and security requirements of the system in which it should be implemented, such as permissionless or permissioned.

We introduce an asynchronous leaderless protocol that employs a weight-based voting scheme on the Tangle. In this scheme, the supporters of transactions, which are the nodes, are tracked through implicit votes. The confirmation status of transactions can be determined using threshold criteria. We provide the algorithms for the various core components. More specifically, we describe how the supporter lists are updated through the implicit voting scheme and how nodes should attach their blocks to the Tangle. We provide theorems for the convergence, as well as the liveness and safety of the system. First, given a random, unpredictable influx of blocks, Theorem~\ref{thm: security Random Blocks} gives guarantees that the system will converge eventually on a consensus state if an adversary has less than 50\% of the weight, however, no safety guarantees are given in this case. Second, we give safety and liveness guarantees by extending the protocol and incorporating the capability to synchronise the nodes at certain intervals with the help of a common coin. The security guarantees for this extended protocol are given in Theorem~\ref{thm: SOTV}. Finally, we provide an overview of simulation results that display the performance of the protocol.

\subsection{Structure of the Paper}

The document is structured as follows. 
In Section~\ref{sec:background} we give an overview of essential aspects relevant to the design of a DLT solution. In Section \ref{sec:related work} we provide an overview of other recent DAG-based protocols and highlight the differences to our proposal. 
Section~\ref{sec: lists} provides an overview of used symbols, acronyms and glossary. 
Section~\ref{sec: graph-theoretical notions}  gives an overview of some of the graph-theoretical preliminaries used in this paper.
In Section~\ref{sec: nodes sybil} we provide a basic network setting within which the proposed Sybil protection mechanism operates.
 Section~\ref{sec: tangle} describes the functionality of the Tangle data structure and how it is utilised to confirm blocks.
Section~\ref{sec: ledger} introduces an overview of the Reality-based UTXO Ledger, which forms a central component in our approach that helps with tracking the opinions of honest nodes about conflicting transactions.
In Section~\ref{sec: Voting}, we describe the voting protocol and confirmation of transactions. 
In Section~\ref{sec: comm Adversary Model} we define the communication and adversary models and address the liveness and security of the system in Sections~\ref{sec: security} and~\ref{sec: impossibility Metastability}. In particular,  we show that certain attacks that attempt to create a ``metastable'' situation, could become problematic under specific circumstances and strong assumptions about the adversary. In Section~\ref{sec: SOTV} we provide a solution to this by introducing a synchronization of nodes at larger time intervals. In Section~\ref{sec: implementation}, to showcase the performance of the protocol, we provide results from simulation studies.
Finally, we conclude the paper with Section~\ref{sec: outlook}, where we describe future research directions.

\subsection{Background}\label{sec:background}

Consensus protocols in general and even DLTs, in particular, are such a large research area that we have to refer to some review articles for a more detailed introduction, e.g. ~\cite{SOK, vademecum, Wang2020SoKDI}. Although a consensus protocol depends on many different aspects, we focus, in the remaining part of the introduction, on those that are most relevant for the design choices of our proposed protocol.

\subsubsection*{Ledger Model}
Distributed ledgers (DLs) generally arrive in two flavours of balance keeping: an account-based model, where funds are directly associated with the account of a user, such as is the case with Ethereum~\cite{buterin2013ethereum}; and an unspent transaction output (UTXO) model, where tokens are linked to a so-called output, and users own the keys to the output, as is the case with Bitcoin~\cite{nakamoto2008bitcoin} and many of its derivatives, as well as Cardano \cite{cardanoEUTXO}, Avalanche~\cite{Ava19}, and IOTA \cite{2020coordicide}. 
As an important observation in the latter case, the UTXOs form a DAG themselves. A total ordering of the transactions is unnecessary for many use cases and situations, as most of them are parallelisable. However, the append-only nature of the UTXO ledger hinders this advantage of parallelisation in the presence of conflicting transactions. In \cite{RealitiesLedger2022} we propose an augmented UTXO ledger model that optimistically updates the ledger and tracks the dependencies of the possible
conflicts. 
We construct a consensus protocol that utilises this ledger model to enable fast and parallelisable conflict resolution.

\subsubsection*{The Tangle and Partial Order}

The Tangle is the DAG that stores all transactions of the  distributed ledger (DL). Every DAG induces a partial order on the set of vertices, the collection of transactions in our setting. This property contrasts with a blockchain where a total order of transactions is established. As in systems with crash failures, atomic broadcast and consensus are equivalent problems, see~\cite{ChTuTo:96}, the partial order of the DAG induces additional ``difficulties'' in the consensus protocol. 
More precisely, there have been serious limitations concerning the security of a DAG-based DLT. In the original proposal of the Tangle,~\cite{popov2015}, the longest chain rule was replaced by the ``heaviest sub-graph'', i.e.  the sub-DAG containing the most transactions. However, it turned out that this design is vulnerable to various types of attacks and would rely too much on the Proof-of-Work necessary to issue a transaction, e.g.  \cite{penzkofer2020}. Another critical element of the design that is common to many other DAG-based proposals is that it suffers a liveness problem. Honest transactions that refer to transactions that turn out to be malicious in the future can not be added to the ledger state. The protocol we propose in this paper solves the security problems by relying on a weight function for nodes and by using the Reality-based Ledger. It also treats the problems of liveness by separating transactions from their containers, which are blocks,\footnote{Unlike many blockchain protocols, we require each block to contain precisely one transaction. However, in principle, the protocol can be adapted such that blocks contain more than one transaction.} and by applying a new block referencing scheme. In particular, this batch-less architecture enables a stream process-oriented design of the DLT.

\subsubsection*{Sybil Protection}
Sybil protection plays a crucial role in a ``permissionless environment'' where everyone can participate. By leveraging Proof-of-Work (PoW), Bitcoin’s Nakamoto consensus was the first to achieve consensus in such an open environment. As PoW leads to enormous energy waste and many negative externalities, a lot of effort has been put into proposing more sustainable alternatives. The most prominent of them is called Proof-of-Stake (PoS), where the validators' voting power is proportional to their stake (i.e.  in terms of the underlying cryptocurrency) in the system.

The Sybil protection used in this paper is based on node identities. We describe it generically as a function of a scarce resource or an abstract reputation function. This function, called \emph{weight} assigns every node identity a positive number. 
For example, this weight can correspond to an amount of staked tokens, delegated tokens, or the ``mana'' described in  \cite{2020coordicide}. We want to note that the weight does not have to be connected to the underlying token but can be replaced by any other ``weight'' serving as a good Sybil protection. In particular, our framework can also be used in a permissioned setting, where only the pre-defined validators would have a positive weight and can apply to the situation with dynamic committee selections. 

\subsubsection*{Nakamoto Consensus}
Distributed consensus allows participants to agree on a constantly growing log of transactions. It has been an important research topic in recent decades, and its importance in computer science has never been disputed. There are many ways to categorize consensus protocols. For instance, there are the classical landmark results on PAXOS and BFTs, and the newer  \textit{Nakamoto} type consensus mechanisms. 

We understand as Nakamoto consensus the rule to select the longest sub-chain, e.g. see\cite{Wang2020SoKDI}, and as a variant also the heaviest weighted sub-chain. We extend this concept to the heaviest sub-DAG. More precisely we consider, a Nakamoto  blockchain consensus to follow the \emph{propose-vote} paradigm and that it can be described as follows. The time is divided into epochs, and for each epoch, there is an ``elected'' leader. This leader batches  transactions into a new block and proposes this block. Then the other participants vote on the proposed block, e.g. by extending the chain to which the proposed block is attached. Once the number of votes reaches a certain threshold, the proposed block is considered part of the ledger. The specific definition of the various elements mentioned above may vary and lead to different variants of the Nakamoto Consensus. To some extent, the above paradigm reduces to the necessity to agree on a unique leader in each epoch. 
Once the participants have a consensus on the leader, the linearity of the blockchain implies consensus on the ledger state. However, the fact that only a leader can advance the ledger state creates an obvious bottleneck with well-known performance limitations. 
In our proposal, we remove the role of the ``leader'' entirely and allow the participants to propose their blocks and the contained transactions concurrently. Once a block is proposed, all participants can vote and participate in the consensus finding. The weight of the vote is proportional to the \emph{weight} of the node, introduced above, such that the protocol adapts to different weight distributions. 
The protocol is also classified as a non-binary consensus protocol since it can decide on several transactions simultaneously and is an ever-ongoing voting procedure forming a progressively-growing history.\footnote{Both the ledger DAG structure and the Tangle are technically transient data structures since they can be pruned in theory. 
Thus, the voting is also transient knowledge. However, for simplicity, we assume that both data structures are not pruned.}
It also relates to a probabilistic consensus in the sense that the more supporting nodes a transaction accumulated the more likely it is that this transaction is eventually confirmed and added to the ledger.

\subsubsection*{Voting}

In our non-linear architecture, each new block references at least two existing blocks. This results in a DAG structure as mentioned above. As with a blockchain, a new block not only votes on its direct references but also on its past cone. Although this is an efficient voting scheme, there is the problem of orphanage or liveness. If a block contains an invalid block in its past cone, it can no longer be voted for and, thus, the contained transaction cannot be included in the ledger. We solve this problem by introducing two different references. The first reference is to the Tangle structure and the second is to the DAG structure originating from the UTXO ledger. The last reference allows voting for transactions that were originally orphaned and also to change previously issued votes. Eventually, both types of votes accumulate in a voting weight, which we call the Approval Weight (AW). The higher this AW the higher the probability that the transaction is eventually included in the ledger. We refer to Figure \ref{fig:octopus_example} for an example of the voting mechanism.
\begin{figure}
    \centering
    \definecolor{megalightgray}{RGB}{244, 244, 244}
\definecolor{mygray}{RGB}{240, 240, 240}
\definecolor{myblue}{RGB}{102, 178, 255}
\definecolor{myred}{RGB}{255, 102, 102}
\tikzstyle{rounded_block}=[draw, rectangle, thick, minimum height=\heightBlock cm, minimum width = \widthBlock cm, text centered, rounded corners, draw=darkgray, font = \small]

\begin{tikzpicture}[use Hobby shortcut, scale = 0.95]
\def\xCoordinate{0.0}
\def\yCoordinate{0.0}
\def\yAdd{1.5}
\def\xAdd{0.75}
\def\innerSepar{-1.5}
\def\lineWidth{1.5}
\def\roundedCorners{1}
\def\opacityInternal{0.5}
\def\minHeight{20}
\def\fracyAdd{1/6}
\def\lineWidthBelow{3}
\def\minWidth{20}
\def\scaleFactorSupp{0.12}
\def\minWidthCM{\minWidth*0.0352778*3/4}
\tikzstyle{block}=[draw, rectangle, minimum height=\minHeight pt, minimum width = \minWidth pt, text centered, rounded corners=\roundedCorners pt, draw=darkgray, font=\large]
\tikzstyle{block_colored}=[draw, rectangle, minimum height= 2 pt, minimum width = \minWidth * 5 / 6 pt, rounded corners=\roundedCorners  pt, draw=darkgray]
\def\arrowStyle{-latex}

\path
  (-0.2,-2.5) coordinate (z0)
  (-1.5,-4.2) coordinate (z1)
  (0.4, -6.5) coordinate (z2)
  (1.0, -5.4) coordinate (z23)
  (1.4,-5.0) coordinate (z233)
  (1.5,-4.2) coordinate (z3)
  (0.5,-3.35) coordinate (z4)
  (0.35,-2.5) coordinate (z5);
  \draw[closed, draw = myred, fill = megalightgray] (z0) .. (z1) .. (z2) .. (z23) .. (z233) .. (z3) .. (z4) .. (z5);

\path
  (1.6,-2.5) coordinate (y0)
  (1.6,-3.5) coordinate (y01)
  (2.1,-4.0) coordinate (y1)
  (1.9, -4.9) coordinate (y12)
  (0.75, -6.5) coordinate (y2)
  (-0.5, -9) coordinate (y23)
  (-1.1,-11.5) coordinate (y233)
  (3.8,-11.5) coordinate (y3)
  (3.6,-6.5) coordinate (y34)
  (3.5,-3.35) coordinate (y4)
  (2.5,-2.5) coordinate (y5);
  \draw[draw = myblue, fill = megalightgray] (y3) .. (y34) .. (y4) .. (y5) .. (y0) .. (y01) ..  (y1) .. (y12) .. (y2) .. (y23) .. (y233);

\node [rounded_block, dash dot, fill=none, minimum height=1.2 cm, minimum width = 3 cm] at (1,-3) {\begin{tabular}{l}
    double \\
    spend
\end{tabular}} ;

\node (node_0_0) at (0, 0) {};
\node (node_0_1) at (2, 0) {};
\node[block] (node_1_0) at (1, -1.5) {}; 

\draw[\arrowStyle, path fading = north] (node_1_0) -- (node_0_0); 
\draw[\arrowStyle, path fading = north] (node_1_0) -- (node_0_1); 

\node[block, fill = white, draw = myred, very thick] (node_2_0) at (0, -3) { $\tx{x}$}; 
\node[block,  fill = white, draw = myblue,very thick] (node_2_1) at (2, -3) { $\tx{y}$};
\draw[line width = 3, line cap = round, myred](-0.25, -3.25)--(0.25, -3.25);
\draw[line width = 3, line cap = round, myblue](1.75, -3.25)--(2.25, -3.25);

\draw[\arrowStyle] (node_2_0) -- (node_1_0); 
\draw[\arrowStyle] (node_2_1) -- (node_1_0); 

\node[block, fill = white] (node_3_0) at (-1, -4.5) {}; 
\node[block, fill = white] (node_3_1) at (1, -4.5) {}; 
\node[block, fill = white] (node_3_2) at (3, -4.5) {}; 

\draw[\arrowStyle] (node_3_0) -- (node_2_0); 
\draw[\arrowStyle] (node_3_1) -- (node_2_0);
\draw[\arrowStyle] (node_3_2) -- (node_2_1);
\draw[line width = 3, line cap = round, brown](-1.25, -4.75)--(-0.75, -4.75);
\draw[line width = 3, line cap = round, purple](0.75, -4.75)--(1.25, -4.75);
\draw[line width = 3, line cap = round, orange](2.75, -4.75)--(3.25, -4.75);

\node[block, fill = white] (node_4_0) at (0, -6) {}; 
\node[block, fill = white] (node_4_1) at (2, -6) {}; 
\draw[\arrowStyle] (node_4_0) -- (node_3_0); 
\draw[\arrowStyle] (node_4_0) -- (node_3_1);
\draw[\arrowStyle] (node_4_1) -- (node_3_2);
\draw[\arrowStyle,dashed] (node_4_1) -- (node_3_1);
\draw[line width = 3, line cap = round, brown](-0.25, -6.25)--(0.25, -6.25);
\draw[line width = 3, line cap = round, teal](1.75, -6.25)--(2.25, -6.25);

\node[block, fill = white] (node_5_0) at (1, -7.5) {}; 
\node[block, fill = white] (node_5_1) at (3, -7.5) {}; 
\draw[line width = 3, line cap = round, brown](0.75, -7.75)--(1.25, -7.75);
\draw[line width = 3, line cap = round, teal](2.75, -7.75)--(3.25, -7.75);

\draw[\arrowStyle, dashed] (node_5_0) -- (node_4_0); 
\draw[\arrowStyle] (node_5_0) -- (node_4_1);
\draw[\arrowStyle] (node_5_1) -- (node_3_2);

\node[block, fill = white] (node_6_0) at (0, -9) {}; 
\node[block, fill = white] (node_6_1) at (2, -9) {}; 
\draw[line width = 3, line cap = round, myblue](-0.25, -9.25)--(0.25, -9.25);
\draw[line width = 3, line cap = round, orange](1.75, -9.25)--(2.25, -9.25);

\draw[\arrowStyle] (node_6_0) -- (node_5_0);
\draw[\arrowStyle,dashed] (node_6_0) -- (node_3_0);

\draw[\arrowStyle] (node_6_1) -- (node_5_1);
\draw[\arrowStyle] (node_6_1) -- (node_4_1);

\node[block, fill = white] (node_7_0) at (1, -10.5) {}; 
\node[block, fill = white] (node_7_1) at (3, -10.5) {}; 
\draw[line width = 3, line cap = round, purple](0.75, -10.75)--(1.25, -10.75);
\draw[line width = 3, line cap = round, teal](2.75, -10.75)--(3.25, -10.75);

\draw[\arrowStyle] (node_7_0) -- (node_6_1);
\draw[\arrowStyle] (node_7_0) -- (node_6_0);
\draw[\arrowStyle] (node_7_0) -- (node_5_0);
\draw[\arrowStyle] (node_7_1) -- (node_6_1);
\draw[\arrowStyle] (node_7_1) -- (node_5_1);

\node (node_8_0) at (0, -12) {};
\node (node_8_1) at (2, -12) {};
\node (node_8_2) at (4, -12) {};

\draw[\arrowStyle, path fading = south] (node_8_0) -- (node_7_0);
\draw[\arrowStyle, path fading = south] (node_8_1) -- (node_7_0);
\draw[\arrowStyle, path fading = south] (node_8_1) -- (node_7_1);
\draw[\arrowStyle, path fading = south] (node_8_2) -- (node_7_1);

\draw[dotted] (-1.5,-6.75) -- (7,-6.75);
\draw[dotted] (-1.5,-11.25) -- (7,-11.25);

\node[block, fill = white, draw = myred, very thick] (node_2_0) at (4.5, -6) { $\tx{x}$}; 
\node[block,  fill = white, draw = myblue,very thick] (node_2_1) at (6, -6) { $\tx{y}$};
\draw[line width = 3, line cap = round, myred](4.25, -6.25)--(4.75, -6.25);
\draw[line width = 3, line cap = round, myblue](5.75, -6.25)--(6.25, -6.25);
\begin{scope}[transform canvas = {scale = \scaleFactorSupp}, shift = {(4.85 / \scaleFactorSupp, -5.65/\scaleFactorSupp)}]
\foreach \i/\j/\k in{0/60/myred, 60/120/white, 120/180/purple, 180/240/white, 240/300/brown, 300/360/white}
{
\draw[fill = \k](0, 0)  -- (\i:2) arc(\i:\j:2);
}
\end{scope}

\begin{scope}[transform canvas = {scale = \scaleFactorSupp}, shift = {(6.35 / \scaleFactorSupp, -5.65/\scaleFactorSupp)}]
\foreach \i/\j/\k in{0/60/white, 60/120/myblue, 120/180/white, 180/240/orange, 240/300/white, 300/360/teal}
{
\draw[fill = \k](0, 0)  -- (\i:2) arc(\i:\j:2);
}
\end{scope}

\node[block, fill = white, draw = myred, very thick] (node_2_0) at (4.5, -10.5) { $\tx{x}$}; 
\node[block,  fill = white, draw = myblue,very thick] (node_2_1) at (6, -10.5) { $\tx{y}$};
\draw[line width = 3, line cap = round, myred](4.25, -10.75)--(4.75, -10.75);
\draw[line width = 3, line cap = round, myblue](5.75, -10.75)--(6.25, -10.75);
\begin{scope}[transform canvas = {scale = \scaleFactorSupp}, shift = {(4.85 / \scaleFactorSupp, -10.15/\scaleFactorSupp)}]
\foreach \i/\j/\k in{0/60/myred, 60/120/white, 120/180/white, 180/240/white, 240/300/white, 300/360/white}
{
\draw[fill = \k](0, 0)  -- (\i:2) arc(\i:\j:2);
}
\end{scope}

\begin{scope}[transform canvas = {scale = \scaleFactorSupp}, shift = {(6.35 / \scaleFactorSupp, -10.15/\scaleFactorSupp)}]
\foreach \i/\j/\k in{0/60/white, 60/120/myblue, 120/180/brown, 180/240/orange, 240/300/purple, 300/360/teal}
{
\draw[fill = \k](0, 0)  -- (\i:2) arc(\i:\j:2);
}
\end{scope}

\end{tikzpicture}
    \caption{The Tangle is utilised as a voting layer for nodes to reach a consensus about the outcome of a conflict. Nodes agree on the winner between conflicting transactions $\tx{x}$ and $\tx{y}$ using a leaderless protocol. Different colours represent signatures of different nodes. The number of supporting nodes, shown on the right, increases for transaction $\tx{y}$ with time. The dashed references are so-called transaction references and allow to ``rescue'' transactions that voted for the ``losing part''.}  
    \label{fig:octopus_example}
\end{figure}
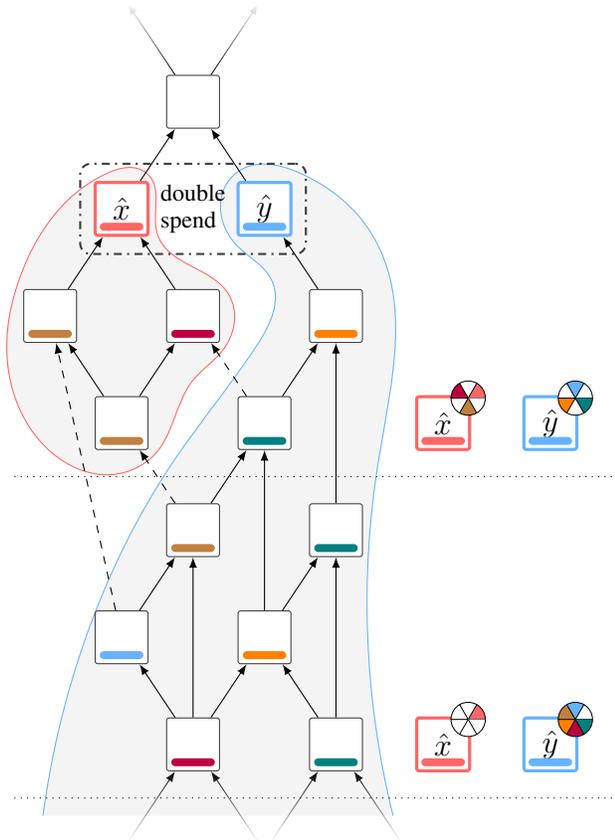

Generally, the voting mechanism can be applied to any DAG-based data structure with an append process that allows for referencing previous blocks. It requires three main ingredients:
the first essential ingredient is a reference scheme that efficiently casts and propagates votes. 
The second necessary ingredient is the construction of a generalised invariant data structure that allows conflicts to coexist (see Section~\ref{sec: ledger}). This feature allows to treat transactions ``optimistically''; every new incoming transaction is considered ``honest'' unless it conflicts with another transaction. Consequently, nodes may start to build on top of every new transaction, even though this transaction may turn out to be conflicting.  
The third ingredient is a voting mechanism, dubbed On Tangle Voting (OTV), that efficiently votes on a possible unbounded number of transactions simultaneously. 
The efficiency is achieved by maintaining a low block overhead since votes of other nodes can be piggy-backed through the  implicit voting mechanism. Also in contrast to classical Byzantine fault tolerance, nodes don't have to be monitored for activity since the issuance of transactions (casting of votes) is a clear sign of being functional.

\subsubsection*{Security}
Since the beginning of research on consensus protocols, the concept of security has been at the centre of attention. Any consensus protocol aims to reach consensus on a data. Some of the participants may be faulty or even active in preventing a consensus, and one is interested in the conditions under which consensus can be achieved. 

The security of a propose-vote consensus protocol is usually divided into two points; liveness and safety. Liveness means that any correct transaction is finally accepted by all honest participants, and safety means that all participants finally agree on the same set of transactions. The question of whether a given consensus protocol fulfils these properties depends largely on the model assumptions. Roughly, these can be divided into the communication model and the attacker model. 

In the most restrictive communication model, the synchronous model, many different solutions are known since the landmark result~\cite{LaShPe:82}.
However, this is not the case under the most general communication model, the asynchronous model, which does not assume any bounds on the transmission delay of block; commonly denoted by $\Delta$. One of the most famous results on consensus protocols is the FLP impossibility result~\cite{FLP} stating that in an asynchronous communication model, a single faulty participant can hinder the consensus finding. As the FLP impossibility result relies on specific configurations of block delays, many practitioners argued that it does not apply to real-world implementations as these particular situations are very unlikely to occur. In between these two extreme communications models, several intermediary models have been proposed, and many positive results have been obtained under stronger assumptions on the network delay, e.g. the partially synchronous model~\cite{Dwork1988}, the timed asynchronous model~\cite{CrFe:99}, and the asynchronous model with failure detection~\cite{ChTuTo:96}.

Besides the communication model, the adversary model plays an important role, especially in the security analysis of Nakamoto protocols. The protocol's security is commonly expressed in the amount of scarce resources, e.g. energy or computing power, that is necessary to attack the protocol and revert already confirmed transactions. Nakamoto~\cite{nakamoto2008bitcoin} analyzed this property by considering a specific attack, the so-called private-double spend attack. Note that here the classic communication model is the partial synchronous one. 
Over the last decade, a pertinent research question was the search for worst-case attacker strategies and the identification of the security threshold in terms of the percentage of the scarce resource controlled by an adversary. Tight consistency bounds were recently given in \cite{everythingIsRace} and \cite{tightConsistency} for several classes of  longest-chain type protocols. While these security thresholds do hold in the partial synchronous situation, they fail in the asynchronous setting, e.g.~\cite{PSS:17}. 

There is also a line of research that studies how an attacker can compensate its lower weight with more influence on the communication level. The most prominent of  such an attack is the balance attack,~\cite{balanceAttack}, which consists of delaying network communications between multiple subgroups of nodes with balanced mining power.

This discussion is of particular interest to us because we propose a framework for modelling the communication level and adversarial level jointly. Unsurprisingly, we obtain impossibility results in the asynchronous communication model. Still, under further synchronicity assumptions, we prove that the protocol guarantees liveness and safety (with a very high probability) if the adversarial weight does not exceed certain thresholds. The obtained security bounds are established for any possible attack strategy and are configurable by the protocol. 

The situations that lead to the impossibility results in the asynchronous model are frequently considered irrelevant for practicable purposes, e.g.~\cite{Guerraoui97, Aguilera}. The argument for this is that in real-world applications, the randomness in the block delays is so great that the particular situation cannot occur. While we partly agree with this reasoning concerning our OTV, we added a second synchronicity level to our core voting protocol to obtain a rigorous security threshold. For this reason, we see our consensus protocol as a two-layer solution. The first layer works in an asynchronous setting and allows fast and secure confirmation under normal network conditions. The second layer is based on an optional synchronization of the nodes that allows consensus finding under worst-case scenarios. The synchronized level relies on a decentralised random beacon or common coin that makes the protocol robust against attacks similar to the balanced attack described above. Randomization of consensus protocols to circumvent the impossibility results are known since \cite{Ben-Or}, which  introduces local randomness. A common coin was introduced \cite{Rabin} and is used in several approaches to increase the security in the asynchronous setting.

\subsubsection*{Performance}

Defining a measure for the efficiency of a consensus protocol is not an easy task since it relies on many different aspects. Natural choices are the number of blocks sent between the participants and, in synchronous models, the number of communication steps. In DLTs, common measures are the number of transactions per second and the time to confirmation. As our protocol uses implicit voting and no direct blocks are exchanged between the nodes, it is optimal in block complexity (if votes are cast through blocks that would have been sent anyway). We present estimates for the time to confirmation and show their dependence on the distribution of the weights. We do not evaluate quantitative performance measures such as throughput and energy consumption in this work. This type of study will be addressed in follow up research.

A common misunderstanding is that asynchronous consensus protocols are not appropriate for time-critical applications~\cite{Guerraoui97}. The fallacy is that synchronous  protocols assume strong synchronicity assumptions; however, the security is harmed once these assumptions are not satisfied. We argue that it is even the converse and that asynchronous protocols might be better suited for time-critical applications. Under a good communication situation,  transactions are approved much faster than in synchronous models based on network delay estimations with an essential security margin.

One main drawback of the leader-based architecture of blockchains is its lack of scalability capability. To make this more precise, let $\Delta$ be the network latency, $\lambda$ the block issuing rate, and $q$ the weight of the adversary. Then, following~\cite{GHOST, everythingIsRace}, the condition for the security of the protocol is expressed as
\begin{equation}
q < \frac{1-q}{1+(1-q)\lambda \Delta}.
\end{equation}
For the design of a system that should support resilience against a maximum adversary weight $q$, this equation informs about the bound on the maximum rate at which blocks can be issued safely.
A safety violation can occur, for example, if there is disagreement about the recent leader. These disagreements can be caused by blocks being produced in parallel~\cite{Misic2019} or due to certain attack scenarios~\cite{Iqbal2021,neuder2021}. As a consequence re-organisations of the blockchain may occur, in particular for DLTs, where the block production rate is high~\cite{lovejoy202}.

In our case there is no theoretical upper limit for the throughput of the protocol in this paper; however, the limits of scalability of our protocol still need to be investigated in future work.

\subsection{Related Work on DAG-Based Protocols}\label{sec:related work}

We already mentioned various related works in the general introduction. This section focuses on the general architecture and mention previous proposals that use DAGs in the underlying data structures. Blockchain-based protocols rely on a chain or ``linearisation'' of blocks that create a total order of transactions. These blocks have three purposes: \textit{leader election}, \textit{data transmission} and \textit{voting} for ancestor blocks through the chain structure, see \cite{Prism}.
Each of these aspects can be, individually or combined, addressed through a DAG-based component. The various proposals differ in how and which of  these components are replaced by a DAG. 

The most common approach is to use a DAG structure for the data transmission. This is the most natural approach since if blocks are created at a high rate compared to their propagation time, many competing or even conflicting blocks are created, leading to frequent bifurcation points of the chain. As this results in a performance loss, a natural proposal is to include not only the ``main chain'' but also bifurcations using additional references, e.g. \cite{Sompolinsky2013AcceleratingBT, lewenberg2015inclusive, SPECTRE, GHOST, PHANTOM}.

Protocols can also achieve a higher degree of parallelisation of the data transmission or writing access if all participants can write and propose blocks. This concurrent writing access removes considerably performance limitations of traditional blockchains. In blockchains where only a tiny proportion of participants can write to the ledger, and these participants are randomly chosen, e.g. by PoW or PoS, participants need to communicate the set of pending transactions to all their peers. This {\it memory pool} is a considerable performance limitation as nodes must broadcast transactions twice. Several interesting proposals allow participants to add concurrent blocks to the ledger and to construct a {\it distributed memory pool} in the form of a DAG. In the following, we give two approaches that differ in how consensus is achieved and in the underlying Sybil protection. More specifically the first utilises a permissioned setting, while the second employs a permissionless setting.

In the permissioned setting there is the following interesting line of research.
The aim is to construct an atomic broadcast protocol based on a combined  encoding of the data transmission history and voting on ``leader blocks''. Such protocols allow the network participants to reach a consensus on a total ordering of the received transactions, and this linearised output forms the ledger. The most robust protocols achieve Byzantine fault tolerance in asynchronous settings and reach optimal communication complexity, see Honeybadger\cite{HoneyBadger} and \cite{BEAT}. Improvements are proposed, for example, in  Hashgraph \cite{Hashgraph} and Aleph \cite{Gagol2019} and more recently in Narwhal \cite{Narwhal22} based on the encoding of the ``communication history'' in the form of a DAG. 
These protocols remove the  bottleneck of data dissemination of the classical Nakamoto consensus by decoupling the data dissemination from the consensus finding. Promising improvements for the consensus finding on top of the DAG-based memory pool were recently made in DAG Rider \cite{DAGRider} and Bullshark \cite{Bullshark}. We also want to mention \cite{BlockDAG} that analyses and discusses this kind of protocol from a more abstract and general point of view.

There is a common point with our approach to mention here. A DAG structure serves as a ``testimony'' of the communication among the nodes, and new blocks are used for (implicit) voting on previous blocks. In other words, the DAG is used for the two purposes of data transmission and voting. 
However, voting is done only over so-called ``anchor blocks'', leading to an {\it a posteriori} leader election and total ordering of the transactions. 
Furthermore, and as mentioned above, these DAG-based broadcast protocols are designed for permissioned networks, which leads to similar safety-liveness properties to standard BFT protocols. A difference  is, thus, that our protocol is designed for an asynchronous network environment and is not round-based as these proposals above.

In the permissionless setting, another route is taken by Prism \cite{Prism}. This approach explicitly decomposes the three purposes of blocks into three types: proposer blocks, transaction blocks and voter blocks. Having separate transaction blocks allows participants to issue transactions and removes the need for a memory pool. The three types of blocks form a structured DAG that allows a very efficient way to vote on ``leader blocks'' that eventually give consensus via total ordering. Our approach is orthogonal in that we do not distinguish between different kinds of blocks but that the underlying DAG delivers consensus without an additional tool. 
In an implementation \cite{Prism10000} of Prism, another DAG was used to increase the performance of the execution of the transaction. More precisely, \cite{Prism10000} used a scoreboarding technique to execute the (totally) ordered UTXO transactions in parallel. In our approach, we actively construct a DAG, called the Ledger DAG, that encodes the dependencies of the transactions. This DAG is created before reaching consensus and allows tracking dependencies between pending or conflicting transactions. It was  demonstrated in \cite{prismSC} that Prism can also support smart contract platforms and that in their implementation, the bottleneck is no longer the consensus but the execution of the smart contracts.

The main difference of our proposal to all the aforementioned protocols is that consensus is found on the heaviest DAG without the need for a ``linearisation'' using any leader selection. This reduces the purposes of blocks  to data transmission and voting.

We want to mention another class of DAG-based and leaderless consensus protocols. However, it is conceptually different from the proposals above and our proposal. In this kind of protocol, e.g.~\cite{Ava19,muller2021}, the voting is performed via direct queries between the peers and hence necessities an additional communication layer. A DAG structure is used in Avalanche \cite{Ava19} to  ``transitively'' vote on several blocks at once. We note, however, that the authors of \cite{Ava19} fail to analyze their proposed protocol properly, and the question of whether it has the desired properties remains unclear, e.g.~\cite[Section 2.3]{PoMu:21}.

Finally, let us note that the above is only a selection of previous work on DAG-based DLTs and refer the reader to \cite{Wang2020SoKDI} for a more detailed summary.

\subsection{List of Acronyms and Symbols}\label{sec: lists}

For the reader's convenience, in this section, we summarize important notations and acronyms that are used throughout the paper. Furthermore, in Appendix~\ref{sec: glossary} we provide a glossary of the terms in use in this paper.

\subsection*{Symbols}
{    
\raggedleft
\hspace{-.3cm}
    \begin{tabular}{ll}
    \multicolumn{2}{l}{\textbf{Set Symbols}}\\
    $\branchset $ &  set of branches \\
    $\conflictset$ &  set of conflicts \\
    $\nodeset$ &  set of nodes in network \\
    $\ledger$ &  ledger or set of transactions  \\
    $\blockset$ &  set of blocks \\

    \multicolumn{2}{l}{\textbf{DAGs}}\\
    $\LedgerDAG$ & Ledger DAG  \\
    $\TangleDAG$ & Tangle DAG  \\
    $\VotingDAG$ & Voting DAG  \\
    $\child{V}{x}$   &  set of children of vertex $x$ in DAG $D$=$(V,E)$ \\
    $\cone{f}{V}{x}$    & future cone of vertex $x$ in DAG $D$=$(V,E)$ \\
    $\cone{p}{V}{x}$  &  past cone of vertex $x$ in DAG $D$=$(V,E)$ \\
    $D$=$(V,E)$  &  directed acyclic graph (DAG) with vertex \\
    & set $V$ and edge set $E$ \\
    $\genesis$ &   genesis or vertex with out-degree zero\\
    $\maximal{V}{S}$ &  set of maximal elements in set $S$ (maximal \\
    & according to DAG $D$=$(V,E)$) \\
    $\minimal{V}{S}$ &  set of minimal elements in set $S$ (minimal \\
    & according to DAG $D$=$(V,E)$) \\
    $N_V(x)$ &  set of neighbours of a vertex $x$ in \\
    & graph $G$=$(V,E)$ \\
    $\le_V$ &  partial order on set $V$ (usually induced by\\
    & a given DAG $D$=$(V,E)$) \\
    $\parent{V}{x}$ &  set of parent of vertex $x$ in DAG $D$=$(V,E)$ \\
    $\mathrm{sprt}_{V}(x)$ &  supporters of $x$ in DAG $D$=$(V,E)$ \\

    \multicolumn{2}{l}{\textbf{Time Symbols}}\\
    $\tau_f(\cdot)$ &  time to confirmation defined on $\blockset$ \\
    $\tau_{cf}(\cdot)$ &  confluence time defined on $\blockset$ \\
    $\tau_{s}(\cdot)$ &  solidification time defined on $\blockset$ \\


\multicolumn{2}{l}{\textbf{Weight Functions}}\\
    $\weight(\cdot)$ &  weight function defined on $\nodeset$ \\
    $\AW(\cdot)$ &  Approval Weight defined on $\ledger$ \\
    $\WW(\cdot)$ &  Witness Weight defined on $\blockset$ \\
     
\end{tabular}
}

\subsection*{Acronyms}
{    
\raggedleft
\hspace{-.3cm}
\begin{tabular}{ll}
    \textbf{AW} & Approval Weight \\
    \textbf{dRNG} & Distributed Random Number Generator \\
    \textbf{DAG} & Directed Acyclic Graph \\
    \textbf{DLT} & Distributed Ledger Technology \\
    \textbf{OTV} & On Tangle Voting\\
    \textbf{P2P} & Peer-to-Peer \\
    \textbf{PoW} & Proof-of-Work \\
    \textbf{PoVP}  & Proof-of-Voting-Power \\
    \textbf{TSA}  & Tip Selection Algorithm \\
    \textbf{TTC}  & Time to Confirmation \\
    \textbf{UTXO}  & Unspent Transaction Output \\
    \textbf{WW}  & Witness Weight 
\end{tabular}
}

\subsection*{Graph structures}

We employ several graph structures as a base for the consensus protocol. Table~\ref{tab: graph overview} gives an overview of the utilised graphs.

\begin{table}[ht]
    \centering
    \begin{tabular}{|p{50pt}|p{75pt}|p{75pt}|}
        \hline
         DAGs & Vertices & Edges   \\    \hline
         Tangle & blocks & references  \\
         Ledger DAG & transactions  & spending relations  \\
          Voting DAG & blocks, transactions &  voting references \\
          \hline
    \end{tabular}
    \caption{Overview of DAGs}
    \label{tab: graph overview}
\end{table}

\section{Graph Theoretical Preliminaries}\label{sec: graph-theoretical notions}

In this section, we summarize basic graph theoretical notations that are used in the remaining part of the paper.

The set of integers between $1$ and $m$ is denoted by $[m]$. A \emph{graph} ${G}$ is a pair $(V,E)$, where $V$ denotes the set of vertices and $E$ denotes the set of edges. A graph is called \emph{directed} if every edge has its direction, e.g.  for an edge $(u,v)$, the direction goes from $u$ to $v$. 

\begin{definition}[DAG]
A \emph{directed acyclic graph (DAG)} is a directed graph with no directed cycles, i.e.   by following the directions of edges, we never form a closed loop.
\end{definition}

A vertex $v$ in a graph ${G}=(V,E)$ is called \textit{adjacent} to a vertex $u$ if $(u,v)\in E$. An edge $e\in E$ is said to be adjacent to a vertex $v\in V$ if $e$ contains $v$. The \textit{out-degree} and \textit{in-degree} of a vertex $v$ in a directed graph $G=(V,E)$ is the number of adjacent edges of the form $(v,u)$ and, respectively, $(u,v)$. A vertex in a graph is called \textit{isolated} if there is no edge adjacent to it.
\begin{definition}[Neighbours in a graph]~\label{def: neighbours in graph}
Let ${G}=(V,E)$ be a graph. For a vertex $v\in V$, define the \emph{set of neighbours} (or ${G}$-neighbours), written as $N_{{V}}(v)$\footnote{In the remainder of the paper, we will often identify the graph with its vertex set since for a given set of vertices $V$, we will have only one DAG $D=(V,E)$. Thereby, the set of neighbours $N_{{V}}(v)$ and other concepts that use $V$ as a subscript will be clear from the context.}, to be the vertices adjacent to $v$. 
\end{definition}
\begin{definition}[Parents, children and leaves in a DAG]
Let ${D}=(V,E)$ be a DAG. For a vertex $v\in V$, define the set of \emph{parents}, written as $\parent{V}{v}$, to be the set of vertices $u\in V$ such that $(v,u)\in E$. Similarly, we define the set of \emph{children}, written as $\child{V}{v}$, to be the set of vertices $u\in V$ such that $(u,v)\in E$. A vertex $v\in V$ with in-degree zero is called a \emph{leaf}. 
\end{definition}

\begin{definition}[Partial order induced by a DAG]
Let ${D}=(V,E)$ be a DAG. We write $u\le_{{V}} v$  for some $u,v\in V$ if and only if there exists a directed path from $u$ to $v$, i.e.  there are some vertices $w_0=u,w_1,\ldots,w_{s-1},w_s=v$ such that  $(w_{i-1},w_{i})\in E$ for all $i\in[s]$.  Furthermore, we note $u<_{{V}}v$ if $u\le_{{V}} v$ and $u \neq v$.
\end{definition}
Note that there could be different DAGs producing the same partial order. The DAG with the fewest number of edges that gives the partial order $\le_{{V}}$ is usually called the \emph{transitive reduction} of ${D}$ or the \emph{Hasse diagram} of $\le_{{V}}$.

\begin{definition}[Minimal subDAG induced by a set of vertices]\label{def: minimal subDAG}
Let ${D}=(V,E)$ be a DAG.  For a subset of vertices $S\subseteq V$, we define the \emph{minimal subDAG} of ${D}$ induced by $S$ to be the DAG ${D'}=(V',E')$ whose vertex set is $V'=S$ and there is an edge $(v,u)\in E'$ if and only if $u,v\in S$, $v<_{{V}}u$ and there is no $w\in S\setminus \{u,v\}$ such that $v<_{{V}} w <_{{V}}u$. 
\end{definition}

\begin{definition}[Maximal and minimal elements]\label{def: min and max elements} Let ${D}=(V,E)$ be a DAG and let $\le_{{V}}$ be the partial order induced by ${D}$. For a subset of vertices $S\subseteq V$,  an element $u\in S$ is called \emph{${D}$-maximal} (\emph{${D}$-minimal}) in $S$ if there is no $v\in S\setminus\{u\}$ such that $u\le_{{V}}  v$ ($v\le_{{V}}  u$). Define $\maximal{V}{S}$ and $\minimal{V}{S}$ to be the set of ${D}$-maximal and, respectively, ${D}$-minimal elements in $S$.
\end{definition}

\begin{definition}[Future and past cones] \label{def: future and past cones}
Let ${D}=(V,E)$ be a DAG. For $x\in V$, define the \emph{past cone} of $x$ in ${D}$, written as $\cone{p}{V}{x}$ 
to be the set of all vertices $y\in V$ such that $x\le_{{V}} y$. Similarly, define the \emph{future cone} of $x$ in ${D}$, written as $\cone{f}{V}{x}$ 
to be the set of all vertices $y\in V$ such that $y\le_{{V}} x$. 
\end{definition}

\begin{definition}[Past-closed sets]\label{def: past closed}
Let ${D}=(V,E)$ be a DAG. A subset $S\subset V$ is called \emph{${D}$-past-closed} if and only if for every $u\in S$, the past cone $\cone{p}{{V}}{u}$ is contained in $S$. 
\end{definition}

\section{Nodes and  Participation}\label{sec: nodes sybil}

At a high level, DLTs can be divided into permissioned and permissionless networks. In a permissioned setting, only selected parties can participate, while in the permissionless setting, anyone can join the network at any time. 
In a permissioned network, participants have either reading access or writing (validation) rights. A ``fully'' permissioned (or private) DLT  selects the participants in advance and restricts any activity in the network to these only. This is in contrast to a permissionless network where anybody can participate in the network and validate the ledger. 
Our protocol can work in both settings using a generic weight function on the participating nodes. In the permissionless setting, this weight function serves as a Sybil protection, and in the permissioned setting, this weight function regulates the participant's influence.

In Section~\ref{sec: nodes}, we introduce the network participants called nodes. In Section~\ref{sec: sybil protection} we describe a Sybil protection mechanism based on assigning specific weights to nodes. Finally, in Section~\ref{sec: rate control} we discuss how the writing ability of nodes is controlled by their weight.

\subsection{Network}\label{sec: nodes}
The network participants in the DLT are called \emph{nodes}, and we denote the set of all nodes by $\nodeset:=\{1,\ldots, N$\}, where $N$ is the total number of nodes.  
\emph{A priori}, different nodes may have different perceptions of the set of nodes.  For example, in a permissionless setting, for a node to join the network, the knowledge of a single node entrance point is sufficient. For the sake of a better presentation, we assume that every node is aware of every other node. Nodes directly communicate with a subset of other nodes, i.e.  its neighbours, via bidirectional channels. Thus, together all nodes create a peer-to-peer (P2P) overlay network. Nodes use public-key cryptography for their identification. Their unique node ID is derived from the public key, and all their blocks are signed with their private keys.

In contrast to other DLTs, where nodes can be divided into separate functional classes, we assume all nodes behave in the same way. Specifically, all nodes have two main roles. First, they propagate specific blocks through the network by receiving and sending these from and to their neighbours. 
Second, by creating new blocks and appending them to the data structure, nodes implicitly vote on the state of the previous blocks and their contained transactions; this procedure is called On Tangle Voting (OTV), see Section~\ref{sec: Voting}.
For the voting part, we assume a scarce resource, see Section~\ref{sec: sybil protection}. 
This resource endows every node with a certain \emph{weight} that is used for the implicit voting procedure.

\subsection{Sybil Protection}\label{sec: sybil protection}

A common problem in permissionless distributed systems is that it is easy to spawn a significant number of nodes, also known as the Sybil attack. Thus, any critical component must ensure that the action of nodes is limited, otherwise, it would be trivial for an attacker to gain a disproportionately large influence and corrupt the protocol.

To limit or prevent Sybil attacks, we assume that each node can be associated with a particular reputation or \textit{weight}  attributing them an equivalent proportion of voting power in the applied voting mechanism.

\begin{definition}[Weight]\label{def:weight}
For a given node $i\in\nodeset$ there is an associated weight $\weight(i)$, given by a function $\weight: \nodeset \to [0,1]$. The weights are assumed to be normalised, i.e.   
$$
\sum_{i\in \nodeset} \weight(i)=1.
$$
\end{definition}
The above weight function plays a crucial role in the validation process, see Sections~\ref{sec: witness weight}-\ref{sec: Approval Weight}.

\begin{remark}
We make use of the same weights as a control for the writing access  in Section~\ref{sec: rate control}. Note, however, that the weight for writing and validation could be different.
\end{remark}

A common way to implement such a weight is the so-called resource testing, where each identity has to prove the ownership of specific difficult-to-obtain resources.
Since in the cryptocurrency world, users own a certain amount of a scarce resource, i.e.  tokens, a practical Sybil protection mechanism can be based on proving the ownership of tokens and, thus, a certain amount of collateral.

Another way of implementing the weights is through delegation methods. The owners of source tokens, from which the  weights are derived, can then delegate these weights to any node of their choosing. This brings several key advantages. For example, fund owners can delegate weight to nodes that provide good service or revoke it when the node does not behave as expected, thus enabling the implementation of a ``reputation'' system. In the extreme case, this even allows decoupling the weights from the token distribution and incorporate real-world trust models.

Generally, the weight distribution in our system may change over time due to changes in the weights or inevitable churns (nodes join and leave). Due to the asynchronous nature of the protocol, the perception of the weights may then differ from node to node. 
The protocol design considers this effect and allows a certain divergence in the weight vector. This tolerance to different perceptions provides for some additional features of the protocol. 
However, a more detailed discussion of a divergence in the nodes' view on the weight vector is out of the scope of this paper. Thus, for simplicity, we make the following assumption.

\begin{ass}[Agreement on stability of weights]
All nodes in the network perceive the weight of node $i$ to be precisely $\weight(i)$. This weight is assumed to remain constant over time. 
\end{ass}

\subsection{Writing Access}\label{sec: rate control}

The distributed nature of the protocol and the Byzantine environment within which it operates puts several constraints on the writing access.  These constraints are even more critical for our protocol since it is not leader-based and does not rely on the intermediary of miners and block creators. Similar to~\cite{cullen2021} we require the following conditions:
\begin{enumerate}
    \item \textbf{Consistency:} if a block that is issued by an honest node is written to the (distributed) database by one honest node, it should eventually be written by all honest nodes.
    \item \textbf{Fairness:}  given a weight function and a maximum bandwidth, nodes can issue blocks at a rate proportional to their weight.
    \item \textbf{Security:} the above constraints are guaranteed in a Byzantine environment.  
\end{enumerate}

Consequently, the protocol should ensure that in  congested scenarios only a limited amount of blocks are propagated, i.e.  the block rate is capped by a certain throughput. Furthermore, this should happen fairly. These requirements prevent nodes from becoming overloaded and from inconsistencies in the ledger being created.
In principle, this could be enabled through fees and PoW, or more novel alternatives as the access control algorithm presented in  \cite{cullen2021}.

For the safe operation of the consensus mechanism, we assume the availability of such a mechanism. The required tool should provide guarantees on the constraints mentioned above. We make the following assumption. 

\begin{ass}[Writing access]
The writing access is controlled such that consistency, security, and fairness  in writing access are guaranteed for a given weight function $\weight$.
\end{ass}

\section{Block Structure and Witness Weight}\label{sec: tangle}
In this section, we introduce our protocol's  data structure concepts. To replicate a certain content over the distributed network, a node must wrap this content in a block.\footnote{In prior works, we refer to this object as a message.}
However, when the content is simply transactions, we require a block to contain only one transaction in its payload. This assumption is made for sake of a better presentation and can be relaxed, such that blocks contain more than one transaction. 
Moreover, each block has to refer to at least two blocks issued in the past. The latter requirement is motivated by the leaderless architecture of our protocol, in which each node can issue blocks independently of others. In addition, we discuss a particular metric on blocks, called the Witness Weight, that allows nodes to reliably understand when a significant fraction of the network has seen a given block.

In Section~\ref{sec: blocks}, we formally define a block. Section~\ref{sec: Tangle} discusses the Tangle, a DAG formed by blocks and their references. The local version of the Tangle seen by a specific node is introduced in Section~\ref{sec: local tangles}. Using the weight function for nodes introduced in Section~\ref{sec: sybil protection}, we formally define the Witness Weight of a given block in the local Tangle in Section~\ref{sec: witness weight} and show how to use this metric as a confirmation rule for blocks in Section~\ref{sec: WW confirmation rule}. The analysis of the growth of the Witness Weight is provided in Section~\ref{sec: heuristics WW}.

\subsection{Blocks}\label{sec: blocks}

The protocol's goal is to replicate certain content between the nodes in the network reliably. For example, this content could be the atomic updates of balances of fund owners.

This content is wrapped into an object that we call \textit{block}. A node that would like to initiate the addition of certain content to the Tangle across the network assembles such a block, which includes the content, $k$ references to previous blocks and the signature of the node (see Figure~\ref{fig:blockLayout}). We call the process of assembling and initial broadcasting the \textit{issuance} of a block. 
Each node that receives a new block forwards it to its neighbours.

\begin{figure}[t]
    \centering
    \definecolor{superlightgray}{RGB}{224, 224, 224}
\definecolor{mygray}{RGB}{240, 240, 240}
\definecolor{mygreen}{RGB}{0, 204, 153}
\definecolor{myred}{RGB}{255, 194,150}
\definecolor{lightsteelblue}{RGB}{176,196,222}
\definecolor{navajo}{RGB}{255,222,173}
\begin{tikzpicture}
\def\widthBlock{1.5}
\def\heightBlock{0.8}
\def\shiftdown{3*\heightBlock}

\tikzstyle{rounded_block}=[draw, rectangle, thick, minimum height=\heightBlock cm, minimum width = \widthBlock cm, text centered, rounded corners, draw=darkgray, font = \scriptsize]
\node [rounded_block, fill=white, minimum height=10*\heightBlock cm, minimum width = 3.6*\widthBlock cm] at (0,-\shiftdown) {} ;
\node [rounded_block, draw = lightgray, fill=superlightgray, minimum height=6.9*\heightBlock cm, minimum width = 3.2*\widthBlock cm] at (0,-0.6*\heightBlock-0.5*\shiftdown+0.3) {} ;
\node[font = \small] at (0,1.2*\heightBlock) {\textbf{Transaction}};
\node [rounded_block, dashed, fill=mygray, minimum height=2.8*\heightBlock cm, minimum width = 2.8*\widthBlock cm] at (0, -0.6*\heightBlock) {} ; 
\node [rounded_block, drop shadow, fill=mygreen] at (1.4*\heightBlock, 0) {Input $n$} ; 
\node [rounded_block, drop shadow, fill=mygreen] at (-1.4*\heightBlock, 0) {Input $1$} ; 
\node[font = \large] at (0,0) {...};
\node [rounded_block, drop shadow, fill=lightgray] at (1.4*\heightBlock, -1.2*\heightBlock) {\begin{tabular}{l}
    Unlock \\
    block $n$
\end{tabular}} ;
\node [rounded_block, drop shadow, fill=lightgray] at (-1.4*\heightBlock, -1.2*\heightBlock) {\begin{tabular}{l}
    Unlock \\
    block $1$
\end{tabular}} ;
\node[font = \large] at (0,-1.2*\heightBlock) {...};
\node [rounded_block, dashed, fill=mygray, minimum height=2.8*\heightBlock cm, minimum width = 2.8*\widthBlock cm] at (0, -0.6*\heightBlock-\shiftdown) {} ; 
\node [rounded_block, drop shadow, fill=myred] at (1.4*\heightBlock, -\shiftdown) {Output $m$} ; 
\node [rounded_block, drop shadow, fill=myred] at (-1.4*\heightBlock, -\shiftdown) {Output $1$} ; 
\node[font = \large] at (0,-\shiftdown) {...};
\node [rounded_block, drop shadow, fill=lightgray] at (1.4*\heightBlock, -1.2*\heightBlock-\shiftdown) {\begin{tabular}{l}
    Unlock \\
    block $m$
\end{tabular}} ;
\node [rounded_block, drop shadow, fill=lightgray] at (-1.4*\heightBlock, -1.2*\heightBlock-\shiftdown) {\begin{tabular}{l}
    Unlock \\
    block $1$
\end{tabular}} ;
\node[font = \large] at (0,-1.2*\heightBlock-\shiftdown) {...};

\node [rounded_block, drop shadow, fill=lightsteelblue] at (-1.4*\heightBlock, -2*\shiftdown) {Reference $1$} ; 
\node [rounded_block, drop shadow, fill=lightsteelblue] at (1.4*\heightBlock, -2*\shiftdown) {Reference $k$} ; 
\node[font = \large] at (0,-2*\shiftdown) {...};
\node [rounded_block, draw = lightgray, minimum width = 3.2*\widthBlock cm, fill=navajo] at (0, -1.3*\heightBlock-2*\shiftdown) {Signature of issuing node};

\end{tikzpicture}
    \caption{Simplified block layout with a transaction as content. The fund owner provides the node with the transaction. The node  wraps the transaction into a block and signs the block.
}
    \label{fig:blockLayout}
\end{figure}
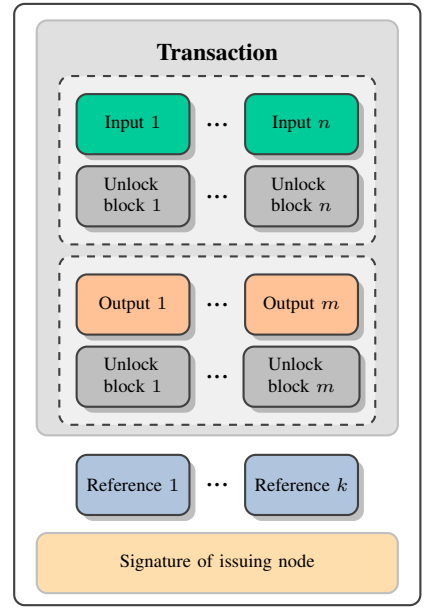

\begin{definition}[Block]\label{def: block}
A \textit{reference $\reference(x)$ of block $x$} is a pair $(r_y, v)$, where $r_y=\hash(y)$ is a  unique value that corresponds to a previously issued block $y$ and  $v$ is the value of a label. We define a block $x$ as an object with content
$$
x=(\{\reference_1(x),\ldots,\reference_k(x)\}, \tx{x}, \mathrm{nodeID(x)} ),
$$ where the $\reference_i(x)$'s are references, $\tx{x}$ is a transaction and $\mathrm{nodeID(x)}$ identifies the issuing node. 
\end{definition}
\begin{remark}\label{rem: hash function}
A collision-resistant hash function is used to map data of arbitrary size to a fixed-size binary sequence, i.e.  $\hash: \{0,1\}^*\to \{0,1\}^h$. Moreover, it is required that it is practicably impossible to find for a given sequence $x$ another sequence $x'$ such that $\hash(x) =\hash(x')$.
Throughout the remainder of the paper, we assume that a particular hash function is fixed and used by all participants.
\end{remark}
\begin{remark}
The label $v$ indicates the reference or voting type, as we will see later in Section \ref{sec: voting and voting DAG}.
\end{remark}

The issuing node obtains the content through a service-client relationship with the issuer of the content, which can be facilitated through an application programming interface (API) call. Alternatively, the node itself may also be the issuer of the content. An essential application for the content is the transfer of funds, i.e.  the consumption and creation of outputs. We call this type of content a \textit{transaction}. In this paper, for the sake of presentation, we will assume that each block contains exactly one transaction in its payload. However, in general, blocks are not limited to this use case.

As blocks will also be used to propagate votes, keeping track of the issuing nodes is crucial.
\begin{definition}[Issuer of a block]
For a block $x$, the node that issued $x$ is denoted as $\iss(x)$, where $\iss(x)\in\nodeset$.
\end{definition}

\subsection{The Tangle}\label{sec: Tangle}

The Tangle is a data structure built in accordance with the following rule as stated in the original paper~\cite{popov2015} of the Tangle: ``\textit{In order to issue a [block]~\footnote{The term used in the original whitepaper is \textit{transaction}, however, in this work we distinguish between the block and its contained  transaction.}, a node
chooses two other  [blocks] to approve''.}

More generally,  we modify this by allowing a block to reference up to  $k$ existing blocks.
The data structure takes the form of a DAG, where the blocks correspond to the vertices, and the references form the edges.

Let us define this data structure more formally. We denote the set of blocks by $\blockset$. 
There is a special block, called the \textit{genesis} and denoted by $\genesis$. This block does not contain any references.  Any other block has to directly refer to at least two (not necessarily distinct) blocks. Thereby, the reference relationship can be encoded into a DAG.

\begin{definition}[The Tangle]\label{def:tangle}
The Tangle $\TangleDAG$ is a DAG whose vertex set is  the set of blocks $\blockset$. There is a directed edge from  $y$ and $x$ in $\TangleDAG$ if and only if  $y$  directly refers to $x$.
\end{definition}

Using the notation from Section~\ref{sec: graph-theoretical notions}, we write $\le_{\blockset}$ to denote the partial order on the set of blocks induced by $\TangleDAG$. For a block $x\in \blockset$, the Tangle past and future cone of $x$ are denoted as $\cone{p}{\blockset}{x}$ and $\cone{f}{\blockset}{x}$, respectively. The parents and children of $x$ are written as $\parent{\blockset}{x}$ and $\child{\blockset}{x}$. If $x<_{\blockset} y$ we say that block $x$ \textit{approves} or \textit{references} block $y$. Specifically, if $x\in \child{\blockset}{y}$, then $x$ \textit{directly references} $y$; if $x\not \in \child{\blockset}{y}$ and $x<_{\blockset}$, then $x$ \textit{indirectly references} $y$. A leaf in the Tangle DAG is said to be a \textit{tip}.

\begin{example}
We refer to Figure~\ref{fig:conesInTangle} for an illustration of the Tangle and the Tangle future and past cones of block $x$.
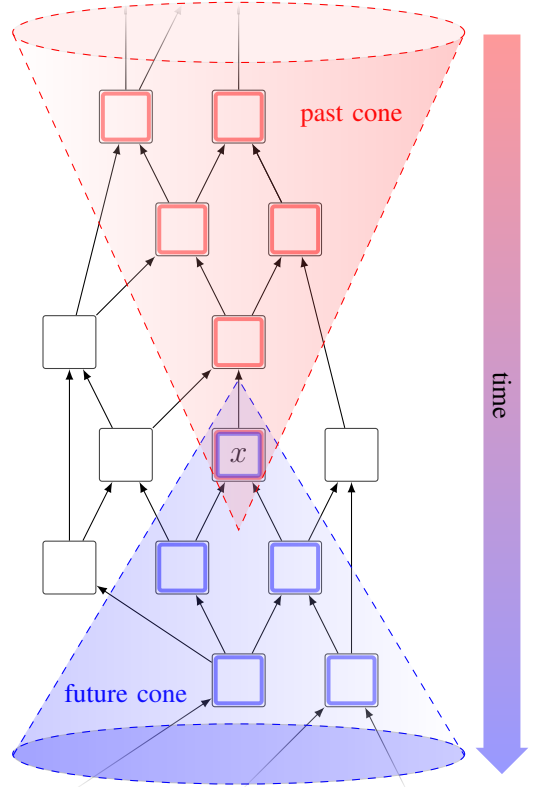
\begin{figure}[t]
    \centering
\begin{tikzpicture}
\def\xCoordinate{0.0}
\def\yCoordinate{0.0}
\def\yAdd{1.5}
\def\xAdd{0.75}
\def\innerSepar{-1.5}
\def\lineWidth{1.5}
\def\roundedCorners{1}
\def\opacityInternal{0.5}
\def\minHeight{20}
\def\minWidth{20}
\tikzstyle{block}=[draw, rectangle, minimum height=\minHeight pt, minimum width = \minWidth pt, text centered, rounded corners=\roundedCorners pt, draw=darkgray, font=\large]
\def\arrowStyle{-latex}
\node[block] (central_block) at (\xCoordinate,\yCoordinate) {$x$};
 \node [draw=red,opacity=\opacityInternal,rounded corners= \roundedCorners pt,line width = \lineWidth pt,
        inner sep=\innerSepar pt,fit=(central_block)] (red_central_block) {};
 \node [draw=blue, opacity=\opacityInternal, rounded corners =  \roundedCorners pt,line width = \lineWidth pt,
        inner sep=\innerSepar pt,fit=(red_central_block)] (blue_central_block) {};
        
\node[block] (blue_below_1_1) at (\xCoordinate-\xAdd,\yCoordinate-\yAdd) {};
 \node [draw=blue,opacity=\opacityInternal,rounded corners= \roundedCorners pt,line width = \lineWidth pt,
        inner sep=\innerSepar pt,fit=(blue_below_1_1)] (blue_blue_below_1_1) {};
 \draw[\arrowStyle] (blue_below_1_1) -- (central_block);  

\node[block] (blue_below_1_2) at (\xCoordinate+\xAdd,\yCoordinate-\yAdd) {};
 \node [draw=blue,opacity=\opacityInternal,rounded corners= \roundedCorners pt,line width = \lineWidth pt,
        inner sep=\innerSepar pt,fit=(blue_below_1_2)] (blue_blue_below_1_2) {};
 \draw[\arrowStyle] (blue_below_1_2) -- (central_block);

 
 \node[block] (blue_below_2_1) at (\xCoordinate,\yCoordinate-2*\yAdd) {};
 \node [draw=blue,opacity=\opacityInternal,rounded corners= \roundedCorners pt,line width = \lineWidth pt,
        inner sep=\innerSepar pt,fit=(blue_below_2_1)] (blue_blue_below_2_1) {};
 \draw[\arrowStyle] (blue_below_2_1) -- (blue_below_1_2); 
 \draw[\arrowStyle] (blue_below_2_1) -- (blue_below_1_1); 
 
 \node[block] (blue_below_2_2) at (\xCoordinate+2*\xAdd,\yCoordinate-2*\yAdd) {};
 \node [draw=blue,opacity=\opacityInternal,rounded corners= \roundedCorners pt,line width = \lineWidth pt,
        inner sep=\innerSepar pt,fit=(blue_below_2_2)] (blue_blue_below_2_2) {};
 \draw[\arrowStyle] (blue_below_2_2) -- (blue_below_1_2); 
 \draw[\arrowStyle, path fading = south]  (\xCoordinate,\yCoordinate-3*\yAdd) -- (blue_below_2_2) ; 
 \draw[\arrowStyle, path fading = south]  (\xCoordinate-3*\xAdd,\yCoordinate-3*\yAdd) -- (blue_below_2_1) ; 
 
 \draw[\arrowStyle, path fading = south]  (\xCoordinate+3*\xAdd,\yCoordinate-3*\yAdd) -- (blue_below_2_2) ;

 \node[block] (red_above_1) at (\xCoordinate,\yCoordinate+\yAdd) {};
 \node [draw=red,opacity=\opacityInternal,rounded corners= \roundedCorners pt,line width = \lineWidth pt,
        inner sep=\innerSepar pt,fit=(red_above_1)] (red_red_above_1) {};
 \draw[\arrowStyle] (central_block) -- (red_above_1);

  \node[block] (red_above_2_1) at (\xCoordinate+\xAdd,\yCoordinate+2*\yAdd) {};
  \draw[\arrowStyle] (red_above_1) -- (red_above_2_1);
   \node [draw=red,opacity=\opacityInternal,rounded corners= \roundedCorners pt,line width = \lineWidth pt,
        inner sep=\innerSepar pt,fit=(red_above_2_1)] (red_red_above_2_1) {};
  
  \node[block] (red_above_2_2) at (\xCoordinate-\xAdd,\yCoordinate+2*\yAdd) {};
  \node [draw=red,opacity=\opacityInternal,rounded corners= \roundedCorners pt,line width = \lineWidth pt,
        inner sep=\innerSepar pt,fit=(red_above_2_2)] (red_red_above_2_2) {};
  \draw[\arrowStyle] (red_above_1) -- (red_above_2_2);

  \node[block] (red_above_3) at (\xCoordinate,\yCoordinate+3*\yAdd) {};
   \node [draw=red,opacity=\opacityInternal,rounded corners= \roundedCorners pt,line width = \lineWidth pt,
        inner sep=\innerSepar pt,fit=(red_above_3)] (red_red_above_3) {};
  \draw[\arrowStyle] (red_above_2_2) -- (red_above_3);
  \draw[\arrowStyle] (red_above_2_1) -- (red_above_3);

 \node[block] (red_above_3_0) at (\xCoordinate-2*\xAdd,\yCoordinate+3*\yAdd) {};
   \node [draw=red,opacity=\opacityInternal,rounded corners= \roundedCorners pt,line width = \lineWidth pt,
        inner sep=\innerSepar pt,fit=(red_above_3_0)] (red_red_above_3_0) {};
  \draw[\arrowStyle] (red_above_2_2) -- (red_above_3_0);
  \draw[\arrowStyle] (red_above_2_1) -- (red_above_3);
 \draw[\arrowStyle,path fading=north] (red_above_3) -- (\xCoordinate,\yCoordinate+4*\yAdd);
  \draw[\arrowStyle,path fading=north] (red_above_3_0) -- (\xCoordinate-\xAdd,\yCoordinate+4*\yAdd);
  \draw[\arrowStyle,path fading=north] (red_above_3_0) -- (\xCoordinate-2*\xAdd,\yCoordinate+4*\yAdd);

\node[block] (central_grey_left) at (\xCoordinate-2*\xAdd,\yCoordinate) {};
 \draw[\arrowStyle] (central_grey_left) -- (red_above_1);  
 
\node[block] (central_grey_right) at (\xCoordinate+2*\xAdd,\yCoordinate) {}; 

 \draw[\arrowStyle] (central_grey_right) -- (red_above_2_1);
 \draw[\arrowStyle] (blue_below_1_2) -- (central_grey_right);

 \node[block] (above_grey_1) at (\xCoordinate-3*\xAdd,\yCoordinate+\yAdd) {};
 \draw[\arrowStyle] (above_grey_1) -- (red_above_2_2);
\draw[\arrowStyle] (central_grey_left) -- (above_grey_1);
\draw[\arrowStyle] (blue_below_1_1) -- (central_grey_left);
\draw[\arrowStyle] (above_grey_1) -- (red_above_3_0);

 \node[block] (below_grey_1) at (\xCoordinate-3*\xAdd,\yCoordinate-\yAdd) {};
\draw[\arrowStyle] (below_grey_1) -- (above_grey_1);
\draw[\arrowStyle] (below_grey_1) -- (central_grey_left);
\draw[\arrowStyle] (blue_below_2_1) -- (below_grey_1);
\draw[\arrowStyle] (blue_below_2_2) -- (central_grey_right);

  \def\xCoordinate{0.0}
  \def\yCoordinate{\yAdd * 2/3}
  \def\x{3.0}
  \def\y{5.0}
  \def\R{\x+0.004}
  \def\yc{\y+0.02}
  \def\e{0.4}
  \def\opacityCones{0.2}

  \node at (\xCoordinate-\x/2,\yCoordinate-\y*5/6)  {\color{blue} future  cone};
  \begin{scope}[rotate=0]
    \shade[right color=white,left color=blue,opacity=\opacityCones]
      (\xCoordinate-\x,\yCoordinate-\yc) arc (180:360:{\R} and \e) -- (\xCoordinate+\x,\yCoordinate-\yc) -- (\xCoordinate,\yCoordinate) -- cycle;
    \draw[fill=blue,opacity=\opacityCones]
      (\xCoordinate,\yCoordinate-\yc) circle ({\R} and \e);
    \draw[blue, dashed]
      (\xCoordinate-\x,\yCoordinate-\y) -- (\xCoordinate,\yCoordinate) -- (\xCoordinate+\x,\yCoordinate-\y);
    \draw[blue, dashed]
      (\xCoordinate,\yCoordinate-\yc) circle ({\R} and \e);
  \end{scope}

 \def\yCoordinate{-\yAdd * 2/3}
  \def\x{3.0}
  \def\y{6.6}
  \def\R{\x+0.004}
  \def\yc{\y+0.02}
   \node at (\xCoordinate+\x*1/2,\yCoordinate+\y*5/6)  {\color{red} past cone};
  \begin{scope}[rotate=0]
    \shade[left color=white, right color=red,opacity=\opacityCones]
    (\xCoordinate-\x,\yCoordinate+\yc) arc (180:360:{\R} and \e) -- (\xCoordinate+\x,\yCoordinate+\yc) -- (\xCoordinate,\yCoordinate) -- cycle;
     \draw[fill=red,opacity=0.3*\opacityCones]
      (\xCoordinate+0,\yCoordinate+\yc) circle ({\R} and \e);
    \draw[red, dashed]
      (\xCoordinate-\x,\yCoordinate+\y) -- (\xCoordinate,\yCoordinate) -- (\xCoordinate+\x,\yCoordinate+\y);
    \draw[red, dashed]
      (\xCoordinate,\yCoordinate+\yc) circle ({\R} and \e);
  \end{scope}
  
\node [single arrow,top  color=red, bottom color=blue,
single arrow head extend=3pt,transform shape, opacity = 2*\opacityCones, minimum height=280pt, text opacity=1, rotate = 270, anchor=west] 
at (\xCoordinate+\x*7/6,\yCoordinate+\y){time
};
 
\end{tikzpicture}
\caption{Future and past cones of a block $x$ in the Tangle}
    \label{fig:conesInTangle}
\end{figure}
\end{example}

\subsection{Local Tangles}\label{sec: local tangles}

Due to the distributed nature of the network, nodes can receive blocks at differing times or even out of order. The time at which a node first receives a block is called \textit{arrival time}.

Blocks can also be lost during their broadcast. While, generally, this could be problematic, the Tangle DAG allows for an elegant solution to remedy the loss by a process called \textit{solidification}. If a node receives a block for which the parents are unknown, it requests the missing block from its peers. Upon receipt of the missing parent block, the past cone is now complete (unless their parents are missing - in which case the node has to repeat this procedure recursively). Once a block's past cone is completed, the node flags the block  as \textit{solid}. The time of solidification of a block $x$ in node $i$ is denoted by  $\tau_{s,i}(x)$. We only consider blocks included in the Tangle after they are flagged solid.

As a consequence of the above, we can argue that there is no such thing as one Tangle in the network, as every node may have a different perception of it. Hence, at time $t$ a node $i$ is aware only of the block $x$ that satisfy $\tau_{s,i}(x)\leq t$.  
We denote by  $\blockset_{i,t}$ and $\LocalTangleDAG{i,t}$ the local perception of the block set and the Tangle DAG perceived from node $i$ at (local) time $t$.  Past and future cones then are also given in their \emph{local} forms $\cone{f}{\blockset_{i,t}}{x}$ and $\cone{p}{\blockset_{i,t}}{x}$.  
We omit subscripts and simply write $\TangleDAG =\LocalTangleDAG{i,t}$ if the dependence on $i$ and $t$ is clear from the context. 

\subsection{Witness Weight and Weighted Local Tangles}\label{sec: witness weight}

In the original Tangle whitepaper~\cite{popov2015} the cumulative weight of a block plays a crucial  role in the consensus finding. This cumulative weight is  the number of blocks referencing a given block. In case of a conflict, nodes follow the part of the Tangle that contains the largest cumulative weight.  

We adopt this fundamental idea to the setting where each node carries some weight. In this way, the nodes' weight replaces the PoW in the block creation as a Sybil protection mechanism.
The nodes' signature in each block links the issuing node to the block (see Section~\ref{sec: blocks}). Thus, a node can be associated with the set of blocks on the Tangle issued by that node, and the node's weight can be mapped to the blocks. 

\begin{definition}[Block Supporter and Witness Weight]
Let  $x\in\blockset_{i,t}$ be a block. Denote by $\supporter_{\blockset_{i,t}}(x)$ the set of nodes that issues a block in the future cone of $x$: 
$$
\supporter_{\blockset_{i,t}}(x)=\left\{j\in\nodeset: \ \exists y \in \cone{f}{\blockset_{i,t}}{x}, \ j=\iss(y)\right\}.
$$
We call nodes from $\supporter_{\blockset_{i,t}}(x)$ \textit{supporters} of $x$.
We define the function $\WW_{i,t}: \blockset_{i,t} \to [0,1]$ which is called the \emph{Witness Weight (WW)} of a block seen by node $i$ at time $t$ as follows
\begin{align}\label{eq: WW block}
\WW_{{i,t}} (x) :=& \sum_{j \in \supporter_{\blockset_{i,t}}(x)} \weight(j).
\end{align}
\end{definition}
As the total weight is normalised to $1$ the WW describes the percentage of weight approving a given block. Whenever it is clear from the context, we omit indices $i$ and $t$.

\begin{example}
In Figure~\ref{fig: approval weight and supporters}, we give an example of the set of nodes approving given blocks $x$, $y$ and $z$. We use unique colours in the bottom of blocks to represent signatures of different issuing nodes. One can readily check that $\supporter_{\blockset}(x)$ consists of nodes corresponding to brown, cyan and gray colours.
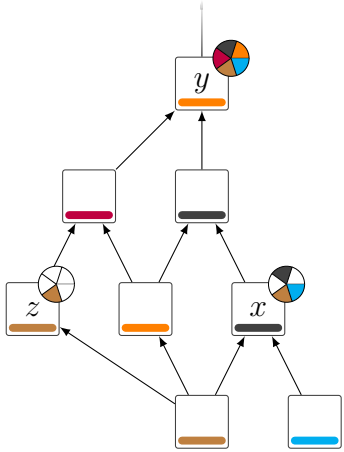
\begin{figure}[h]   
\centering
\begin{tikzpicture}
\def\xCoordinate{0.0}
\def\yCoordinate{0.0}
\def\yAdd{1.5}
\def\xAdd{0.75}
\def\innerSepar{-1.5}
\def\lineWidth{1.5}
\def\roundedCorners{1}
\def\opacityInternal{0.5}
\def\minHeight{20}
\def\fracyAdd{1/6}
\def\lineWidthBelow{3}
\def\minWidth{20}
\def\scaleFactorSupp{0.12}
\def\minWidthCM{\minWidth*0.0352778*3/4}
\tikzstyle{block}=[draw, rectangle, minimum height=\minHeight pt, minimum width = \minWidth pt, text centered, rounded corners=\roundedCorners pt, draw=darkgray, font=\large]
\tikzstyle{block_colored}=[draw, rectangle, minimum height= 2 pt, minimum width = \minWidth * 5 / 6 pt, rounded corners=\roundedCorners  pt, draw=darkgray]
\def\arrowStyle{-latex}
\node[block] (central_block) at (\xCoordinate,\yCoordinate) {};

 \draw[line width = \lineWidthBelow, line cap = round, darkgray] (\xCoordinate - \minWidthCM/2 ,\yCoordinate- \fracyAdd*\yAdd)--(\xCoordinate + \minWidthCM/2 ,\yCoordinate - \fracyAdd*\yAdd);
        
\node[block] (blue_below_1_1) at (\xCoordinate-\xAdd,\yCoordinate-\yAdd) {};
 \draw[\arrowStyle] (blue_below_1_1) -- (central_block);  
 \draw[line width = \lineWidthBelow, line cap = round, orange] (\xCoordinate-\xAdd - \minWidthCM/2 ,\yCoordinate-\yAdd - \fracyAdd*\yAdd)--(\xCoordinate-\xAdd + \minWidthCM/2 ,\yCoordinate-\yAdd - \fracyAdd*\yAdd);

\node[block] (blue_below_1_2) at (\xCoordinate+\xAdd,\yCoordinate-\yAdd) {$x$};
\begin{scope}[transform canvas={scale=\scaleFactorSupp},shift={(12.5*\xAdd,-6.5*\yAdd)}]
\foreach \i/\j/\k in {0/72/white,72/144/darkgray,144/216/white, 216/288/brown, 288/360/cyan}
{
\draw[fill=\k] (0,0)  -- (\i:2) arc (\i:\j:2);
}
\end{scope}
\draw[line width = \lineWidthBelow, line cap = round, darkgray] (\xCoordinate +\xAdd- \minWidthCM/2 ,\yCoordinate-\yAdd- \fracyAdd*\yAdd)--(\xCoordinate +\xAdd+ \minWidthCM/2 ,\yCoordinate -\yAdd- \fracyAdd*\yAdd);

 \draw[\arrowStyle] (blue_below_1_2) -- (central_block);

 
 \node[block] (blue_below_2_1) at (\xCoordinate,\yCoordinate-2*\yAdd) {};
 \draw[line width = \lineWidthBelow, line cap = round, brown] (\xCoordinate- \minWidthCM/2 ,\yCoordinate-2*\yAdd- \fracyAdd*\yAdd)--(\xCoordinate + \minWidthCM/2 ,\yCoordinate -2*\yAdd- \fracyAdd*\yAdd);
 
 \draw[\arrowStyle] (blue_below_2_1) -- (blue_below_1_2); 
 \draw[\arrowStyle] (blue_below_2_1) -- (blue_below_1_1); 
 
 \node[block] (blue_below_2_2) at (\xCoordinate+2*\xAdd,\yCoordinate-2*\yAdd) {};
  \draw[line width=\lineWidthBelow, line cap = round, cyan] (\xCoordinate + 2*\xAdd- \minWidthCM/2 ,\yCoordinate-2*\yAdd- \fracyAdd*\yAdd)--(\xCoordinate + 2*\xAdd+ \minWidthCM/2 ,\yCoordinate -2*\yAdd- \fracyAdd*\yAdd);
 
 \draw[\arrowStyle] (blue_below_2_2) -- (blue_below_1_2);

 

 \node[block] (red_above_1) at (\xCoordinate,\yCoordinate+\yAdd) {$y$};
 \draw[\arrowStyle] (central_block) -- (red_above_1);
  \draw[line width = \lineWidthBelow, line cap = round, orange] (\xCoordinate - \minWidthCM/2 ,\yCoordinate+\yAdd- \fracyAdd*\yAdd)--(\xCoordinate + \minWidthCM/2 ,\yCoordinate +\yAdd- \fracyAdd*\yAdd);
  \begin{scope}[transform canvas={scale=\scaleFactorSupp},shift={(4.3*\xAdd,10.2*\yAdd)}]
\foreach \i/\j/\k in {0/72/orange,72/144/darkgray,144/216/purple, 216/288/brown, 288/360/cyan}
{
\draw[fill=\k] (0,0)  -- (\i:2) arc (\i:\j:2);
}
\end{scope}

   \draw[\arrowStyle, path fading = north]  (red_above_1) -- (\xCoordinate,\yCoordinate+7/4*\yAdd); 

\node[block] (central_grey_left) at (\xCoordinate-2*\xAdd,\yCoordinate) {};
 \draw[\arrowStyle] (central_grey_left) -- (red_above_1);  \draw[line width = \lineWidthBelow, line cap = round, purple] (\xCoordinate - 2*\xAdd- \minWidthCM/2 ,\yCoordinate- \fracyAdd*\yAdd)--(\xCoordinate - 2*\xAdd + \minWidthCM/2 ,\yCoordinate - \fracyAdd*\yAdd);

\draw[\arrowStyle] (blue_below_1_1) -- (central_grey_left);

 \node[block] (below_grey_1) at (\xCoordinate-3*\xAdd,\yCoordinate-\yAdd) {$z$};
 \draw[line width = \lineWidthBelow, line cap = round, brown] (\xCoordinate - 3*\xAdd - \minWidthCM/2 ,\yCoordinate-\yAdd- \fracyAdd*\yAdd)--(\xCoordinate -3*\xAdd + \minWidthCM/2 ,\yCoordinate -\yAdd- \fracyAdd*\yAdd);
  
\draw[\arrowStyle] (below_grey_1) -- (central_grey_left);
\draw[\arrowStyle] (blue_below_2_1) -- (below_grey_1);

   \begin{scope}[transform canvas={scale=\scaleFactorSupp},shift={(-21.5*\xAdd,-6.5*\yAdd)}]
\foreach \i/\j/\k in {0/72/white,72/144/white,144/216/white, 216/288/brown, 288/360/white}
{
\draw[fill=\k] (0,0)  -- (\i:2) arc (\i:\j:2);
}
\end{scope}

\end{tikzpicture}
\label{fig: approval weight and supporters}
 \caption{A Tangle DAG, where the issuing node of a block can be identified with a unique colour shown in the bottom of the block. The colors of the supporters of blocks $x,y,z$ are depicted in the top-right corners. 
 }
\label{fig: approval weight and supporters}
\end{figure}
\end{example}

We proceed with two trivial statements saying that the WWs of blocks are monotonically increasing toward the genesis and the WW of a block can only grow over time. 

\begin{lemma}[Monotonicity of the WW]\label{lem: monotonicity of WW}
For any two blocks $x,y\in\blockset$ such that $x\le_{\blockset} y$, it holds that $\supporter_{\blockset}(x)\subseteq \supporter_{\blockset}(y)$ and, hence, $\WW(x)\le \WW(y)$.
\end{lemma}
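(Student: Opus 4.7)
The proof reduces to showing the set inclusion $\supporter_\blockset(x)\subseteq\supporter_\blockset(y)$; once this is established, the weight inequality follows immediately since $\weight:\nodeset\to[0,1]$ is non-negative, so summing $\weight(j)$ over a larger supporter set can only increase the total.

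My plan is to unwind the definitions of $\supporter_\blockset(\cdot)$ and $\cone{f}{\blockset}{\cdot}$ and invoke transitivity of the partial order $\le_\blockset$ induced by $\TangleDAG$. Fix an arbitrary $j\in\supporter_\blockset(x)$. By Definition of Block Supporter, there exists $z\in\cone{f}{\blockset}{x}$ with $\iss(z)=j$, which by Definition~\ref{def: future and past cones} of the future cone means $z\le_\blockset x$. Combined with the hypothesis $x\le_\blockset y$, transitivity of $\le_\blockset$ (which follows directly from concatenating the two directed paths guaranteed by the definition of the partial order induced by a DAG) yields $z\le_\blockset y$. Hence $z\in\cone{f}{\blockset}{y}$, and therefore $j=\iss(z)\in\supporter_\blockset(y)$. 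Since $j$ was arbitrary, $\supporter_\blockset(x)\subseteq\supporter_\blockset(y)$.

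Finally, using~\eqref{eq: WW block} and the non-negativity of $\weight$,
\begin{equation*}
\WW(x)=\sum_{j\in\supporter_\blockset(x)}\weight(j)\le\sum_{j\in\supporter_\blockset(y)}\weight(j)=\WW(y).
\end{equation*}
There is no real obstacle here; the only care required is to ensure the direction of edges in $\TangleDAG$ is applied consistently, so that ``$x$ in the past cone of some $z$'' and ``$z$ in the future cone of $x$'' are interchanged correctly. Once the definitions are matched up, the argument is a one-line application of transitivity.
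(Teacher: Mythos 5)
Your proof is correct and is exactly the intended argument: the paper states this lemma without proof as a trivial consequence of the definitions, and your unwinding of the supporter/future-cone definitions plus transitivity of $\le_{\blockset}$ and non-negativity of $\weight$ is the canonical justification.
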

\begin{lemma}[Growth of the WW]\label{lem: growth of WW}
For any block $x\in\blockset$, node $i\in\nodeset$ and time instants $t_1$ and $t_2$ such that $t_1<t_2$, it holds that $\supporter_{\blockset_{i,t_1}}(x)\subseteq \supporter_{\blockset_{i,t_2}}(x)$ and, hence, $\WW_{{i,t_1}} (x)\le \WW_{{i,t_2}} (x)$.
\end{lemma}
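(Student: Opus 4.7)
The plan is to reduce the statement to the monotone growth of the local block set $\blockset_{i,t}$ as a function of $t$. The solidification procedure described in Section~\ref{sec: local tangles} only ever adds a block to $\blockset_{i,t}$ (once $\tau_{s,i}(x) \le t$, block $x$ belongs to every later local snapshot), so I would first record the inclusion $\blockset_{i,t_1} \subseteq \blockset_{i,t_2}$ for $t_1 < t_2$. Since references are part of a block's fixed content (Definition~\ref{def: block}) and are immutable after issuance, the edge set of $\LocalTangleDAG{i,t_1}$ is inherited by $\LocalTangleDAG{i,t_2}$: any directed path witnessing $x \le_{\blockset_{i,t_1}} y$ persists as a directed path witnessing $x \le_{\blockset_{i,t_2}} y$.

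From this monotonicity of the local DAG I would conclude the containment of the local future cones, i.e.\ $\cone{f}{\blockset_{i,t_1}}{x} \subseteq \cone{f}{\blockset_{i,t_2}}{x}$, directly from Definition~\ref{def: future and past cones}. Applying the definition of supporters, any node $j$ with $j = \iss(y)$ for some $y \in \cone{f}{\blockset_{i,t_1}}{x}$ still witnesses the same $y$ (with the same issuer, since $\iss(\cdot)$ depends only on the block's signature) inside $\cone{f}{\blockset_{i,t_2}}{x}$, giving $\supporter_{\blockset_{i,t_1}}(x) \subseteq \supporter_{\blockset_{i,t_2}}(x)$.

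The weight inequality then follows immediately from~\eqref{eq: WW block}, because $\weight(\cdot) \ge 0$ and we are summing a nonnegative function over a larger index set. The whole argument is essentially mechanical once the monotonicity of $\blockset_{i,t}$ in $t$ is articulated; the only subtlety I foresee is making explicit that solidification is append-only, i.e.\ that the protocol never retracts a previously solid block from node $i$'s local perception. This is a modelling assumption implicit in Section~\ref{sec: local tangles}, and I would state it upfront to make the proof clean and to mirror the structure of Lemma~\ref{lem: monotonicity of WW}, which it otherwise closely parallels.
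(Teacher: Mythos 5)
Your argument is correct and is exactly the reasoning the paper has in mind: the paper states this lemma (together with Lemma~\ref{lem: monotonicity of WW}) as a ``trivial statement'' and omits the proof entirely, so your write-up simply makes explicit the append-only nature of $\blockset_{i,t}$ under solidification, the resulting monotonicity of the local future cone, and the nonnegativity of $\weight(\cdot)$. No gaps; your observation that the append-only property of the local perception is an implicit modelling assumption worth stating is a fair and useful point.
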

A more delicate analysis of the growth of the WW under certain assumptions is provided in Section~\ref{sec: heuristics WW}.

\subsection{Confirmation Rule for Blocks}\label{sec: WW confirmation rule}

The block stream is controlled by the writing access control, see Section~\ref{sec: rate control}. A priori, this control alone may not be sufficient to guarantee that all nodes see all blocks in the network.
However, to guarantee the safety of the system, nodes must have consensus on which blocks should  permanently be accepted in the data set $\blockset$, otherwise, inconsistencies between the nodes could arise.
If such a consensus is achieved, we consider a block \textit{confirmed}. 
Furthermore, to maintain consistency in the data structure $\TangleDAG$, a block $x$ can only be confirmed if all blocks in $\cone{p}{\blockset}{x}$ are confirmed.

Tools that provide information about the confirmation status of blocks, with specific safety and liveness considerations, are generally referred to as  \textit{confirmation rule}. 
We design such a tool based on the concept of  WWs of the blocks. 
The WW allows the nodes and users to create their subjective confirmation criterion. The larger the WW of a block, the higher the probability that the block will be in the ledger forever. This idea is similar to the  ``depth'' of a transaction in a blockchain. Therefore, the actual confirmation criterion may depend on the protocol environment and the underlying use case. 

\begin{definition} [Confirmed block] \label{def: confirmed block}
Let $\theta\in(0.5,1]$ be a fixed threshold. We say that a block $x\in\blockset$ is \emph{confirmed} for a node $i\in\nodeset$ at time $t$ if  $\WW_{{i,s}} (x)\ge \theta$, for some $s\leq  t$.
\end{definition}

Once a block is confirmed for a node, it remains confirmed forever. This irreversibility of the confirmation status places some strong requirements on the convergence of this status. More specifically, once a single node reaches the threshold for a given block, all nodes should reach this threshold eventually with a very high probability. 

In an honest scenario, this assumption can be easily satisfied since a high WW also represents that a large proportion of nodes have ``seen'' a given block and issued a block approving it.  If the default tip selection algorithm is suitably chosen and followed by sufficiently many nodes all nodes will attach blocks eventually to the future cone of that block with a very high probability (for more details, see 
Section~\ref{sec: heuristics WW}).
In Section~\ref{sec: security} we discuss the liveness and safety of the protocol in detail.

\subsection{Growth of Witness Weight}\label{sec: heuristics WW}

In this section, we model the block issuance and discuss the growth of the WW and its dependencies on the protocol environment. 

We consider the following assumption.

\begin{ass}[Issuing rate]\label{ass:issuingRate}
Each node $i\in\nodeset$ issues blocks at a Poisson rate $\lambda_i$ (per second). 
The rate~$\lambda_i$ is proportional to the corresponding weights $\weight(i)$ (see Definition~\ref{def:weight}), i.e.  $\lambda_i=\lambda\weight(i)$ for some constant $\lambda>0$. We assume that every node issues blocks independently of the other nodes. The rate of issuance for all nodes is then  
$$
\lambda=\sum_{i\in\nodeset}\lambda_i.
$$
\end{ass}

\begin{remark}
Under Assumption~\ref{ass:issuingRate} the times between two successive blocks from a node $i\in\nodeset$ are independent and exponentially distributed with parameter $\lambda_i$.
\end{remark}

To develop a heuristic for the WW we use the following approach. We assume that there is an ``omniscient observer'', that is instantly aware of all blocks issued by all nodes. The observer's perception of the state may differ from the perception of a given node, however, these differences have no substantial influence on the heuristic result. We refer to \cite{Kusmierz2019, PKS:18} where this method has already been proven to lead to good heuristics. 
This view is reflected in the notation by omitting the index $i$. For instance, $\blockset_t$ denotes the set of blocks perceived by this omniscient observer at time $t$ and $\WW_{t}(x)$ denotes the corresponding WW of a block $x$ at time $t$. 

Let $x$ be a block issued at time $t_0$ and denote by $E_i(\delta,x)$ the event that node~$i$ issues a block in the time interval $[t_0,t_0+ \delta]$ in the future cone of $x$.  We write $\mathbf{1}\{E_i(\delta,x)\}$ for the indicator function of this event; it is equal to $1$ if the event occurred and $0$ otherwise.

For $t= t_0+\delta$, the WW of block $x$ perceived by the omniscient observer satisfies 
\begin{equation}\label{eq:AWbound1}
\WW_{t}(x) = \sum_{i=1}^N \weight(i) \mathbf{1}\{E_i(\delta,x)\}.
\end{equation}
Node $i$ issues blocks with rate $ \lambda\weight(i)$ and, thus, we have that
\begin{equation}\label{eq:upperBoundRate}
    \P (E_i(\delta,x )) \leq 1- \exp(-  \delta\lambda\weight(i) ).
\end{equation}
Note that the equality does not necessarily hold since not all new incoming blocks have to witness block $x$.
Taking the expectation in Equation~\eqref{eq:AWbound1} and applying Inequality \eqref{eq:upperBoundRate} we obtain
\begin{equation}\label{eq:AWbound2}
    \E [\WW_{t}(x)] \leq \sum_{i=1}^N \weight(i) \left(1- \exp(- \delta\lambda\weight(i))\right).
\end{equation}

The formula given in~\eqref{eq:AWbound1} holds in the very general setting. For the analysis of the protocol, it is, however, important to consider a specific weight distribution. 
Probably the most appropriate modelings of weight distributions rely on universality phenomena. The most famous example of this universality phenomenon is the central limit theorem. While the central limit theorem is suited to describe statistics where values are of the same order of magnitude, it is not appropriate to model more heterogeneous situations where the values might differ in several orders of magnitude. These heterogeneous situations are frequently  described by a Zipf law and appear in  many fields; e.g. city populations, internet traffic data, the formation of P2P communities, company sizes, and science citations. We refer to~\cite{Li2002ZipfsLE} for a brief introduction and more references, and to~\cite{Adamic2002ZipfsLA,kondor2014dotherich,com_sel} for the appearance of Zipf's law on the internet, computer networks, and DLTs. 

We consider a situation with $N$ elements or nodes. Zipf's law  predicts that the (normalised) weight of the node of rank $r$  is given by \begin{equation}\label{eq:Zipf_law}
	\weight(r) = \frac{r^{-s}}{ \sum_{j = 1}^N j^{-s}},
\end{equation}
where $s\in [0,\infty)$ is the Zipf parameter. Since the weights $\weight(\cdot)$ in~\eqref{eq:Zipf_law} only depends on two parameters, $s$ and $N$, this provides a convenient model to investigate the performance of the protocol in a wide range of network situations. For instance, a homogeneous network with $N$ nodes having equal weight can be modeled by choosing $s = 0$. With increasing value of $s$ the network becomes increasingly centralised.

\begin{figure*}[t]
    \centering
\includegraphics[width=0.98\textwidth]{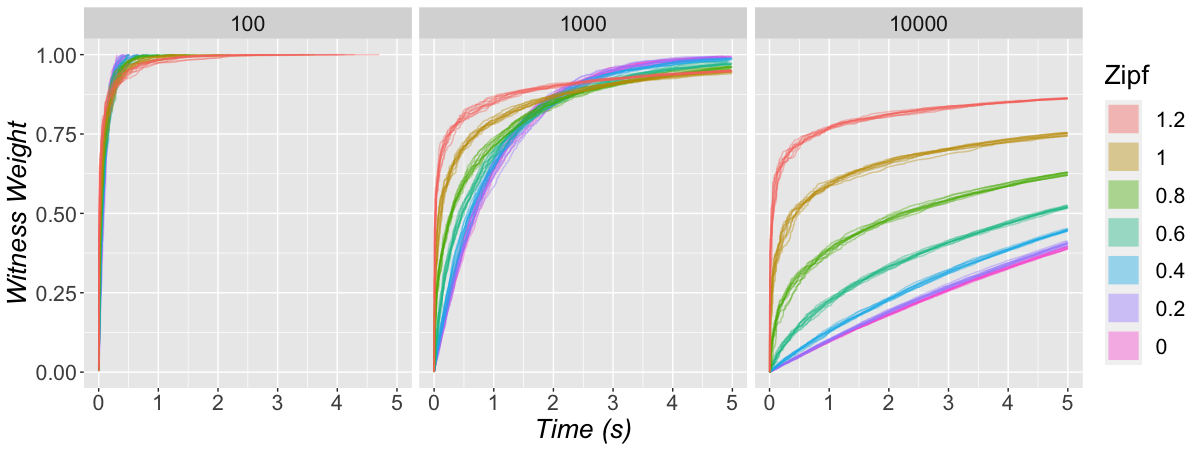}
    \caption{Growth of the issued WW in Example~\ref{ex:growthWW} with $1000$ blocks per second. We see the different behaviour for $100$ nodes (left), $1000$ nodes (middle) and $10.000$ nodes (right). The growth depends essentially on the chosen Zipf parameter $s$ (in colour) and the number of nodes.}
    \label{fig: growth of the approval weight}
\end{figure*}

\begin{example}\label{ex:growthWW}
We refer to Figure~\ref{fig: growth of the approval weight}. The growth of the WW depends on several factors, notably the issuing rate $\lambda$ and the distribution of the nodes' weight.  In the case of a Zipf distribution the weight depends on two parameters, the number of nodes $N$ and the Zipf parameter $s$. The upper bound  ~\eqref{eq:AWbound2} is a convex monotone function in $\delta$ and $\lambda$. The dependence on the parameters $N$ and $s$ is not so obvious. For this reason, we perform some Monte-Carlo simulations for $N\in\{100, 1000, 10000\}$ and $s\in\{0, 0.2, 0.4, 0.6, 0.8, 1, 1.2\}$, and $\lambda=1000$.\footnote{Typical values for the Zipf parameter found in  popular cryptocurrencies for the top 100 addresses are in the range between $s=0.7$ and $s=1.2$ \cite{com_sel}.} For a given $t_0$ we approximate the WW at time $t_0+t$ to be the sum of the weights of all nodes having issued a block during the time interval $[t_0, t_0+t]$. This provides a lower bound estimate for the WW of a block that is issued at $t_0$.
In Figure~\ref{fig: growth of the approval weight} every line corresponds to one realisation of the growth of the issued WW (for $t_0=0$).
\end{example}

\subsection{Estimates on Time to Confirmation}\label{sec:TTC}

\begin{figure}
    \centering
\long\def\ifnodedefined#1#2#3{%
    \@ifundefined{pgf@sh@ns@#1}{#3}{#2}%
}
\makeatother
\pgfmathsetseed{\number\pdfrandomseed}

\begin{tikzpicture}[scale=0.97]
\def\xCoordinate{0.0}
\def\yCoordinate{0.0}
\def\yAdd{1.5}
\def\xAdd{0.75}
\def\innerSepar{-1.5}
\def\lineWidth{1.5}
\def\roundedCorners{1}
\def\opacityInternal{0.5}
\def\minHeight{20}
\def\fracyAdd{1/6}
\def\lineWidthBelow{3}
\def\minWidth{20}
\def\scaleFactorSupp{0.12}
\def\minWidthCM{\minWidth*0.0352778*3/4}
\tikzstyle{block}=[draw, rectangle, minimum height=\minHeight pt, minimum width = \minWidth pt, text centered, rounded corners=\roundedCorners pt, draw=darkgray, font=\large]
\tikzstyle{block_colored}=[draw, rectangle, minimum height= 2 pt, minimum width = \minWidth * 5 / 6 pt, rounded corners=\roundedCorners  pt, draw=darkgray]
\def\arrowStyle{-latex}

\node (node_0_0) at (0, 0) {};
\node (node_0_1) at (1.5, 0) {};
\node (node_0_2) at (3, 0) {};
\node (node_0_3) at (4.5, 0) {};
\node[block, dashed] (node_1_0) at (0.75, -1.5) {}; 

\draw[\arrowStyle, path fading = north] (node_1_0) -- (node_0_0); 
\draw[\arrowStyle, path fading = north] (node_1_0) -- (node_0_1); 

\node[block, dashed] (node_1_m_1) at (-0.75, -1.5) {}; 
\draw[\arrowStyle, path fading = north] (node_1_m_1) -- (node_0_0); 

\draw[\arrowStyle, path fading = north] (node_1_0) -- (node_0_0); 
\draw[\arrowStyle, path fading = north] (node_1_0) -- (node_0_1); 

\draw[line width = 3, line cap = round, brown](0.49, -1.75)--(1.01, -1.75);
\node[block, very thick] (node_1_1) at (2.25, -1.5) {$x$}; 
\draw[line width = 3, line cap = round, orange](1.99, -1.75)--(2.51, -1.75);

\draw[\arrowStyle, path fading = north] (node_1_1) -- (node_0_1); 
\draw[\arrowStyle, path fading = north] (node_1_1) -- (node_0_2); 
\node[block, dashed] (node_1_2) at (3.75, -1.5) {}; 

\draw[\arrowStyle, path fading = north] (node_1_2) -- (node_0_2); 
\draw[\arrowStyle, path fading = north] (node_1_2) -- (node_0_3); 
\node[block, dashed] (node_1_3) at (5.25, -1.5) {}; 
\draw[\arrowStyle, path fading = north] (node_1_3) -- (node_0_3); 

\node[block] (node_2_1) at (1.5, -3) {}; 

\draw[\arrowStyle] (node_2_1) -- (node_1_0); 
\draw[\arrowStyle] (node_2_1) -- (node_1_1); 
\node[block] (node_2_2) at (3, -3) {}; 

\node[block, dashed] (node_2_0) at (0, -3) {}; 
\draw[\arrowStyle] (node_2_0) -- (node_1_0); 
\draw[\arrowStyle] (node_2_0) -- (node_1_m_1);

\draw[\arrowStyle] (node_2_2) -- (node_1_1); 
\draw[\arrowStyle] (node_2_2) -- (node_1_2); 
\node[block, dashed] (node_2_3) at (4.5, -3) {}; 

\draw[\arrowStyle] (node_2_3) -- (node_1_2); 
\draw[\arrowStyle] (node_2_3) -- (node_1_3); 

\draw[line width = 3, line cap = round, purple](4.24, -3.25)--(4.76, -3.25);
\node[block] (node_3_0) at (0.75, -4.5) {}; 

\draw[\arrowStyle] (node_3_0) -- (node_2_0); 
\draw[\arrowStyle] (node_3_0) -- (node_2_1); 

\node[block] (node_3_1) at (2.25, -4.5) {}; 

\draw[\arrowStyle] (node_3_1) -- (node_2_1); 
\draw[\arrowStyle] (node_3_1) -- (node_2_2); 
\node[block] (node_3_2) at (3.75, -4.5) {}; 

\draw[\arrowStyle] (node_3_2) -- (node_2_2); 
\draw[\arrowStyle] (node_3_2) -- (node_2_3); 
\node[block, dashed] (node_3_3) at (5.25, -4.5) {}; 

\draw[\arrowStyle] (node_3_3) -- (node_2_3); 
\node[block] (node_4_1) at (1.5, -6) {}; 

\draw[\arrowStyle] (node_4_1) -- (node_3_0); 
\draw[\arrowStyle] (node_4_1) -- (node_3_1); 
\node[block] (node_4_2) at (3, -6) {}; 

\draw[\arrowStyle] (node_4_2) -- (node_3_1); 
\draw[\arrowStyle] (node_4_2) -- (node_3_2); 
\node[block] (node_4_3) at (4.5, -6) {}; 

\draw[\arrowStyle] (node_4_3) -- (node_3_2); 
\draw[\arrowStyle] (node_4_3) -- (node_3_3); 
\node[block] (node_5_1) at (2.25, -7.5) {}; 

\draw[\arrowStyle] (node_5_1) -- (node_4_1); 
\draw[\arrowStyle] (node_5_1) -- (node_4_2); 
\node[block] (node_5_2) at (3.75, -7.5) {}; 

\draw[\arrowStyle] (node_5_2) -- (node_4_2); 
\draw[\arrowStyle] (node_5_2) -- (node_4_3); 
\draw[line width = 3, line cap = round, purple](3.49, -7.75)--(4.01, -7.75);
\node[block] (node_5_3) at (5.25, -7.5) {}; 

\draw[\arrowStyle] (node_5_3) -- (node_4_3); 
\node[block] (node_6_1) at (1.5, -9) {}; 

\draw[\arrowStyle] (node_6_1) -- (node_5_1); 
\node[block] (node_6_2) at (3, -9) {}; 

\draw[\arrowStyle] (node_6_2) -- (node_5_1); 
\draw[\arrowStyle] (node_6_2) -- (node_5_2); 
\draw[line width = 3, line cap = round, darkgray](2.74, -9.25)--(3.26, -9.25);
\node[block] (node_6_3) at (4.5, -9) {}; 

\draw[\arrowStyle] (node_6_3) -- (node_5_2); 
\draw[\arrowStyle] (node_6_3) -- (node_5_3); 
\node[block] (node_7_0) at (0.75, -10.5) {}; 

\draw[\arrowStyle] (node_7_0) -- (node_6_1); 
\node[block] (node_7_1) at (2.25, -10.5) {}; 

\draw[\arrowStyle] (node_7_1) -- (node_6_1); 
\draw[\arrowStyle] (node_7_1) -- (node_6_2); 
\node[block] (node_7_2) at (3.75, -10.5) {}; 

\draw[\arrowStyle] (node_7_2) -- (node_6_2); 
\draw[\arrowStyle] (node_7_2) -- (node_6_3); 
\draw[line width = 3, line cap = round, brown](3.49, -10.75)--(4.01, -10.75);
\node[block] (node_7_3) at (5.25, -10.5) {}; 

\draw[\arrowStyle] (node_7_3) -- (node_6_3); 
\node (node_8_1) at (1.5, -12) {};
\draw[\arrowStyle, path fading = south] (node_8_1) -- (node_7_0); 
\draw[\arrowStyle, path fading = south] (node_8_1) -- (node_7_1); 
\node (node_8_2) at (3, -12) {};
\draw[\arrowStyle, path fading = south] (node_8_2) -- (node_7_1); 
\draw[\arrowStyle, path fading = south] (node_8_2) -- (node_7_2); 
\node (node_8_3) at (4.5, -12) {};
\draw[\arrowStyle, path fading = south] (node_8_3) -- (node_7_2); 
\draw[\arrowStyle, path fading = south] (node_8_3) -- (node_7_3); 
\node[single arrow, top  color = red, bottom color = violet, single arrow head extend = 3pt, transform shape, opacity = 0.4, minimum height = 6cm, text opacity = 1, rotate = 270, anchor = west] at ( 6.375, -0.75) {confluence time}; 
\node[single arrow, top  color = violet, bottom color = blue, single arrow head extend = 3pt, transform shape, opacity = 0.4, minimum height = 4.5cm, text opacity = 1, rotate = 270, anchor = west] at ( 6.375, -6.75) {issuance time}; 
\node[block] (node_x_1) at (7.25, -6) {$x$}; 
\draw[line width = 3, line cap = round, orange](7, -6.25)--(7.5, -6.25);
\begin{scope}[transform canvas = {scale = \scaleFactorSupp}, shift = {(7.6 / \scaleFactorSupp, -5.65/\scaleFactorSupp)}]
\foreach \i/\j/\k in{0/90/white, 90/180/orange, 180/270/white, 270/360/white}
{
\draw[fill = \k](0, 0)  -- (\i:2) arc(\i:\j:2);
}
\end{scope}

\node[block] (node_x_2) at (7.25, -10.5) {$x$}; 
\draw[line width = 3, line cap = round, orange](7, -10.75)--(7.5, -10.75);
\begin{scope}[transform canvas = {scale = \scaleFactorSupp}, shift = {(7.6 / \scaleFactorSupp, -10.15/\scaleFactorSupp)}]
\foreach \i/\j/\k in{0/90/darkgray, 90/180/orange, 180/270/brown, 270/360/purple}
{
\draw[fill = \k](0, 0)  -- (\i:2) arc(\i:\j:2);
}
\end{scope}

\draw [dotted] (-1,-6.75) -- (7.7,-6.75);

\end{tikzpicture}
    \caption{Illustration of the Tangle to display the confluence time and issuance time. The colours in the bottom of the blocks represents the issuing nodes with significant weight. We demonstrate the colours of the ``heavy'' supporters of block $x$ on the right after each time period. The dashed blocks correspond to blocks that are not in the future cone of $x$.}
    \label{fig:confirmationTime}
\end{figure}

As discussed in Section~\ref{sec: WW confirmation rule}, a confirmation rule is essential for many use cases, and time to confirmation (TTC) is undoubtedly a vital performance measure of every consensus protocol. As a thorough analysis of the TTC is out of the scope of this paper, we give a first ``heuristic'' upper bound in this section. 

\begin{definition}[Time to confirmation]\label{def:TTC}
We define the time to confirmation of a block $x$ (at level $\theta$) by a node $i$ as
\begin{equation}
    \tau_{f,i}=\tau_{f,i}(x) := \inf\{ t>0: \WW_{{i,t}}(x) \ge \theta\} - \tau_{s,i}(x),
\end{equation}
where $\tau_{s,i}(x)$ is the solidification time of $x$ (see Section~\ref{sec: local tangles}). 
\end{definition}

In the remainder of this section, we omit index $i\in\nodeset$ since the provided analysis is relevant to all nodes.  We divide the TTC into two periods. During the first period, we  wait for the confluence time $\tau_c=\tau_c(x)$ until a given block $x$ is contained in the past cone of (almost) all current tips. During the second period, the issuance time $\tau_{iss}$, we let the WW grow until it reaches the threshold $\theta$. The TTC $\tau_f$ is then bounded above by 
\begin{equation}\label{eq:TTC approx}
    \tau_f \leq \tau_c + \tau_{iss}.
\end{equation}
Estimates for $\tau_{iss}$ are obtained from~\eqref{eq:AWbound1} and this formula can be simplified for specific choices of the weights (see Example~\ref{rem:TTCexample}).

\begin{example}\label{ex:confluence issuing time}
We demonstrate the confluence time and the issuance time with the help of Figure~\ref{fig:confirmationTime}. Blocks with a solid frame are in the future cone of block $x$. After the confluence time, all blocks approve $x$. The yellow, green and purple colours represent blocks by nodes that hold significant weight, i.e.  the nodes have a large influence on the confirmation. Once the cumulative weight of the nodes issued in the future cone of $x$ reaches the threshold $\theta$, the block becomes confirmed.
\end{example}

With some additional assumptions,
we can obtain estimates for the confluence time $\tau_c$ similarly to~\cite{popov2015}.
Our first assumption is that the delay between block creation and the moment that other nodes in the network receive this block is constant. 

\begin{ass}[Constant network delay]\label{ass:constantNetworkDelay}
We assume that the time between the block creation and until any other node receives this block equals some constant $h$.
\end{ass}

\begin{definition}[Number of tips]
Let $L(t)$ be the total number of tips of the Tangle at time $t$.
\end{definition}

As mentioned in Section~\ref{sec: local tangles}, there is no ``objective Tangle,'' and every node has its own perception. Nevertheless, previous work~\cite{Kusmierz2019} showed that the approximation made in this section leads to reasonable approximations for some quantitative properties of the Tangle, such as the number of tips and confluence times. For this reason, we omit the subscript ``$i$'' and work with a unique objective Tangle in this section.
Similar to~\cite{popov2015} we assume the number of tips to be in a stationary regime.
 
\begin{ass}[Constant Tangle width]\label{ass:constantWidth}
We assume that the number of tips $L(t)$ of the Tangle is stationary and has mean $L_0$. 
\end{ass}

Using Assumptions~\ref{ass:issuingRate},~\ref{ass:constantNetworkDelay}, and~\ref{ass:constantWidth} we follow the heuristics described in~\cite[Section 3] {popov2015}. A first observation is that at any given time $t$ there are on average $\lambda h$ \emph{hidden tips}, those blocks that have been issued after $t-h$ but are not yet visible to the network. As in~\cite{popov2015} we assume that typically there are $r$ \emph{revealed tips}, those that have been attached before $t-h$ but are still tips. Hence, we can write the total (average) number of tips as $L_0=r+\lambda h$. By Assumption~\ref{ass:constantWidth} we consider that the number of tips $L(t)$ is roughly stationary. This implies that since $\lambda h$ tips join the tip pool, during the same time, roughly $\lambda h$ blocks that have been tips at time $t-h$ became referenced and are no longer tips. Hence, the tip pool of size $L_0$ can be divided into $r$ revealed tips and $\lambda h$ blocks that are no longer tips. This division leads to the crucial observation that a new block (with $k$ parents) approves on average $k r / (r+\lambda h)$ (revealed) tips. Moreover, in the stationary situation where the tip pool size $L_0$ stays approximately constant, the mean number of chosen tips should be equal to $1$; otherwise, the number of tips would change. Solving 
$k r / (r+\lambda h)=1$ leads to
\begin{equation}\label{eq:L0}
    L_0 = L_0^{(k)} = \frac{ k \lambda h}{k-1}.
\end{equation}
This result, predicted in~\cite{popov2015},  has been confirmed through simulation studies in~\cite{PKS:18, Kusmierz2019} and theoretical results in \cite{king:21}.

A first consequence of \eqref{eq:L0} is that, if $L_0$ is large, the expected time for a block to be approved for the first time is approximately 
\begin{equation}\label{eq:timeToApproval}
    h+L_0/(k\lambda)=h+ \frac{h}{(k-1)}.
\end{equation}
The size of the tip pool is naturally linked to the growth of the WW of a given block; the larger the tip pool the slower the growth of the WW.

\begin{remark}
For any given $\lambda$ and $h$ we can  choose $k$ sufficiently large such that $k >  L_0^{(k)}$. In this case, blocks are referenced essentially immediately after they become \emph{visible}, however, at the cost of a larger block size. 
\end{remark}

We can proceed similar to~\cite{popov2015} to obtain that
\begin{equation}\label{eq:tauC}
    \tau_c \approx \frac{h}{W\left(\frac{(k-1)^2}{k} \right)} \left( \log L_0 + \log \varepsilon \right),
\end{equation}
where $\log$ denotes the natural logarithm function and $W$ is the principal branch of the Lambert $W$-function, which is defined as the inverse function to $z=we^{w}$, i.e.  $w=W(z)$. For large $k$, we can use the approximation 
$$
W\left(\frac{(k-1)^2}{k} \right)\approx 2 \log (k-1) - \log k \approx \log k
$$
and, hence, obtain
\begin{equation}\label{eq:tauCklarge1}
    \tau_c \approx \frac{h}{\log k} \log(L_0) \approx \frac{1}{\log k} h \log(\lambda h).
\end{equation}
In Section~\ref{sec:confluenceTime}, we will give  more details on the derivation of the confluence time.

\begin{example}\label{rem:TTCexample}
The behaviour of the issuing time $\tau_{iss}$ heavily depends on the actual weight distribution, e.g. see Figure \ref{fig: growth of the approval weight}.
However, the extreme case of all nodes having the same weight can be treated  more analytically. Extreme is meant here in the sense that the growth of the WW is to some extent the smallest. 
Hence, let $\weight(i)=1/N$ for all nodes $i\in\nodeset$ and assume that we want to get a bound on the confirmation time, i.e.  the first time a given block $x$ reaches  $\WW_{t}(x) \ge \theta$. Denote by $X_i$ the first time a block was sent from node $i\in\nodeset$. The vector of these times $(X_1, \ldots, X_N)$ can be ordered in increasing order and we obtain the so-called order statistics $X_{(1)}, \ldots, X_{(N)}.$ In the case where all $X_i$ follow the same exponential distribution $\mathrm{Exp}(\gamma)$ the distribution of the $i$th order statistic is given by
\begin{equation}
    X_{(i)} \sim \frac{1}{\gamma} \sum_{j=1}^i \frac{Z_j}{N-j+1},
\end{equation}
where the $Z_j$ are i.i.d.~exponential random variables with parameter $1.$
Eventually, the time it takes that $\lceil\theta N\rceil$ nodes issued a block is distributed as $X_{(\lceil \theta \cdot N\rceil)}$.
The expectation is given by
$$
   \E[ X_{(i)}] = \frac{1}{\gamma} \sum_{j=1}^i \frac{1}{N-j+1},
$$
with $i =\lceil \theta \cdot N\rceil$.
Using a standard integral approximation for the above sum, we obtain for large $N$ that
$$
   \E[ X_{(i)}] \approx  \frac{N}{\lambda} \left(\log(N) - \log(N-i)  \right).
$$
Hence, for $i=\theta N$, 
$$
   \tau_{iss} \approx \E[ X_{(i)}] \approx  \frac{N}{\lambda} \left( - \log(1-\theta) \right).
$$
Combing this result with the bound~\eqref{eq:tauCklarge1}  on the confluence time in (\ref{eq:TTC approx})  we obtain the following asymptotic upper bound on the TTC for large $k$ (and the other parameters fixed):
$$
    \tau_f \lesssim   \frac{1}{\log k} h \log(\lambda h) + \frac{N}{\lambda} \left( - \log(1-\theta) \right) .
$$
\end{example}

\section{The Ledger}\label{sec: ledger}
This section introduces several novel concepts to represent transactions and their interrelationships. Recall that in the standard UTXO conflict-free model, transactions specify the outputs of previous transactions as inputs and create new outputs by spending (or consuming) the inputs. No two transactions are consuming the same input. Such a conflict-free data structure can be implemented in a network where a consensus mechanism filters transactions. The latter is typically done by choosing a ``leader'' among the participants, and the leader adds a block of transactions to the conflict-free ledger. To bypass this ``centralised'' bottleneck, we propose the concept of the Reality-based UTXO Ledger, an augmented version of the standard conflict-free UTXO Ledger that allows more than one output spend. We refer the reader to the parallel work~\cite{RealitiesLedger2022}, where we discuss all concepts in detail.

In Section~\ref{sec: utxo and transactions}, we recall the definition of a transaction in the UTXO model and the ledger, which is a set of all transactions. In Section~\ref{sec:ledgerState}, we introduce definitions of conflicting transactions, conflicts and branches, which represent proper subsets of ``non-conflicting conflicts''. A reality is a maximal possible branch, and restricting a ledger to a reality results in the conflict-free UTXO Ledger. Finally, in Section~\ref{sec: reality selection algorithm} we discuss how nodes could choose a reality given an abstract weight function defined on the set of conflicts. The selected reality allows a node to express its opinion when issuing new blocks and validating transactions.

\subsection{UTXO Model and Transactions}\label{sec: utxo and transactions}

In the Unspent Transaction Output (UTXO) model transactions specify the outputs of previous transactions as inputs and spend them by creating new outputs.

Thus, a transaction consists of a list of inputs and a list of outputs, see Figure~\ref{fig:blockLayout}. Note that outputs must be unique. The uniqueness is typically achieved by creating the output ID with the involvement of a hash function. For example, the output ID could be the concatenation of the index of an output and the hash of a transaction's content.
Every output represents a specific amount of the underlying cryptocurrency. The value of all inputs, i.e.  spent outputs, must equal the value of all outputs of a transaction.
With each output comes a declaration by whom and under which conditions it can be spent. Under unlock conditions, e.g. a signature proving ownership of a given input's address, the transaction issuer is allowed to spend the inputs. We refer to Figure~\ref{fig:blockLayout} for a general transaction layout.  

As said in Section~\ref{sec: blocks}, blocks contain transactions in their payload. Hereafter, we write $\tx{x}$ to denote the transaction contained in the payload of a block $x$.

Let us define the transactions and ledger model more formally. We follow the approach of~\cite{IOTASC}.
\begin{definition}[Output and input]
An \textit{output} is a pair of a value $\val\in \mathbb{R}^+$ and an unlock condition $\cond$. We write $o=(\val, \cond)$ to denote the output. An \textit{input} $i$ is a reference to an output. We say the input \textit{consumes} the output. 
\end{definition}
\begin{definition}[Transaction]
A transaction $\tx{x}$ is a collection of inputs $\inputs(\tx{x}),$ outputs $\outputs(\tx{x})$, and unlock proofs $\unlock(i)$, $i\in \inputs(\tx{x})$, where
\begin{enumerate}[leftmargin=*]
    \item $\inputs(\tx{x})=(i_1,\ldots, i_n)$ is a list of inputs, i.e.  references to unconsumed outputs. We say that those outputs are spent or consumed by transaction $x$;
    \item $\outputs(\tx{x})=(o_1,\ldots, o_m)$ is a list of new outputs produced by transaction $\tx{x}$;
    \item $\unlock(i)$ is a proof which performs verification of the unlock conditions of each input $i$ of transaction $\tx{x}$. This is usually done by cryptographic proof of authorization that ensures that the issuer of the transaction satisfies the condition $\cond$ of the consumed outputs.
\end{enumerate}
\end{definition}
\begin{definition}[Ledger]
The \emph{ledger} is a set of transactions and denoted as $\ledger$. 
\end{definition}

The UTXO ledger starts at the so-called \textit{genesis} which contains outputs and no inputs. We emphasize that we use the same term for the ultimate predecessor of all blocks and all transactions. Recall that the genesis-block is written as $\genesis$, whereas the genesis-transaction  will be denoted as $\tx{\genesis}$.

Typically every output can be consumed by at most one transaction and, hence,  the value of all unspent outputs is conserved overall. Specifically, in the standard conflict-free UTXO model, the ledger can not contain a so-called \textit{double spend}, i.e.  two transactions that consume the same output of a transaction.

In the following section, we alleviate this conflict-free restriction and allow the Ledger to contain conflicting transactions.

\subsection{Reality-based Ledger}\label{sec:ledgerState}

In this section, we propose an augmented version of the standard conflict-free UTXO ledger model that allows containing double spends. We suggest different structures that can be used for tracking conflicting transactions without the need for consensus.  

First, we explain how the transactions and their in- and outputs result in a DAG structure. 
The information contained in the Ledger DAG is split into the Conflict Graph, which keeps track of the conflicting transactions only. Then we introduce the concept of branches. A branch forms a possible non-conflicting state of the ledger. We will then derive a concept, called a \textit{reality}, which allows us to reduce $\ledger$ to a maximal subset of transactions that yield a conflict-free (\textit{Reality-based}) ledger.

\begin{definition}[Ledger DAG]\label{def: ledgerDAG}
We define the Ledger DAG $\LedgerDAG$ to be a DAG whose vertex set is the ledger $\ledger$. There is a directed edge $(\tx{x},\tx{y})$ in the edge set of  $\LedgerDAG$ if and only if an input of $\tx{x}$ references an output of $\tx{y}$.
\end{definition}

We refer to Appendix~\ref{sec: toy example}, where we demonstrate this graph together with many other core concepts. Using the notation from Section~\ref{sec: graph-theoretical notions}, we write $\le_{\ledger}$ to denote the partial order on the set of transactions induced by $\LedgerDAG$. The past cone of a transaction $\tx{x}$ is denoted by $\cone{p}{\ledger}{\tx{x}}$, i.e.  transaction $\tx{x}$ spends value directly or indirectly  from transactions in $\cone{p}{\ledger}{\tx{x}}\setminus\{\tx{x}\}$.

Typically, the addition of transactions to this type of data structure is such that only transactions, which create no conflict with any previously recorded transactions are allowed to be added, i.e. the Ledger DAG is conflict-free. However, this requires a consensus mechanism that pre-selects transactions.

Now we introduce a new design for a ledger, where this constraint is replaced by a relaxed one -- namely, a new transaction $\tx{x}$ can be added to the ledger if $\inputs(x)$ are references to outputs which are not already consumed in $\cone{p}{\ledger}{\tx{x}} \setminus \{ \tx{x}\}$. In the following, we provide an overview of some of the most important concepts of the proposed solution that allows conflicting transactions to co-exist. Thereby, we start with a formal definition of conflicts and conflicting transactions.
\begin{definition}[Conflicts]\label{def:conflicts}
Two distinct transactions $\tx{x},\tx{y}\in \ledger$ are \emph{directly conflicting} if they have at least one input in common. 
A transaction $\tx{x}\in \ledger$ is called a \textit{conflict} if and only if there exists a transaction $\tx{y}\in \ledger\setminus\{\tx{x}\}$ such that $\tx{x}$ and $\tx{y}$ are directly conflicting.
\end{definition}
\begin{definition}[Conflicting transactions]\label{def:conflictingTx}
Two distinct transactions $\tx{x}_1,\tx{y}_1\in \ledger$ are said to be \textit{conflicting}  if there exist distinct $\tx{x}_2,\tx{y}_2\in \ledger$ with $\tx{x}_1 \le_{\ledger} \tx{x}_2$ and $\tx{y}_1\le_{\ledger} \tx{y}_2$ such that $\tx{x}_2$ and $\tx{y}_2$ are directly conflicting. 
\end{definition}
The interrelations between conflicts can be encoded with the help of the Conflict DAG and the Conflict Graph.
\begin{definition}[Conflict DAG and Conflict Graph]
 The set of all conflicts is denoted by $\conflictset$ and dubbed the \emph{set of conflicts} of the ledger $\ledger$. We define the Conflict DAG $\ConflictDAG$ to be the minimal subDAG of the Ledger DAG induced by $\conflictset \cup \tx{\genesis}$ (cf. Definition~\ref{def: minimal subDAG}). We define the Conflict Graph $\ConflictGraph$ to be the graph whose vertex set is $\conflictset$ and two conflicts are connected by an edge if and only if these conflicts are conflicting (as transactions).
\end{definition}

We can group transactions based on whether they conflict with each other or not.
\begin{definition}[Conflict-free set and conflicting sets]\label{def: conflicting sets}
A subset of transactions $S\subseteq \ledger$ is called \textit{conflict-free} if it does not contain any two  conflicting transactions. We also say that $S_1\subseteq \ledger$ is \textit{conflict-free with respect to} $S_2\subseteq \ledger$ if there is no $\tx{x}_1\in S_1$ and $\tx{x}_2\in S_2$ such that $\tx{x}_1$ and $\tx{x}_2$ are conflicting. Alternatively, $S_1$ is \textit{conflicting} with $S_2$ if $S_1$ is not conflict-free with respect to $S_2$.
\end{definition}
We further specialise conflict-free sets and introduce the notion of branches.

\begin{definition}[Branch and set of branches]\label{def:branch}
A set of conflicts ${B}\subseteq \conflictset$ is called a \emph{branch} if and only if the two properties hold:
\begin{enumerate}[leftmargin=*]
    \item ${B}$ is conflict-free (cf. Definition~\ref{def: conflicting sets});
    \item ${B}$ is $\ConflictDAG$-past-closed (cf. Definition~\ref{def: past closed}).
\end{enumerate}
Define $\branchset$ to be the set of all branches. A branch that represents the empty set is called the \emph{main branch}. 
\end{definition}

We now introduce the concept of a reality which can be defined as a maximal possible branch or, equivalently, a maximal independent set in the Conflict Graph. In other words, a reality aggregates the maximal number of conflicts while preserving non-conflicting nature.
\begin{definition}[Maximal branch and reality]\label{def: reality}
A branch ${B}\in\branchset$ is \emph{maximal} if there exists no other branch $A\in\branchset$ such that  ${B}\subset A$. 
A maximal branch is called a \textit{reality}.
\end{definition}
Next, we describe the notion of the maximal contained branch of a given transaction which  consists of the set of conflicting transactions in the past cone of the given transaction.
\begin{definition}[Maximal contained branch] \label{def: maximal contained branch}
Let $\branchset$ be the set of all branches, and $\branch_{\ledger}^{(p)}: \ledger \to \branchset$ be a function that for a given transaction $\tx{x}\in \ledger$ returns the maximal branch contained in $\cone{p}{\ledger}{\tx{x}}$. 
\end{definition}

We note that there could not be two maximal branches in the ledger past cone of a transaction. Indeed, the past cone of any transaction is conflict-free and, thus, if there would be two maximal branches, we could consider the union of two branches, which has to be also a branch. 
\begin{definition}[Ledger of a reality]\label{def: ledger of a reality}
Let $R\in \branchset$ be a reality. Define the $R$\textit{-ledger}, written as $\ledger(R)$, to be the set of all transactions $\tx{x}\in \ledger$ such that $\branch_{\ledger}^{(p)}(\tx{x})\subseteq R$.
\end{definition}
Recall that a maximal contained branch of a transaction from the $R$-ledger is a subset of $R$. Thus, the past cones of any two transactions are conflict-free and so is the $R$-ledger.

\begin{remark}[Local Ledger]
As discussed in Section~\ref{sec: local tangles}, there could be subjective versions of the Tangle DAG. Similarly, every node has its own perception of the Ledger. Thereby, we will use subscripts $i,t$ in  $\ledger$, $\LedgerDAG$ and other related notions if we talk about the point of view of node $i\in\nodeset$ at moment $t$.
\end{remark}

\subsection{Reality selection algorithm}\label{sec: reality selection algorithm}

To issue new blocks and validate transactions, each node in the network has to choose a conflict-free part of the ledger that it prefers. For this purpose, it suffices for a node to choose a preferred reality. Once a reality $R$ is chosen, the node can make different operations on the $R$-ledger.

\begin{definition}[Preferred reality]\label{def: preferred reality}
Node $i\in\nodeset$ at time $t$ chooses a specific reality $R=R_{i,t}\in\branchset$ which is called the \textit{preferred reality} for node $i$.
\end{definition}

There could be different ways to choose the preferred reality. We provide a natural reality selection algorithm that takes as an input the Conflict Graph and an abstract weight function $\mathbf{w}: \conflictset\to [0,1]$ satisfying two
properties:
\begin{enumerate}[leftmargin=*]
\item monotonicity: for any two conflicts  $x, y\in \conflictset$ such that $x \le_{\conflictset} y$, it holds that 
    $$
    \mathbf{w} (x) \le \mathbf{w}(y);
    $$
\item consistency: let $x_1,\ldots, x_s$ be pairwise conflicting conflicts.\footnote{We say that transactions  $S\subseteq \ledger$ are pairwise conflicting if any pair of transactions $x, y\in S$ are conflicting. } Then it holds that 
$$
\sum_{i=1}^{s}\mathbf{w}(x_i) \le 1.
$$
\end{enumerate}

\begin{remark}\label{rem: reality selection algorithm}
In Section~\ref{sec: Approval Weight}, we introduce the Approval Weight function defined on the set of all transactions, i.e. $\AW: \ledger\to[0,1]$. Then the required weight function can be obtained as the restriction of the Approval Weight to the set of conflicts, i.e. $\mathbf{w}=\AW|_{\conflictset}$.
\end{remark}
In Algorithm~\ref{alg:selectionConflictGraph} we describe the proposed procedure. In this algorithm, we initialize $R$ as  the genesis and $U$ as the set of conflicts. Then we iteratively construct a subset $R$ of conflicts and prune transactions conflicting with $R$ from $U$. Specifically, we add a conflict to $R$ if this conflict is not conflicting with this set and attains the highest value of the weight function among all  $\ConflictDAG$-maximal elements that remain in $U$. By construction, Algorithm~\ref{alg:selectionConflictGraph}  leads to a maximal independent set in the Conflict Graph or a reality. The number of iterations in the while-loop is bounded by $|\conflictset|$ and the number of $\ConflictGraph$-neighbours is also bounded by $|\conflictset|$. Thus, it is possible to implement this algorithm with complexity $O(|\conflictset|^2)$. 
\begin{algorithm}[t]
\caption{Reality selection in Conflict Graph}
\label{alg:selectionConflictGraph}
\KwData{Conflict Graph $\ConflictGraph=(\conflictset,E)$}
\KwResult{reality ${R}\in \branchset$}
${R}\gets \emptyset$\\
$U\gets \conflictset$\\
\While{$|U|\neq 0$}{
$c^*\gets \argmax \{ \mathbf{w}(c): c\in \max_{\conflictset}(U)\}$ {\scriptsize{\Comment*[r]{use $\min\hash(c)$ for breaking ties}}}
${R} \gets {R} \cup \{c^*\}$\\ $U \gets U  \setminus \{N_{\conflictset}(c^{*}) \cup\{c^{*}\}\}$
}

\end{algorithm}

We refer to Appendix~\ref{sec: toy example}, where we apply the algorithm as part of an illustrated example.

\section{On Tangle Voting}\label{sec: Voting}

In this section, we present a voting mechanism based on the Tangle and the Ledger DAG. This mechanism allows for selecting realities in the Reality-based Ledger.

In  Section~\ref{sec: tamper proof DAGs} we give an overview of two suitable DAG structures, which can be utilised to enable voting on the realities. Section~\ref{sec: voting and voting DAG} combines these two structures into a Voting DAG and introduces basic concepts that follow from it. We also address how voting on two DAGs increases the liveness of the protocol. 
Section~\ref{sec: Approval Weight} defines a metric called Approval Weight which is utilised in Section~\ref{sec: tip selection algorithms} to  identify a preferred reality and vote for it using a suitable tip selection algorithm.

\subsection{Extension of Witness Weight and Liveness Problems}\label{sec: liveness of transactions}

In Section~\ref{sec: tangle} we introduced the Witness Weight, which is a metric used for the confirmation of blocks. In this section, we seek a similar tool for the confirmation of transactions. 

The Witness Weight has the property that it is monotonically increasing since it expresses the percentage of the weight that has witnessed a block's existence. The situation is different for transactions where we want to leverage the node's weight to decide between conflicting transactions. To ensure liveness, nodes must have the possibility to change their votes and withdraw their weights from the approval weight of a given transaction.\footnote{In contrast, if weights are added but  not withdrawn, it is possible that two conflicting transactions gain precisely the same weight which would result in an impasse.} 
However, changing the opinions might imply that blocks that reference (and vote for) blocks with rejected transactions might never be confirmed.

This situation creates a negative incentive to reference new tips. More precisely, nodes may be incentivized to either reference only blocks from trusted entities, tips of a certain age, or in the worst case, ancient and already confirmed blocks. The last behaviour may eventually lead to no new blocks being confirmed anymore.

The problems above were until now a significant concern of DAG-based consensus protocols, e.g. \cite{popov2015}. We propose to solve these by using the Reality-based Ledger and extending the reference scheme.

\subsection{Immutable  DAGs}\label{sec: tamper proof DAGs}

Blocks are the primary information carriers of the network, i.e. they contain transactions and express the opinion of the issuing nodes. The references in the blocks, together with the signature of the nodes and the unlock proofs for the inputs, form two immutable data structures, similar to a blockchain.

First, the Tangle  $\TangleDAG$ is constructed on the set of blocks $\blockset$. The interrelations are defined by the references contained in the blocks, which are selected and signed by the issuing nodes (for more details, see Section~\ref{sec: tangle}).

Second, the Ledger  DAG $\LedgerDAG$ is constructed on the set of transactions $\ledger$. Their interrelations  are defined by the consumption of inputs, which are the outputs of previous transactions. The consumption and creation of outputs are cryptographically verified by the signature of the fund owner (for more details, see Section~\ref{sec: ledger}).

For nodes to objectively agree on a partial order of events, we require the following assumption. 

\begin{ass}[Past cone completeness]\label{ass: pastcone completeness}
For a transaction $\tx{x}$ that spends an output created in a transaction $\tx{y}$, it holds that the block $x$ is contained in the Tangle future cone of $y$, i.e.  $x\in\cone{f}{\blockset}{y}$. 
\end{ass}

In other words, we have the natural assumption that the spending of the output should happen in the future cones the blocks ``creating'' these outputs.

\begin{lemma}\label{lem: UTXO Tangle subgraph}
Under Assumption~\ref{ass: pastcone completeness},  the partial order $\le_{\ledger}$ induced by $\LedgerDAG$ is consistent with the partial order $\le_{\blockset}$ induced by $\TangleDAG$. More specifically, if for some blocks $x,y\in\blockset$, we have that the corresponding transactions satisfy $\tx{x}\le_{\ledger} \tx{y}$, then it holds that $x\le_{\blockset} y$.
\end{lemma}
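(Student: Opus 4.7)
The statement is essentially a ``lifting'' result: each single edge in the Ledger DAG must correspond to a path (not necessarily an edge) in the Tangle DAG, and one then concatenates these to lift an arbitrary Ledger-DAG path. My plan is therefore to argue by induction on the length of a directed path in $\LedgerDAG$ from $\tx{x}$ to $\tx{y}$, using Assumption~\ref{ass: pastcone completeness} as the edge-level statement.

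First, I would unfold $\tx{x}\le_{\ledger}\tx{y}$: by definition of the partial order induced by a DAG, there exist transactions $\tx{z_0}=\tx{x},\tx{z_1},\ldots,\tx{z_s}=\tx{y}$ with $(\tx{z_{i-1}},\tx{z_i})$ an edge of $\LedgerDAG$ for every $i\in[s]$. By Definition~\ref{def: ledgerDAG}, each such edge means that an input of $\tx{z_{i-1}}$ references an output of $\tx{z_i}$; in particular $\tx{z_{i-1}}$ spends an output created in $\tx{z_i}$.

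Next, I would invoke Assumption~\ref{ass: pastcone completeness} for each consecutive pair: since $\tx{z_{i-1}}$ spends an output created in $\tx{z_i}$, the corresponding block satisfies $z_{i-1}\in\cone{f}{\blockset}{z_i}$, which by Definition~\ref{def: future and past cones} is exactly $z_{i-1}\le_{\blockset} z_i$. Unpacking this, for each $i$ there is a directed path in $\TangleDAG$ from $z_{i-1}$ to $z_i$. Concatenating these $s$ paths produces a single directed path in $\TangleDAG$ from $z_0=x$ to $z_s=y$, which by definition gives $x\le_{\blockset} y$. The $s=0$ case is trivial since $\tx{x}=\tx{y}$ forces $x=y$ (each block contains exactly one transaction, as per Section~\ref{sec: blocks}) and $x\le_{\blockset} x$ holds by reflexivity.

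I do not anticipate a genuine obstacle: the argument is entirely a mechanical translation from Ledger-DAG edges to Tangle-DAG paths via the assumption, together with transitivity of $\le_{\blockset}$ (which is immediate from concatenation of directed paths in a DAG). The only subtle point worth highlighting is that Assumption~\ref{ass: pastcone completeness} only guarantees $x\le_{\blockset} y$ per edge rather than $x\in\child{\blockset}{y}$, so one genuinely needs the inductive concatenation rather than an edge-to-edge correspondence.
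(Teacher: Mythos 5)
Your proof is correct and follows essentially the same route as the paper's: induction on the length of a directed path in $\LedgerDAG$, with the single-edge case supplied by Assumption~\ref{ass: pastcone completeness} and the general case obtained by concatenating the resulting Tangle paths. The paper states this more tersely (leaving the concatenation and the trivial $s=0$ case implicit), but there is no substantive difference in approach.
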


\begin{proof}
The statement can be shown trivially by induction on the length of the shortest path between $\tx{x}$ and $\tx{y}$ in $\LedgerDAG$. The base case, when the length of the path is one, is implied by Assumption~\ref{ass: pastcone completeness}. 
\end{proof}

\subsection{Voting and Voting DAG}\label{sec: voting and voting DAG}

As a consequence of Lemma~\ref{lem: UTXO Tangle subgraph} both, the Tangle and the  Ledger DAG, are suitable for nodes to express their opinions about which transactions they prefer among any conflicting transactions. 
More specifically by creating and attaching new blocks, nodes have an implicit way of voting for the ``preferred'' branches and conflicts. Let us define this more precisely.

We utilise the references contained in a block, which constitute the edges of the Tangle, see Section~\ref{sec: tangle}, to express a node's opinion. As by Definition~\ref{def: block} a reference  contains two fields: $r_x$, which is a reference to block $x$ and $v$, which is the value of a label. We call the label $v$ the \textit{vote type} that can take values in $\{v_\mathcal{T}, v_\mathcal{L}\}$. This label gives additional meaning to the reference to $x$ in the Tangle and defines the following two specialised references.

\begin{definition}[Block reference]\label{def: Block reference}
We say a reference $\reference(y)=(r_x,v)$ from a block $y$ to a block $x$ is a \textit{block reference} if $y$ references $x$. In this case, we set the label $v=v_\mathcal{T}$. 
\end{definition}

To overcome the liveness issues described in Section~\ref{sec: liveness of transactions} we additionally add a reference that bypasses the block and directly addresses the contained transaction.

\begin{definition}[Transaction reference]\label{def: Transaction reference}
We say a reference $\reference(y)=(r_x,v)$ from a block $y$ to a block $x$ is a \textit{transaction reference}  if $y$ references $\tx{x}$. In this case, we set the label $v=v_\mathcal{L}$. 
\end{definition}

\begin{remark}
Naturally, a block references the transaction that is the content of the block. As such, an honest node would not issue a block with a transaction that is not in its preferred reality (see Section~\ref{sec: tip selection algorithms}).
\end{remark}

\begin{example}
Consider Figure~\ref{fig: inheritance}. Blocks $y$ and $y'$ contain the same transaction $\tx{y}$, but $y$ refers to the transaction $\tx{x}$ in block $x$ and, thus, issues a transaction reference, while block $y'$ refers to the block $x$ and, thus, issues a block reference, instead.
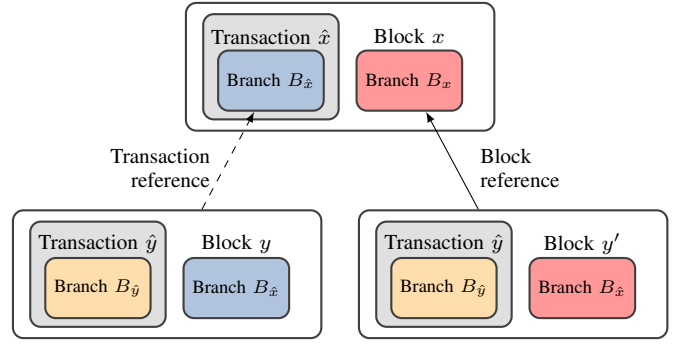
\begin{figure}
    \centering
    \definecolor{superlightgray}{RGB}{224, 224, 224}
\definecolor{mygray}{RGB}{240, 240, 240}
\definecolor{mygreen}{RGB}{0, 204, 153}
\definecolor{myred}{RGB}{255, 153,153}
\definecolor{lightsteelblue}{RGB}{176,196,222}
\definecolor{navajo}{RGB}{255,222,173}
\begin{tikzpicture}[scale = 0.92]
\def\widthBlock{1.5}
\def\heightBlock{0.8}
\def\shiftdown{3*\heightBlock}
\def\arrowStyle{-latex}

\tikzstyle{rounded_block}=[draw, rectangle, thick, minimum height=\heightBlock cm, minimum width = \widthBlock cm, text centered, rounded corners, draw=darkgray, font = \scriptsize]
\node [rounded_block, fill=white, minimum height= 1.7 cm, minimum width = 4.1 cm] at (0,0) {} ;

\node [rounded_block, fill=white, minimum height= 1.4 cm, minimum width = 1.8 cm, fill = superlightgray] at (-1,0) {} ;

\node [rounded_block, fill=white, minimum height= 0.8 cm, minimum width = 1.2 cm, fill = lightsteelblue] (blue_block) at (-1,-0.2) { Branch $B_{\tx{x}}$} ;

\node [rounded_block, fill=white, minimum height= 0.8 cm, minimum width = 1.2 cm, fill = myred] (red_block) at (1,-0.2) { Branch $B_{{x}}$} ;

\node[font = \footnotesize] at (-1,0.44) {Transaction $\tx{x}$};

\node[font = \footnotesize] at (1,0.44) {Block ${x}$};

\node [rounded_block, fill=white, minimum height= 1.7 cm, minimum width = 4.1 cm] (block_bottom_left) at (-2.5,-3) {} ;

\node [rounded_block, minimum height= 1.4 cm, minimum width = 1.8 cm, fill = superlightgray] at (-3.5,-3) {} ;

\node [rounded_block,  minimum height= 0.8 cm, minimum width = 1.2 cm, fill = navajo] at (-3.5,-3.2) { Branch $B_{\tx{y}}$} ;

\node [rounded_block, fill=white, minimum height= 0.8 cm, minimum width = 1.2 cm, fill = lightsteelblue] at (-1.5,-3.2) { Branch $B_{\tx{x}}$} ;

\node[font = \footnotesize] at (-3.5,-2.56) {Transaction $\tx{y}$};

\node[font = \footnotesize] at (-1.5,-2.56) {Block ${y}$};

\node [rounded_block, fill=white, minimum height= 1.7 cm, minimum width = 4.1 cm] (block_bottom_right) at (2.5,-3) {} ;

\node [rounded_block, minimum height= 1.4 cm, minimum width = 1.8 cm, fill = superlightgray] at (1.5,-3) {} ;

\node [rounded_block,  minimum height= 0.8 cm, minimum width = 1.2 cm, fill = navajo] at (1.5,-3.2) { Branch $B_{\tx{y}}$} ;

\node [rounded_block, fill=white, minimum height= 0.8 cm, minimum width = 1.2 cm, fill = myred] at (3.5,-3.2) { Branch $B_{\tx{x}}$} ;

\node[font = \footnotesize] at (1.5,-2.56) {Transaction $\tx{y}$};

\node[font = \footnotesize] at (3.5,-2.56) {Block ${y'}$};

\draw[\arrowStyle, dashed]  (block_bottom_left) -- (blue_block);

\draw[\arrowStyle]  (block_bottom_right) -- (red_block);

\node[font = \footnotesize] at (-2.6, -1.5){\begin{tabular}{r}
Transaction \\
reference
\end{tabular}};

\node[font = \footnotesize] at (2.6, -1.5){\begin{tabular}{l}
Block \\
reference
\end{tabular}};

\end{tikzpicture}

    \caption{Inheritance of branches: we consider two potential blocks $y$ and $y'$ that contain the same transaction $\tx{y}$, but have either a transaction, or a message reference to  block $x$. Thus, a node can vote in two  ways. Specifically, block can approve a previous block via a transaction reference or a block reference, and inherit the branch of the referenced transaction or the referenced block, respectively.}
    \label{fig: inheritance}
\end{figure}
\end{example}

\begin{remark}
The distinction into the sub-categories (transaction reference and block reference) is only relevant for the purpose of voting; the definition of the Witness Weight, see Section \ref{sec: witness weight}, remains unaffected.
\end{remark}

We define a data structure that combines the two immutable data structures in Section~\ref{sec: tamper proof DAGs} into one single DAG  used for propagating the votes.

\begin{definition}[Voting DAG]
The Voting DAG $\VotingDAG$ is a DAG whose vertex set $\votingset$ is the union of the set of blocks $\blockset$ and the set of transactions $\ledger$, i.e.   $\votingset=\blockset\cup\ledger$.
Let $v$ and $u$ be two vertices in $\votingset$. There exists a directed edge from  $u$ to $v$ in $\VotingDAG$ if and only if one of the following properties holds:
\begin{enumerate}[leftmargin=*]
    \item $u,v\in \blockset$ and  $u$ contains a block reference to $v$;
    \item  $u\in\blockset, v\in\ledger$ and $u$ contains a transaction reference to transaction $v$;
    \item $u\in\blockset$ and $v=\tx{u}\in\ledger$, i.e.  $v$ is a transaction in block $u$;
    \item $u,v\in\ledger$ and  transaction $u$ spends the output from transaction $v$, i.e.  $v\in\parent{\ledger}{u}$.
\end{enumerate}
\end{definition}

So far we described how references between blocks are given additional meaning to construct the voting DAG. This DAG allows nodes to express their opinions, recursively. Following Definition~\ref{def: future and past cones} we define $\cone{p}{\votingset}{x}$ as the \textit{voting past cone} of block or transaction $x$ in the Voting DAG.

\begin{definition}[Voting]\label{def: vote}
A node $i$ expresses a \emph{direct vote} for a vertex $x\in \votingset$ in the voting DAG $\VotingDAG$ by referencing $x$ in a block $y\in\blockset$,  where $\iss(y)=i$. 
We say node $i$ \textit{indirectly votes} for any vertex in  $\cone{p}{\votingset}{x}$.
\end{definition}

\begin{example}
We illustrate the concept of a Voting DAG in Figure~\ref{fig: voting DAG}. The Voting DAG assembles information from the Tangle and the Ledger DAG. We assume a situation where the node that issues block $x$ does not approve transaction $\tx{y}$ and, thus, can vote neither for blocks $y$ nor $z$. However, it can vote for transaction $\tx{z}$ by using a transaction vote. More precisely, by creating a transaction reference to block $z$, the vote of block $x$ avoids vertices $z,y,\tx{y}$ shown in grey.
\begin{figure}
    \centering
    \definecolor{megalightgray}{RGB}{244, 244, 244}
\definecolor{mybluegray}{RGB}{140, 140, 200}
\definecolor{myblue}{RGB}{102, 178, 255}
\definecolor{myred}{RGB}{255, 102, 102}
\tikzstyle{rounded_block}=[draw, rectangle, thick, minimum height=\heightBlock cm, minimum width = \widthBlock cm, text centered, rounded corners, draw=darkgray, font = \small]

\begin{tikzpicture}[use Hobby shortcut, scale = 0.9]
\def\xCoordinate{0.0}
\def\yCoordinate{0.0}
\def\yAdd{1.5}
\def\xAdd{0.75}
\def\innerSepar{-1.5}
\def\lineWidth{1.5}
\def\roundedCorners{1}
\def\opacityInternal{0.5}
\def\minHeight{20}
\def\fracyAdd{1/6}
\def\lineWidthBelow{3}
\def\minWidth{20}
\def\scaleFactorSupp{0.12}
\def\minWidthCM{\minWidth*0.0352778*3/4}
\tikzstyle{block}=[draw, rectangle, minimum height=\minHeight pt, minimum width = \minWidth pt, text centered, rounded corners=\roundedCorners pt, draw=darkgray, font=\large]
\tikzstyle{block_colored}=[draw, rectangle, minimum height= 2 pt, minimum width = \minWidth * 5 / 6 pt, rounded corners=\roundedCorners  pt, draw=darkgray]
\def\arrowStyle{-latex}


\node[block, thick, draw = teal] (node_1_0) at (0, 0) {}; 

\node[block, thick,draw = teal] (node_1_1) at (1, -2) {$\tx{y}$};

\draw[\arrowStyle, thick, draw = teal] (node_1_1) -- (node_1_0); 

\node[block, thick, draw = teal] (node_1_2) at (0.75, -4) {$\tx{z}$};

\draw[\arrowStyle, thick,draw = teal] (node_1_2) -- (node_1_0);

\node[block, thick,draw = teal] (node_1_3) at (0.0, -6) {$\tx{x}$};

\draw[\arrowStyle, thick, draw = teal] (node_1_3) -- (node_1_0);

\node[block, thick, draw = myred] (node_2_0) at (3.5, 0) {}; 

\node[block, thick,draw = myred] (node_2_1) at (3.5, -2) {${y}$};

\draw[\arrowStyle, thick, draw = myred] (node_2_1) -- (node_2_0); 

\node[block, thick, draw = myred] (node_2_2) at (3.5, -4) {${z}$};

\draw[\arrowStyle, thick,draw = myred] (node_2_2) -- (node_2_1);

\node[block, thick,draw = myred] (node_2_3) at (3.5, -6) {${x}$};

\draw[\arrowStyle, thick, dashed, draw = myred] (node_2_3) -- (node_2_2);

\node[font = \small] at (4.05, -1.2) {\begin{tabular}{l}
Block \\
vote
\end{tabular}
};

\node[font = \small] at (4.4, -5.15) {\begin{tabular}{l}
Transaction \\
vote
\end{tabular}
};

\node[block, thick, draw = teal] (node_3_0_0) at (6, 0) {}; 
\node[block, thick, draw = myred, even odd rule,outer color=lightgray,inner color=white] (node_3_0_1) at (6.78, 0) {}; 
\draw[\arrowStyle, thick, draw = darkgray] ([xshift = 0.2 em, yshift=0.0em] node_3_0_1.north) to [out=120,in=60] ([xshift = -0.2 em,yshift=0.1em ]node_3_0_0.north)
    ;

\node[block, thick,draw = teal, even odd rule,outer color=lightgray,inner color=white] (node_3_1_0) at (7.5, -2) {$\tx{y}$};
\node[block, thick,draw = myred, even odd rule,outer color=lightgray,inner color=white] (node_3_1_1) at (8.28, -2) {${y}$};
\draw[\arrowStyle, thick, draw = teal] (node_3_1_0) -- (node_3_0_0);
\draw[\arrowStyle, thick, draw = myred] (node_3_1_1) -- (node_3_0_1);
\draw[\arrowStyle, thick, draw = darkgray] ([xshift=0.0em, yshift=-0.0em]node_3_1_1.south) to [out=250,in=300] ([xshift=-0.2em,yshift=-0.0em]node_3_1_0.south);

\node[block, thick,draw = teal] (node_3_2_0) at (7.5, -4) {$\tx{z}$};
\node[block, thick,draw = myred, even odd rule,outer color=lightgray,inner color=white] (node_3_2_1) at (8.28, -4) {${z}$};
\draw[\arrowStyle, thick, draw = myred] (node_3_2_1) -- (node_3_1_1);
\draw[\arrowStyle, thick, draw = teal] (node_3_2_0) -- (node_3_0_0);
\draw[\arrowStyle, thick, draw = darkgray] ([xshift = 0.2em, yshift=-0.0em]node_3_2_1.south) to [out=250,in=300] ([xshift = -0.2em, yshift=-0.0em]node_3_2_0.south);

\node[block, thick,draw = teal] (node_3_3_0) at (6, -6) {$\tx{x}$};
\node[block, thick,draw = myred] (node_3_3_1) at (6.78, -6) {${x}$};

\draw[\arrowStyle, thick, draw = teal] (node_3_3_0) -- (node_3_0_0);
\draw[\arrowStyle, thick, draw = myred] (node_3_3_1) -- (node_3_2_0);
\draw[\arrowStyle, thick, draw = darkgray] ([xshift = 0.2em, yshift=-0.0em]node_3_3_1.south) to [out=250,in=300] ([xshift = -0.2em, yshift=-0.0em]node_3_3_0.south);

\draw[mybluegray,line width=1em] (1.75,-3) -- (2.75,-3)(2.25,-2.5) -- (2.25,-3.5);
\draw[mybluegray, fill=green, -{Triangle[width = 25pt, length = 10pt]}, line width = 12pt] (4.5, -3) -- (5.5, -3);

\node at (0.5,-7.3) {\textbf{Ledger DAG}
};
\node at (3.5,-7.3) {\textbf{Tangle}
};
\node at (7,-7.3) {\textbf{Voting DAG}
};
 
\end{tikzpicture}
    \caption{Illustration of how the Voting DAG is assembled from the Tangle and the Ledger DAG. By creating a transaction reference to block $z$, the vote of block $x$ avoids vertices $z,y,\tx{y}$ shown in grey.}
    \label{fig: voting DAG}
\end{figure}
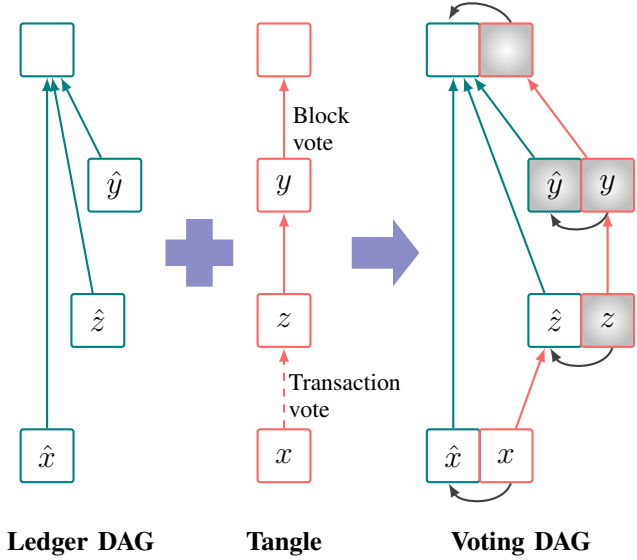
\end{example}

We can also describe the voting past cone in terms of a recursive equation.

\begin{proposition}\label{prop: recursive tructure of voting}
Suppose a given block $x\in\blockset$ has block references to block $y_1,\ldots,y_s$ and transaction references to blocks $z_1,\ldots,z_r$ with $s+r=k$. 
Then the voting past cone of $x$ can be written in a recursive way
$$
\cone{p}{\votingset}{x}= x \cup   \cone{p}{\ledger}{\tx{x}}\cup C_{\ledger}(x) \cup C_{\votingset}(x),
$$
where 
\begin{align*}
C_{\ledger}(x)&:=\cone{p}{\ledger}{\tx{z}_1}\cup \ldots\cup \cone{p}{\ledger}{\tx{z}_r},\\
C_{\votingset}(x)&:=\cone{p}{\votingset}{y_1}\cup \ldots\cup \cone{p}{\votingset}{y_s}.\\
\end{align*}
\end{proposition}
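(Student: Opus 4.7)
The plan is to unfold the recursive definition of the past cone in $\VotingDAG$ starting from the block $x$, using the enumeration of edge types in the Voting DAG definition to catalogue every direct out-neighbour of $x$, and then to observe that from a transaction vertex one can only reach other transactions.

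First I will record the elementary identity, valid in any DAG $D=(V,E)$, namely
\begin{equation*}
\cone{p}{V}{u} \;=\; \{u\} \,\cup\, \bigcup_{v \in \parent{V}{u}} \cone{p}{V}{v},
\end{equation*}
which follows directly from Definition~\ref{def: future and past cones} by induction on path length. Applying this to $x\in\votingset$ in the Voting DAG, it suffices to determine $\parent{\votingset}{x}$ and then to evaluate $\cone{p}{\votingset}{v}$ for every $v$ in this set. By enumerating the four rules for directed edges in $\VotingDAG$ and noting that $x$ is a block, the out-neighbours of $x$ are exactly (i) the blocks $y_1,\ldots,y_s$ reached via block references (rule 1), (ii) the transactions $\tx{z}_1,\ldots,\tx{z}_r$ reached via transaction references (rule 2), and (iii) the transaction $\tx{x}$ carried by $x$ (rule 3). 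Rule 4 does not apply because it requires a transaction as source.

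The second key observation, which I will state as an auxiliary claim, is that for any transaction $w\in\ledger$ one has
\begin{equation*}
\cone{p}{\votingset}{w} \;=\; \cone{p}{\ledger}{w}.
\end{equation*}
This holds because the only edges of $\VotingDAG$ with source in $\ledger$ are those of type 4, whose targets also lie in $\ledger$ and coincide with the edges of $\LedgerDAG$; hence any directed path in $\VotingDAG$ starting from a transaction stays inside $\ledger$ and agrees with a path in $\LedgerDAG$. A straightforward induction on the length of such a path gives the set equality in both directions.

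Combining the two ingredients, the identity applied to $x$ gives
\begin{equation*}
\cone{p}{\votingset}{x} = \{x\} \cup \cone{p}{\votingset}{\tx{x}} \cup \bigcup_{i=1}^{r} \cone{p}{\votingset}{\tx{z}_i} \cup \bigcup_{j=1}^{s} \cone{p}{\votingset}{y_j},
\end{equation*}
and substituting $\cone{p}{\votingset}{\tx{w}} = \cone{p}{\ledger}{\tx{w}}$ for $w\in\{x,z_1,\ldots,z_r\}$ yields exactly the claimed decomposition with $C_{\ledger}(x)$ and $C_{\votingset}(x)$ as defined. I do not expect a serious obstacle here; the only point that deserves care is making sure no edge type is overlooked when enumerating $\parent{\votingset}{x}$ (in particular, that rule 3 contributes $\tx{x}$ on top of the $k$ references) and that the auxiliary claim is justified rather than taken for granted, since it is what prevents $C_{\ledger}(x)$ from having to be expanded further in the Voting DAG.
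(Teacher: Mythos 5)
Your proof is correct. The paper states Proposition~\ref{prop: recursive tructure of voting} without proof, treating it as immediate from the definition of $\VotingDAG$; your argument supplies exactly the details that are implicitly being relied on, namely the case analysis of the four edge types to identify $\parent{\votingset}{x}$ (including the rule-3 edge to $\tx{x}$) and the observation that $\cone{p}{\votingset}{w}=\cone{p}{\ledger}{w}$ for any transaction $w$ because only rule-4 edges leave a vertex of $\ledger$. Nothing is missing and no step fails.
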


The Reality-based Ledger introduces the concept of branches, see Section~\ref{sec: ledger}. The consumption of more than one output from different branches creates a new branch, which is the union of the branches of the consumed outputs. 
Now we extend this concept to blocks, which can combine branches by voting for previous blocks or transactions. 
More precisely we can relate a given reference in a block with a branch. The branch of the block is then defined as follows.

\begin{definition}[Voting Branch]\label{def: voting branch}
Given a block $x\in\blockset$, we define the \textit{voting branch} of $x$ to be 
$$
\branch^{(p)}_{\votingset}(x):= \cone{p}{\votingset}{x} \cap \conflictset,
$$
where $\conflictset$ is the set of conflicts.
\end{definition}

\begin{remark}
We highlight that for the correctness of the protocol, a node has to create references for a new block $x$ in such a way that $\branch^{(p)}_{\votingset}(x)$ is indeed a branch as defined in Definition~\ref{def:branch}. The property that $\branch^{(p)}_{\votingset}(x)$ is $\ConflictDAG$-past-closed trivially follows from the fact that $\cone{p}{\votingset}{x}\cap\ledger$ is $\LedgerDAG$-past-closed. However, the conflict-free property of $\branch^{(p)}_{\votingset}(x)$ is not necessarily true in general and has to be checked. We address this issue when we discuss tip selection algorithms in Section~\ref{sec: tip selection algorithms}.
\end{remark}

Recall Definition~\ref{def: maximal contained branch} that introduces the maximal contained branch  of a transaction $\tx{x}$, written as $\branch^{(p)}_{\ledger}(\tx{x})$. Using Proposition~\ref{prop: recursive tructure of voting}, we relate the voting branch of block $x$ and  the maximal contained branch of transaction $\tx{x}$ in the following statement.

\begin{proposition}[Inheritance of branches]\label{prop:inheritance of branches}
Suppose a given block $x\in\blockset$ has block references to block $y_1,\ldots,y_s$ and transaction references to blocks $z_1,\ldots,z_r$ with $s+r=k$. A block $x$ inherits the union of the branches that are associated with these votes, i.e.  the voting branch can be decomposed as follows 
$$
\branch^{(p)}_{\votingset}(x) =  \branch^{(p)}_{\ledger}(x) \cup B_{\ledger}(x) \cup B_{\votingset}(x),
$$
where
\begin{align*}
B_{\ledger}(x)&:=\branch^{(p)}_{\ledger}(\tx{z}_1)\cup \ldots\cup \branch^{(p)}_{\ledger}(\tx{z}_r),\\
B_{\votingset}(x)&:=\branch^{(p)}_{\votingset}(y_1)\cup \ldots\cup \branch^{(p)}_{\votingset}(y_s),\\
\end{align*}
\end{proposition}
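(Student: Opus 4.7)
The plan is to derive the stated set identity by intersecting the recursive decomposition of the voting past cone (Proposition on the recursive structure of voting) with the set of conflicts $\conflictset$, and then identifying each piece with the corresponding branch. First I will note that, by the definition of the voting branch, we have
$$
\branch^{(p)}_{\votingset}(x) = \cone{p}{\votingset}{x}\cap\conflictset.
$$
Substituting the recursive formula
$$
\cone{p}{\votingset}{x} = x\cup \cone{p}{\ledger}{\tx{x}}\cup C_{\ledger}(x)\cup C_{\votingset}(x)
$$
and distributing the intersection over the union will give a sum of four terms to handle individually.

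Next I will handle each of the four terms. The term $\{x\}\cap \conflictset$ is empty because $x$ is a block and $\conflictset\subseteq\ledger$ consists of transactions. For the term $\cone{p}{\ledger}{\tx{x}}\cap\conflictset$, I will invoke the observation made after Definition of the maximal contained branch: the ledger past cone of any valid transaction is conflict-free, so the set of all conflicts in $\cone{p}{\ledger}{\tx{x}}$ is itself a branch, and maximality forces it to coincide with $\branch^{(p)}_{\ledger}(\tx{x})$. Applying the same reasoning pointwise to each $\tx{z}_j$ yields
$$
C_{\ledger}(x)\cap\conflictset = \bigcup_{j=1}^{r}\bigl(\cone{p}{\ledger}{\tx{z}_j}\cap\conflictset\bigr) = \bigcup_{j=1}^{r}\branch^{(p)}_{\ledger}(\tx{z}_j) = B_{\ledger}(x).
$$

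For the term involving $C_{\votingset}(x)$, I will similarly distribute the intersection over the union and apply the definition of the voting branch to each $y_\ell$:
$$
C_{\votingset}(x)\cap\conflictset = \bigcup_{\ell=1}^{s}\bigl(\cone{p}{\votingset}{y_\ell}\cap\conflictset\bigr) = \bigcup_{\ell=1}^{s}\branch^{(p)}_{\votingset}(y_\ell) = B_{\votingset}(x).
$$
Combining the four contributions yields the claimed decomposition. The only non-routine step is the identification in the ledger past cone of $\tx{x}$ (and the $\tx{z}_j$), where I must justify that the intersection with $\conflictset$ is actually maximal as a branch; this is where the conflict-freeness property of transaction past cones is essential, and it is the one spot where a careless intersection would fail to give a branch rather than merely an arbitrary set of conflicts.
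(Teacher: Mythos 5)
Your proposal is correct and follows exactly the route the paper indicates: the paper states this proposition as an immediate consequence of Proposition~\ref{prop: recursive tructure of voting} (intersecting the recursive decomposition of $\cone{p}{\votingset}{x}$ with $\conflictset$), and you supply precisely the details it leaves implicit, including the one genuinely needed observation that $\cone{p}{\ledger}{\tx{x}}\cap\conflictset$ equals $\branch^{(p)}_{\ledger}(\tx{x})$ because transaction past cones are conflict-free and past-closed. No gaps.
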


\begin{example}
We follow the same example as shown in Figure~\ref{fig: inheritance}. We assume the maximal contained branch of the transaction in block $y$ is the main branch, i.e. $B_{\tx{y}}=\emptyset$. Block $y$ votes for the transaction contained in block $x$ and, thus, inherits the branch $B_{\tx{x}}$. Since $B_{\tx{y}}=\emptyset$  the voting branch of block $y$ is $B_{\tx{x}}$. Similarly, block $y'$ votes for the block itself and inherits the voting branch $B_{x}$. We highlight that the branch of the transaction contained in block $y$ (and $y'$) is not affected by the choice of the vote.
\end{example}

We can associate a given block $x$ with a branch $B_\blockset=\branch^{(p)}_{\votingset}(x)$. Furthermore, the content of $x$, which is a transaction $\tx{x}$, also can be associated with a branch $B_\ledger=\branch^{(p)}_{\votingset}(\tx{x})$. Due to Lemma~\ref{lem: UTXO Tangle subgraph} we have that $B_\ledger \subseteq B_\blockset$.
Since a node may change its opinion about a conflict and vote for a conflicting transaction to $\tx{x}$, the vote is only valid from the point-of-view of the referencing block $y$. A later change of the node's vote is possible by issuing another block that votes for a conflicting transaction to $\tx{x}$. 

\begin{definition}[Change of vote and current vote]\label{def: change and current vote}
Let $\tx{x}$ be a transaction for which node $i\in\nodeset$ voted for. Let transaction $\tx{y}$ be conflicting with $\tx{x}$. If node $i$ votes for $\tx{y}$ after it voted for $\tx{x}$, node $i$ is no longer approving $\tx{x}$. We say $i$ \textit{revokes} its vote from $\tx{x}$. 
If $i$'s most recent vote is approving $\tx{x}$, i.e.  the vote is also not revoked, we say $i$'s \textit{current vote} is approving $\tx{x}$.
\end{definition}

\begin{remark}
The notion of ``time'' and its implications on the meaning of ``after'' in Definition \ref{def: change and current vote} are crucial. Natural choices are the timestamp inside a transaction or the solidification time of a block that contains a given transaction. 
\end{remark}

\begin{example}
The principle of Definition~\ref{def: change and current vote} is demonstrated in Figure~\ref{fig:octopus_example}. Specifically, transactions $\tx{x}$ and $\tx{y}$ are conflicting; block references are depicted with solid edges, whereas transaction references are depicted with dashed edges. Initially, brown and purple nodes voted for $\tx{x}$. However, after a while, green nodes have revoked their votes from $\tx{x}$ and their latest votes are approving $\tx{y}$. The supporters of $\tx{x}$ and $\tx{y}$ are shown in the top-right corners of blocks  for each of the two periods.
\end{example}

\subsection{Approval Weight and Confirmation Rule for Transactions}\label{sec: Approval Weight}

Nodes must be able to track the progress of the acceptance of a transaction. We extend the concepts of Witness Weight, introduced in Section~\ref{sec: witness weight}, to the \textit{Approval Weight} (AW) of transactions. The  objective  is then to define a parameterisable confirmation condition for transactions similar to the one discussed for blocks in Section \ref{sec: WW confirmation rule}.

\begin{definition}[Transaction  Supporters and Approval Weight]\label{def: AW and supporters}
Let  $\tx{x}\in\ledger$ be a transaction. 
Denote by $\supporter_{\ledger}(\tx{x})$ the set of nodes that has a current vote for $\tx{x}$. These nodes are called \textit{supporters} of $\tx{x}$.
We define the function $\AW: \ledger \to [0,1]$ which is called the \emph{Approval Weight} (AW) \begin{align}\label{eq: AW transaction}
    \AW (\tx{x}) :=& \sum_{j \in \supporter_{\ledger}(\tx{x})} \weight(j)
\end{align}
\end{definition}
Clearly, the AW describes the percentage of the network approving a given transaction. 

\begin{remark}
The WW of a block $x$ and the AW of its contained transaction $\tx{x}$ are related, but there is no ``monotonicity property''. 
If $\overline x$ is only contained in the block $x$ we have that $\AW(\tx{x})\leq \WW(x)$. If $\overline x$ is contained in more than one block~\footnote{A fund owner may request several nodes to broadcast a transaction $\tx{x}$, or a node may issue several different blocks with $\tx{x}$ contained.}, we can have that the AW of the transaction is even larger than the WW of each enveloping block.
\end{remark}

The supporter of transactions can be updated using propagation of the supporter information through the voting DAG. More precisely on arrival of a block $x$ we traverse $\cone{p}{\votingset}{x}$. We propose Algorithm~\ref{alg: update supporters} to update transactions supporters when a new block is processed. The AWs are then updated using Equation~\eqref{eq: AW transaction}.

Similar to Definition~\ref{def: confirmed block} we define the confirmation of a transaction. We will use subscripts $i$ and $t$ such as $\AW_{i,t}$ if we talk about the perception of the $i$th node at moment $t$.

\begin{definition} [Confirmed transaction]\label{def: confirmed transaction}
Let $\theta\in(0.5,1]$ be a fixed threshold. We say that a transaction $\tx{x}\in\ledger$ is \emph{confirmed} for a node $i\in\nodeset$ at time $t$ if  $\AW_{{i,s}} (\tx{x})\ge \theta$, for some $s\leq  t$.
\end{definition}

We also define the AW of a branch, which will form the base for the algorithm in the next section. The supporters of a branch are equal to the intersection of the supporters of the conflicts in the branch. More formally we have the following.
\begin{definition}[Branch Supporters and Approval Weight]\label{def: AW of branch}
Let $ B \in \branchset$ be a branch. 
We define $\supporterAW_{i,t}(B)$ to be the set of nodes that issued blocks that approve all conflicts in $B$. Similarly we define the AW for  $B$ as
\begin{align}\label{eq: AW Branch}
\AW  (B) :=& \sum_{j \in \supporterAW (B)} \weight(j).
\end{align}
We define the AW of the main branch, i.e. the empty set $\emptyset$, to be $1$.
\end{definition}

\begin{algorithm}[t]
\caption{Updating transaction supporters when new block arrives}
\label{alg: update supporters}
\KwData{Tangle DAG $\TangleDAG$,  Ledger DAG $\LedgerDAG$, new block $x$ issued by node $j$, $\{\supporter_{\ledger}(y)\}_{y\in\ledger}$}
\KwResult{updated $\{\supporter_{\ledger}(y)\}_{y\in\ledger}$}
\For{$\forall z\in\cone{p}{\votingset}{x} \cap \ledger$}{
$\supporter_{\ledger}(z)\gets\supporter_{\ledger}(z)\cup\{j\}$
}
\For{$\forall z\in\ledger$ that are
conflicting with $\cone{p}{\votingset}{x}$}{
$\supporter_{\ledger}(z)\gets\supporter_{\ledger}(z)\setminus\{j\}$
}
\end{algorithm}

\subsection{Tip Selection Algorithm}\label{sec: tip selection algorithms}

The consensus protocol relies substantially on an implicit voting mechanism. Nodes express their opinions and votes by choosing the references in their newly issued blocks. The process that determines the references is called the \textit{Tip Selection Algorithm} (TSA) and is discussed in this section.

With every block, a node can vote on  which parts of the Tangle and the Ledger DAG it prefers by  using block or transaction references.  
The preferred parts of the  Tangle and the Ledger DAG are defined by the preferred reality. 
Following the algorithm described in Section~\ref{sec: reality selection algorithm}, a node $i$ at moment $t$ keeps its preferred reality $R=R_{i,t}$ up to date.

We now describe a tip selection mechanism that considers both block and transaction votes. 
Note that due to Lemma~\ref{lem: UTXO Tangle subgraph} the Ledger DAG induces a partial order consistent with the one induced by the Tangle and, thus, voting on the Ledger DAG allows expressing a more selective, albeit less efficient vote than on the Tangle.

Let us define some reality-dependent tip sets on the Tangle DAG and the Ledger DAG.

Denote by  $\tips_{{\blockset}}(R)\subset \mathcal{T}$  the tips in the Tangle DAG whose Tangle past cones contain  only transactions in reality $R$. More precisely, for any $x\in \tips_{{\blockset}}(R)$, there is no $y\in\cone{p}{\blockset}{x}$  such that $\tx{y}\in\conflictset\setminus R$.

Denote by  $\tips_{{\ledger}}(R) \subset \ledger$ the tips in the Ledger DAG whose past cones contain conflicts from $R$ only. In other words, it holds that for any $\tx{x}\in \tips_{{\ledger}}(R)$, there is no $\tx{y}\in\cone{p}{\ledger}{\tx{x}}$ such that $\tx{y}\in\conflictset\setminus R$.
A node should apply the following tip selection and reference setting.

\begin{definition}[Uniform Random Tip Selection on a reality]
To issue a new block, node $i$ chooses tips  to approve uniformly at random from all tips in the Tangle DAG until $k$ references are created. For a randomly chosen tip, the node proceeds with the following steps:
\begin{enumerate}
\item if the selected block is in the set $\tips_{\blockset}(R)$, a block reference is created; 
\item otherwise, if the selected tip contains a transaction that is in the set $\tips_{\ledger}(R)$, a transaction reference is created; 
\item if neither of the above apply, the block is discarded instead. 
\end{enumerate}
We call this algorithm the \textit{Uniform Random Tip Selection Algorithm restricted on the reality $R$} (or $R$-URTS for short) and refer to Algorithm~\ref{alg: URTS} for a pseudo code.
\end{definition}
We refer to Appendix~\ref{sec: toy example}, where we demonstrate this algorithm as part of an illustrated example. 

A node may have voted previously for a branch that is no longer its preferred branch. It has therefore to change its vote. With the above tip selection nodes are allowed to vote for branches they previously did not ``prefer'' (by voting for a conflicting transaction) and vote ``against'' branches they previously voted for. Every node must therefore keep the supporters for each branch and their AW up to date. An important consequence is that the AW of certain branches may increase in time while for others it may decrease in time. 

\begin{algorithm}[t]
\caption{Uniform Random Tip Selection restricted on reality $R$}
\label{alg: URTS}
\KwData{Tangle DAG $\TangleDAG$,  Ledger DAG $\LedgerDAG$, preferred reality $R\in\branchset$,  number of references $k$} \KwResult{tips  $L_{\blockset} \cup L_{\ledger}$
}
$L_{\blockset}\gets \emptyset$\\
$L_{\ledger}\gets \emptyset$\\
$cnt\gets 0$\\
\While{$cnt<k$}{
Choose tip $x$ uniformly at random in $\TangleDAG$\\
Set $Q_{\mathV}$ to be conflicts contained in {$\cone{p}{\mathV}{x}$ }\\ 
Set $Q_{\ledger}$ to be conflicts in $\cone{p}{\ledger}{\tx{x}}$\\
\eIf{$Q_{\mathV}\subseteq R$}{
$cnt\gets cnt+1$\\
$L_{\blockset}\gets L_{\blockset}\cup \{x\}$
}{
\If{$Q_{\ledger}\subseteq R$}{
$cnt\gets cnt+1$\\
$L_{\ledger}\gets L_{\ledger}\cup \{x\}$
}
}
}
\end{algorithm}
The addition of the transaction vote demonstrates that solutions for the Tip Selection Algorithm can be found that mitigate or reduce liveness issues  and that transactions eventually will be considered for tip selection. Thus, in the following, we work under the following assumption.

\begin{ass}[Block inclusion]\label{ass: TSA}
Let $R$ be the preferred reality. The tip selection satisfies that for every transaction $\tx{x}\in\ledger(R)$ (see Definition~\ref{def: ledger of a reality})  we have that there is at least one element in $\cone{f}{\votingset}{x}$ that the tip selection algorithm can pick up. In particular, there exists at least one available tip in the union $\tips_{\blockset}(R)\cup \tips_{\ledger}(R).$
\end{ass}

We also refer to Section \ref{sec: outlook} for a more detailed discussion.

\section{Communication and Adversary Models}\label{sec: comm Adversary Model}

Before stating the security requirements of the protocol, we have to make assumptions about the underlying communication model. It is common to describe the uncertainty related to the communication by an attacker that controls the delays of the blocks.  The communication model defines the limits the adversary can delay the communication between the nodes.
As a model, it is only a simplification, but it allows a systematic study of the most critical components. 

For simplicity, we also analyse the voting mechanism without details such as the TSA. We want to emphasise that our modelling can also be applied to other consensus protocols, thus, providing a framework for comparing different DLTs.

\subsection{Communication Model}\label{sec:communicationModel}

The participating nodes communicate over a peer-to-peer (P2P) protocol or network. In this  P2P protocol, nodes send their signed blocks to their neighbouring peers. Neighbours forward blocks from other nodes in the overlay network only if they have verified its validity; if a transaction is invalid, the propagation stops. The transmission of a block between two nodes is done by sending a \emph{package} containing the block.

There are three basic (or classic) models for the P2P communication between the nodes: the synchronous model, the asynchronous model, and the partial synchronous model, e.g.  see~\cite{DLL:88} and~\cite{ChTo:96}.

In the \emph{synchronous model}, there exists some known finite time bound $\Delta$ by which an adversary can delay the delivery of a package. 
In the asynchronous model, an adversary can delay the delivery of a package by an unknown finite amount of time. There is no bound on the time to deliver a block but each package must eventually be delivered. In the partial synchronous model, we assume that there is some finite unknown upper bound $\Delta$ on block delivery. This bound is not known in advance and can be chosen by the adversary. 

A partially synchronous system can  be seen as  initially asynchronous that becomes eventually synchronous. The time at which the system becomes synchronous is called the Global Stabilisation Time (GST). 

We also consider a probabilistic synchronous model,  see~\cite{muller2021}. In this  model we assume that for every $\varepsilon > 0$ and $\delta \in [0,1]$, a proportion $\delta$ of the blocks is delivered within a bounded (and known) time $\Delta=\Delta(\varepsilon, \delta)$, that depends on $\varepsilon$ and $\delta$, with probability of at least $1-\varepsilon$. The probabilistic synchronous model is similar  to the asynchronous model with crash failure faults, see~\cite{ChTo:96}.

The specific implementations for a consensus mechanism depend heavily on the underlying synchronicity assumption. 
It also seems appropriate to distinguish between consensus protocols that find consensus on one data set and consensus protocols that find consensus on a growing number of decisions. The latter allows to ``strengthen the synchronicity'' between the nodes if the data are related by references.

\subsection{The Tangle, Solidification, and Synchronicity}\label{sec:solidicationSynchronicity}
The references that form the Tangle are essential for the consistency of information every node has. Consider that a package propagates to only part of the network, e.g. lost during some of the propagation processes on the communication layer. However, nodes that have received the block start building on it and gossip their blocks to the network. These new blocks contain references  to the partially missing block. Since nodes must know the past cone of any block to have a complete Tangle history from that blocks' point of view, we use a mechanism called the \textit{solidification process}. In this mechanism, nodes that receive a given block only process it if its past cone is complete or, otherwise, ask their peers  for the missing referenced block (for more details, see Section~\ref{sec: local tangles}).
In other words, the solidification process is a mechanism to recover lost blocks and, hence, strengthens the ``synchronicity'' of the communication model. We think that this, to some extent, supports the assumption that all blocks are delivered within a bound time $\Delta$ with high probability.

\subsection{Adversary Model}

We distinguish between three types of nodes: \textit{honest}, \textit{faulty}, and \textit{malicious}. Honest nodes follow the protocol, faulty nodes are not working properly (e.g. not sending any transactions), and malicious nodes are trying to disturb the protocol by not following the rules actively. In most scenarios, we assume that the malicious nodes are controlled by an abstract entity that we call the \textit{attacker}. We assume that the attackers are computationally limited and cannot break the signature schemes or the cryptographic hash functions involved.  However, we assume that the attacker is omniscient and ``knows immediately'' about all state changes of the honest nodes.

In classic consensus protocols, the communication model already covers the adversary behaviours, as delaying blocks is essentially the only way an attacker can influence the system. This is no longer true for our consensus protocol. Here, adversarial strategies can be divided into two main categories: attacks on the protocol level and attacks on the voting layer.

\subsection{Configuration Graph and Schedule}
How events in distributed systems are triggered depends on some external causes that are often referred to as the environment. We follow~\cite{BBBG:15} and model this environment using the abstraction of a scheduler.  

To this end, we consider a communication network on which all communications between the nodes are carried out. These networks are often referred to as P2P networks. We model them using a directed graph whose vertex set is the set of participating nodes. There is a directed edge from $i$ to $j$, if node $i$ can send packages directly to node $j$.
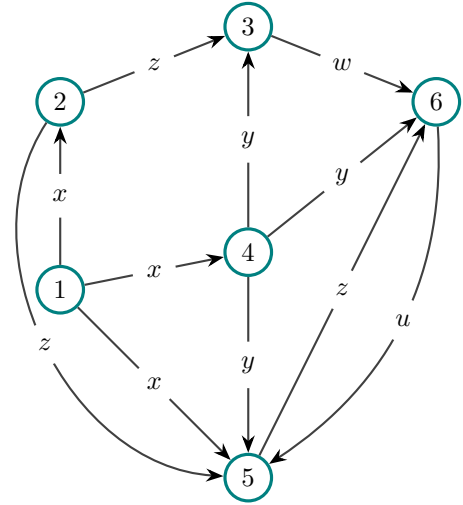
\begin{figure}[t]
    \centering
\begin{tikzpicture}
\begin{scope}[every node/.style={circle,very thick, draw=teal}]
    \node (1) at (0,0.5) {$1$};
    \node (2) at (0,3) {$2$};
    \node (3) at (2.5,4) {$3$};
    \node (4) at (2.5,1) {$4$};
    \node (5) at (2.5,-2) {$5$};
    \node (6) at (5,3) {$6$} ;
\end{scope}

\begin{scope}[>={Stealth[black]},
              every node/.style={fill=white,circle},
              every edge/.style={draw=darkgray,thick}]
    \path [->] (1) edge node {$x$} (2);
    \path [->] (2) edge node {$z$} (3);
    \path [->] (1) edge node {$x$} (4);
    \path [->] (4) edge node {$y$} (3);
    \path [->] (1) edge node {$x$} (5);
    \path [->] (4) edge node {$y$} (5);
    \path [->] (4) edge node {$y$} (6);
    \path [->] (3) edge node {$w$} (6);
    \path [->] (5) edge node {$z$} (6); 
    \path [->] (6) edge[bend left=30] node {$u$} (5); 
    \path [->] (2) edge[bend right=60] node {$z$} (5); 
\end{scope}
\end{tikzpicture}
\caption{Illustration of a network of 6 nodes. Packages $u,w,x,y,z$ are send along directed edges representing communication channels.}
\label{fig:network}
\end{figure}

We assume this graph to be connected. Along the directed edges of this graph \textit{packages} are exchanged by the nodes. In our case these packages contain blocks.

\begin{definition}[Packages and communication graph]\label{def:packages}
For each block $x$ sent from node $i$ to node $j$ we add a directed edge, called \emph{package}, from $i$ to $j$ that is labelled by the vector
\begin{equation}\label{eq:edgeCommunication}
e(x,i,j):=(x, i,j, t(x), \delta_{i,j}(x)).    
\end{equation}
This label indicates that package $x$ was sent at time $t(x)$ from node $i$ and arrives at node $j$ at time $t(x)+ \delta_{i,j}(x)$. The state space of all packages is denoted by $\M$. We dub the resulting graph the \emph{communication graph} $G$ of the protocol.
\end{definition}

Essentially a node $i$ does the following once it receives a package $e$ from node $j$. It checks if the block $x$ that is contained in the package $e$ was already treated. If this is the case, the node's status remains unchanged, and no new package is issued. If the block is new, the node checks its validity and adds the block to its local Tangle. If applicable, it updates the supporters of branches and conflicts and, if the transaction contained in the block is conflicting, adds a new conflict to the set of conflict. After this step, the node forwards the block in new packages to all its neighbours from which the node has not received the block. 

A node may also create blocks. Once it creates a block $x$, it attaches $x$ to its local Tangle. Then, it creates a package $e$ containing block $x$ and sends copies of this package to all of its neighbours.\footnote{In this example, we use a flooding protocol for the dissemination of blocks.}

\begin{example}
We illustrate the concept of networks and packages in Figure~\ref{fig:network}. In this figure, the network consists of six nodes. Directed edges exist between some of them and show the communication channels. We point out that these communication does not necessarily have to be symmetric. Packages containing blocks can be sent along the edges.
\end{example}

Every node keeps a local version of the Tangle $\LocalTangleDAG{i}$ that we consider as the (local) configuration $\omega_i$ of node $i$.  
For ease of presentation, we consider the following simplified version of the OTV that does not keep track of where the actual blocks are attached in the Tangle but only keeps track of the supporters of the branches or conflicts. The (local) state space is therefore given by $\Q = \underbrace{2^{\nodeset}\times \ldots \times 2^{\nodeset}}_{|\conflictset|}$, where $\conflictset$ is a fixed set of conflicts and $2^{\nodeset}$ is the set of all possible subsets of $\nodeset=\{1,\ldots, N\}$.   

\begin{remark}
The simplified version described above allows a more accessible analysis of the voting on conflicting transactions. This comes with the cost of not describing the confirmation of non-conflicting transactions. We give more details on the ``liveness'' of these transactions in Section~\ref{sec:livenessHonest}.
\end{remark}

We interpret the packages a node $i$ receives as input assignments with values in the space of all packages $\M$. Each input assignment $e$ yields an update of the current configuration, and each configuration $\omega_{i}$ leads to an output assignment. We therefore consider two functions 
\begin{equation*}
I(e, \omega_{i}) :  \M \times \Q \mapsto \Q, 
\end{equation*}
and 
\begin{equation*}
O(\omega_i) :  \Q \mapsto \M^{|\nodeset_{i}|-1},
\end{equation*}
where $\nodeset_{i}$ are the neighbours of node $i$ in the communication graph $G$.
A node $i$ runs an algorithm $\algo= (I, O)$ that reacts to incoming packages by  updating its internal state $\omega_i$ and eventual sending outgoing packages indicating its state update. 
We also consider the configuration of the whole system that takes values  $\overline{\omega}=\{\omega_1,..,\omega_N\}  \in \Q^{N}$. 
The corresponding algorithm is denoted by  $\overline{\algo}$.

The creation of blocks uses randomness (by design) through the TSA. Moreover, issuing times of blocks may depend on the interactions of the node with the environment of our system. For this reason, we model the time between two successive blocks of one given node by random variables. oreover, the latency between packages of two given nodes is described by random variables.
This randomness turns our protocol into a \emph{random} protocol, and the randomness is described by the probability measure $\P$. As we consider a simplified model and are only interested in the supporters of given branches, the randomness enters only in the ``time components'' of the packages or edges.  Consequently,  edges become random variables.

\begin{definition}[Configuration graph]
Let  $e$ be a package sent from $i$ to $j$ in $G$ and $\overline{\omega}, \overline{\omega}' \in \mathcal{Q}^N$ be  two (global) configurations. We write $\overline{\omega} \stackrel{e}{\rightarrow} \overline{\omega}'$ if and only if
\begin{equation*}
    \P(I(e, \omega_i) =  \omega'_i )>0
\end{equation*}
for some $i$
and, $\P(\omega'_l=\omega_l)=1$ for all $l\neq i$. 
We say that~$\overline{\omega}'$ is \emph{accessible} from $\overline{\omega}$ by $e$. The notation of accessibility defines a directed graph on the set of (global) configurations that we dub the \emph{configuration graph} of the algorithm. 
\end{definition}

\begin{definition}[Valid packages]
A package (or edge) $e$ from node $i$ to node $j$ is called \emph{valid} given a global configuration $\overline{\omega}$ if and only if 
\begin{equation*}
    \P( O(\omega_{i}) \ni e)>0.
\end{equation*}
In other words, any valid edge must be the outcome of the algorithm $\overline{\algo}$. A sequence of edges $e_1, e_2, \ldots$ is called \emph{valid} given an (initial) configuration $\overline{\omega}(0)$ if and only if  $e_1$ is valid given $\overline{\omega}(0)$ and inductively
$e_\ell$ is valid given $\overline{\omega}{(\ell-1)}$, where $\overline{\omega}{(\ell-1)}$ is such that $\overline{\omega}(\ell-2) \stackrel{e_{\ell-1}}{\rightarrow} \overline{\omega}(\ell-1)$. 
\end{definition}

In the following, we assume that honest nodes only issue valid packages.

\begin{definition}[Communication of configurations]
We say that the (global) configuration $\overline{\omega}'$ is \emph{accessible} from a configuration $\overline{\omega}$ if and only if there exists a finite valid path  from $\overline{\omega}$ to $\overline{\omega}'$ in the configuration graph. In this case, we write $\overline{\omega} \rightarrow \overline{\omega}'$ 
We define that a configuration is accessible from itself. Two configurations $\overline{\omega}$ and $\overline{\omega}'$ are said to \emph{communicate} if and only if they are accessible from each other. In this case, we write $\overline{\omega} \leftrightarrow \overline{\omega}'$. 
\end{definition}
 
The relation $\leftrightarrow$  defines an equivalence relation on the set of configurations. 
\begin{definition}[Communication classes]
The equivalence classes of the equivalent relation $\leftrightarrow$ are called the \emph{communication classes} of the set of configurations.  A (communication) class is called \emph{closed} if and only if it has no outgoing edges, and \emph{open} otherwise.
\end{definition}

The closed communication classes play a vital role as they describe the outcome of the protocol. Let $R\in\branchset$ be a reality. Then the configuration with $\supporter_{\ledger_i}(R)=\nodeset$ for all $i\in \nodeset$ is a closed class. Let us note here that we are still assuming all nodes to be honest and behave according to the protocol.

\begin{definition}[Consensus state]\label{def:consensusState}
A state $\omega$ is called a \emph{consensus state} if and only if
\begin{equation}
    \supporter_{\ledger_i}(R)=\nodeset, \quad \forall i\in \nodeset,
\end{equation}
for some reality $R$.
\end{definition}

\begin{remark}
Let us stress that the definition of ``consensus state'' is only about agreeing on the preferred reality. It does not take into account the meaning of confirmation; see Definition~\ref{def:TTC}. Liveness and safety with respect to confirmation are discussed in the following sections.
\end{remark}

We make a crucial assumption about the communication layer.

\begin{ass}[Random block issuance and package delay]\label{ass: random Package Delay}
Block issuances and package delays are random and satisfy:
\begin{enumerate}
    \item Nodes issue new blocks independently and distributed according to some probability distribution $\mu_{\mathrm{iss}}$. 
    \item The delays of packages between two nodes are independent and distributed according to some probability distribution $\mu_{\mathrm{pack}}$.
    \item Block issuances and package delays are independent.
    \item With a positive probability packages are delivered faster than new blocks are issued. More precisely, if $X\sim\mu_{\mathrm{iss}}$ and $Y\sim\mu_{\mathrm{pack}}$, then $\P(Y<X)>0$.
    \end{enumerate}
\end{ass}

\begin{lemma}\label{lem:consensusIsPossible}
Under Assumption~\ref{ass: random Package Delay}, 
for every given configuration $\overline{\omega}$ there exists a consensus state $\overline{\omega_c}$ such that $\overline{\omega} \stackrel{e}{\rightarrow}  \overline{\omega}_c$.
\end{lemma}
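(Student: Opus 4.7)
The plan is to exhibit, starting from an arbitrary configuration $\overline{\omega}$, a finite sequence of valid packages that drives the system into a consensus state; since the arrow in the lemma refers to accessibility in the configuration graph, it suffices to show that such a path exists (positivity of the transition probabilities is already built into the definition of an edge).

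First I would fix a target reality $R \in \branchset$. A convenient choice is a reality obtained by applying the reality selection procedure (Algorithm~\ref{alg:selectionConflictGraph}) to the union of the conflicts visible in $\overline{\omega}$, but any fixed $R$ will do. The goal is to reach a configuration $\overline{\omega}_c$ in which $\supporterAW_{i}(R) = \nodeset$ for every node $i \in \nodeset$, which by Definition~\ref{def:consensusState} is a consensus state.

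Second, enumerate the nodes as $1,\ldots,N$ and, for each $i$ in turn, append to the path two phases. In the \emph{issuance phase}, node $i$ is scheduled (via $\mu_{\mathrm{iss}}$) to issue a new block $x_i$; its references are chosen by the $R$-restricted URTS of Algorithm~\ref{alg: URTS}, which by Proposition~\ref{prop:inheritance of branches} guarantees $\branch^{(p)}_{\votingset}(x_i) \subseteq R$. In the subsequent \emph{delivery phase}, every outgoing package carrying $x_i$ is delivered to its destination (performing solidification requests along the way if required), all before the next issuance event anywhere in the network. This ordering is realisable with positive probability because Assumption~\ref{ass: random Package Delay}(4) provides $\P(Y<X)>0$ for the relevant delay/issuance pair, and parts (1)--(3) of that assumption grant the independence needed to chain such micro-events. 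After the delivery phase for $i$, every honest node $j$ has block $x_i$ in its local Tangle, so $i \in \supporterAW_{j}(R)$. Iterating over all $N$ nodes yields a configuration $\overline{\omega}_c$ with $\supporterAW_j(R)=\nodeset$ for each $j$.

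Third, I would check that each edge in the constructed path is valid in the sense of Definition of valid packages. Delivery edges are automatically valid since the packages were generated as outputs of a previous honest issuance, and the validity of issuance edges reduces to: \emph{does the $R$-URTS produce at least one reference at each step?} This is precisely the content of Assumption~\ref{ass: TSA} (block inclusion), which ensures that $\tips_{\blockset}(R) \cup \tips_{\ledger}(R) \neq \emptyset$ in every intermediate local Tangle. The main obstacle is therefore not combinatorial but hinges on this tip-availability assumption: it is what allows each of the $N$ issuance steps to be carried out without the honest node being blocked. Once Assumption~\ref{ass: TSA} is invoked, the rest of the argument is a direct bookkeeping of the configuration graph and an application of Lemma~\ref{lem: UTXO Tangle subgraph} to ensure that the votes propagate consistently through $\VotingDAG$.
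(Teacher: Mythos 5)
There is a genuine gap in your argument, and it concerns the validity of the issuance edges. You fix a target reality $R$ in advance (``any fixed $R$ will do'') and then schedule every node to issue a block voting for $R$ via the $R$-restricted URTS. But a package is valid only if it is a possible output of the honest algorithm given the node's \emph{current local state}, and an honest node votes for \emph{its own} preferred reality as computed by the reality-selection procedure from its local perception of the approval weights. The scheduler controls delays, not opinions: if node $i$'s local configuration makes it prefer some $R'\neq R$, then a block from $i$ supporting $R$ is not in the support of $O(\omega_i)$, the corresponding edge is invalid, and your path does not exist in the configuration graph. Your sequential delivery also changes each recipient's perceived AWs as it goes, so even the identity of each node's preferred reality is a moving target that your bookkeeping does not track.

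The paper closes exactly this gap by inserting a synchronisation phase \emph{before} any new issuance: with positive probability (Assumption~\ref{ass: random Package Delay}, item 4, chained using the independence in items 1--3) all in-flight packages are delivered while no new block is issued. After that, every node has the same perception of the supporters of every conflict, and since Algorithm~\ref{alg:selectionConflictGraph} is deterministic (ties broken by hash), every node computes the \emph{same} preferred reality $R$. Only then does each node issue a block expressing support for $R$ --- now a valid output of its algorithm --- and delivering these yields $\supporterAW_i(R)=\nodeset$ for all $i$. If you reorganise your proof so that $R$ is not chosen externally but emerges as the common output of the selection algorithm after this flush, your issuance/delivery phases go through. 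A secondary remark: the appeal to Assumption~\ref{ass: TSA} and tip availability is not needed here, because in this section the paper works with the simplified state space $\Q=2^{\nodeset}\times\cdots\times 2^{\nodeset}$ that records only supporter sets of conflicts, so the combinatorics of the Tangle and of tip selection are outside the model.
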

\begin{proof}
Let $\overline{\omega}$ be a configuration. 
We wait until all existing packages and corresponding changes of votes are sent to all other nodes. During this time, no new block is issued with a positive (non-zero) probability. 
After every node has seen all current blocks every node has the same perception of the supporters of the different realities.  In other words, nodes agree on the AWs of the different branches. Now, every node changes its opinion to its preferred reality, issues transactions indicating their change of vote, and gossips them using packages on the communication graph. Once all these packages are seen by all nodes a consensus state is reached. 
\end{proof}

There are two immediate consequences of Lemma~\ref{lem:consensusIsPossible}.
\begin{cor}
Under Assumption~\ref{ass: random Package Delay}, a communication class is closed if and only if it consists of one consensus state.
\end{cor}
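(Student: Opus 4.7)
The plan is to prove both implications of the corollary by first isolating the key structural fact that a consensus state is \emph{absorbing} in the simplified configuration graph, i.e.~it has no outgoing transitions to any distinct configuration. Once this is established, both directions follow almost immediately from Lemma~\ref{lem:consensusIsPossible}.

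For the implication ($\Leftarrow$), suppose the class is a singleton $\{\overline{\omega}_c\}$ where $\overline{\omega}_c$ is a consensus state for some reality $R$. By Definition~\ref{def:consensusState} every node $i$ has $\supporter_{\ledger_i}(R)=\nodeset$, and since any conflict outside $R$ must be conflicting with some element of $R$ (because $R$ is a maximal branch), the ``current vote'' semantics of Definition~\ref{def: change and current vote} forces $\supporter_{\ledger_i}(c)=\emptyset$ for every conflict $c\notin R$. I would then check that any honest node issuing a new block must select $R$ as its preferred reality: Algorithm~\ref{alg:selectionConflictGraph} applied to an AW vector equal to $1$ on branches of $R$ and $0$ elsewhere returns exactly $R$. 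Consequently the $R$-URTS rule (Algorithm~\ref{alg: URTS}) only produces references whose voting past cones lie in $R$, and the update rule (Algorithm~\ref{alg: update supporters}) neither adds supporters to branches in $R$ (they are already $\nodeset$) nor revokes any from branches outside $R$ (they are already empty). Hence every valid transition from $\overline{\omega}_c$ returns to $\overline{\omega}_c$ itself, so no edge leaves $\{\overline{\omega}_c\}$ and the class is closed.

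For the implication ($\Rightarrow$), let $\mathcal{C}$ be a closed communication class and pick any $\overline{\omega}\in\mathcal{C}$. By Lemma~\ref{lem:consensusIsPossible} there exists a consensus state $\overline{\omega}_c$ with $\overline{\omega}\to\overline{\omega}_c$, and closedness of $\mathcal{C}$ forces $\overline{\omega}_c\in\mathcal{C}$. The absorbing argument above shows that the only configuration accessible from $\overline{\omega}_c$ is $\overline{\omega}_c$ itself. Since the relation $\leftrightarrow$ requires mutual accessibility, no other configuration can be $\leftrightarrow$-equivalent to $\overline{\omega}_c$; therefore the equivalence class of $\overline{\omega}_c$ is exactly $\{\overline{\omega}_c\}$, and so $\mathcal{C}=\{\overline{\omega}_c\}$.

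The main obstacle is the absorbing property itself. Although intuitively clear, it must be justified end-to-end through the protocol stack under Assumption~\ref{ass: random Package Delay}: reality selection returns $R$ only because every AW value is degenerate, the $R$-URTS then avoids any tip whose voting past cone escapes $R$, and the supporter-update routine makes no net change because the relevant supporter sets are already at their extreme values. Ruling out every spurious grant or revocation of support is the one place where a careless proof could fail, and it is where I would spend most of the effort.
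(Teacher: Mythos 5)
Your proof is correct and follows the route the paper intends: the corollary is stated there without proof as an ``immediate consequence'' of Lemma~\ref{lem:consensusIsPossible}, and your argument---first establishing that a consensus state is absorbing in the simplified configuration graph, then combining this with the lemma for both implications---is precisely the natural way to make that immediacy rigorous. Your attention to the conflicts outside $R$ (their supporter sets must be empty, by maximality of the reality together with the current-vote semantics of Definition~\ref{def: change and current vote}) correctly closes the one gap a careless reading of Definition~\ref{def:consensusState} would leave open.
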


\begin{cor}
Under Assumption~\ref{ass: random Package Delay} (and in absence of an adversary), the protocol  converges ($\P$-almost surely)  to a consensus state.
\end{cor}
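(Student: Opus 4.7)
The plan is to combine Lemma~\ref{lem:consensusIsPossible} with the finiteness of the state space and the strong Markov property to obtain a standard Borel--Cantelli type absorption argument. First I would observe that the global state space $\Q^N$ is finite, since $\nodeset$ and $\conflictset$ are finite, so $|\Q|=(2^{|\nodeset|})^{|\conflictset|}$ and hence $|\Q^N|<\infty$. Let $\mathcal{S}\subseteq \Q^N$ denote the set of consensus states as in Definition~\ref{def:consensusState}. By the preceding Corollary, each element of $\mathcal{S}$ forms a singleton closed communication class, so once the process enters $\mathcal{S}$ it remains there forever.

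Next I would fix a time horizon $T>0$ large enough to accommodate the sequence of events used in the proof of Lemma~\ref{lem:consensusIsPossible}: a quiet window in which all currently in-flight packages are delivered, each node then issues a single vote-update block switching to its preferred reality, and the induced packages propagate along the communication graph. By items (1)--(3) of Assumption~\ref{ass: random Package Delay} these events are mutually independent; by item (4) the delay distributions put positive mass on realisations in which packages arrive before any new block is issued, so each required realisation has positive probability within the window of length $T$. Therefore
\begin{equation*}
p_T(\overline{\omega}) \;:=\; \P\bigl( X_T \in \mathcal{S} \,\big|\, X_0=\overline{\omega}\bigr) \;>\; 0
\end{equation*}
for every $\overline{\omega}\in \Q^N\setminus\mathcal{S}$, and since the state space is finite the infimum
\begin{equation*}
p \;:=\; \min_{\overline{\omega}\notin \mathcal{S}} p_T(\overline{\omega})
\end{equation*}
is attained and strictly positive.

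Applying the Markov property at the times $T,2T,\ldots$ then gives
\begin{equation*}
\P\bigl( X_{nT}\notin \mathcal{S}\bigr) \;\le\; (1-p)^{n} \;\xrightarrow[n\to\infty]{}\; 0,
\end{equation*}
so the process almost surely enters $\mathcal{S}$ in finite time, and by absorption the claim follows.

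The main obstacle is justifying the uniform lower bound $p>0$: one must be careful because the delays are continuous random variables and the specific sequence of events required depends on the current configuration and on the set of packages currently in flight. This is exactly where item (4) of Assumption~\ref{ass: random Package Delay} is essential, as it guarantees with positive probability a ``quiet'' realisation in which pending packages are delivered before any new block is issued, uniformly over the finitely many configurations. Once this uniformity is established, the remainder is a textbook absorption argument for finite-state Markov chains.
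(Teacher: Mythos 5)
Your proposal is correct and matches the paper's intent: the paper states this corollary without proof as an ``immediate consequence'' of Lemma~\ref{lem:consensusIsPossible} together with the characterisation of closed communication classes as consensus states, and your finite-state absorption argument (positive probability of reaching a consensus state within a fixed window, uniformly over configurations, hence geometric decay of the non-consensus probability) is exactly the standard way to make that implication rigorous. The one subtlety you rightly flag yourself --- that the bound must be uniform over the a.s.\ finite but unbounded set of in-flight packages, which requires item (4) of Assumption~\ref{ass: random Package Delay} applied to the maximum of the pending delays versus the minimum of the next issuance times --- is glossed over by the paper as well, so your write-up is if anything more careful than the original.
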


\begin{definition}[Schedule]
A \emph{schedule} on {the communication graph} $G$ is a sequence of (finite or infinite) valid edges $e_{1}, e_{2},\ldots$. 
A (finite or infinite) \emph{execution} of a sequence of edges $e_{1}, e_{2},\ldots$ by $\overline{\algo}$ on $G$ is a sequence of configurations $\overline{\omega}({0}) \stackrel{e_{1}}{\rightarrow} \overline{\omega}{(1)} \stackrel{e_{2}}{\rightarrow} \cdots$, where $\overline{\omega}{(0)}$ is the initial (global) configuration. 
\end{definition}

The above definitions can naturally extend to models that distinguish between honest and adversary nodes. We assume that adversary nodes do not have to follow the algorithm~$\algo$ but can produce messaging voting for non-preferred realities. On the communication level, adversary nodes may be more potent than honest nodes, i.e.  issuing blocks more frequently, and may delay the relaying of honest packages. Nevertheless, we assume that Assumption \ref{ass: random Package Delay} holds for all honest and malicious nodes. We say that the protocol reaches a consensus state if all honest nodes eventually prefer the same reality. Let us denote by $\nodeset_h$ and $\nodeset_a$ the set of honest and malicious nodes.  In analogy to the above, we obtain the following result.

\begin{theorem}[{Eventual consistency} - random blocks]\label{thm: security Random Blocks}
Assume Assumption~\ref{ass: random Package Delay} to hold for the blocks and packages of honest and malicious nodes and let $q$ be the weight of the adversary. Then, all honest nodes will ($\P$-almost surely) eventually prefer the same reality if $q<1/2$.
\end{theorem}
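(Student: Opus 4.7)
The plan is to extend Lemma~\ref{lem:consensusIsPossible} to the adversarial setting by combining a positive-probability ``good event'' with a Borel--Cantelli argument. Since the adversary has weight $q<1/2$, the honest nodes collectively carry weight $1-q>1/2$; this is the crucial asymmetry. The high-level idea is that from any reachable configuration, there is a bounded-time window in which, with probability bounded below by some $p>0$, all honest nodes become aligned on a common preferred reality, and this alignment is preserved thereafter.

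First I would define the good event over a window $[t_0,t_0+T]$ consisting of three phases: (a) all packages currently in transit are delivered before any new block is issued; (b) a silence phase in which no node issues a block; (c) every honest node issues one new block (voting for its then-common preferred reality $R^{*}$ under the $R^{*}$-URTS), and these blocks are delivered before any further block is issued. By the independence of issuances and delays and the hypothesis $\P(Y<X)>0$ in Assumption~\ref{ass: random Package Delay}, each phase has positive probability and, jointly, the three phases occur with probability at least some $p>0$. After phase~(b) every honest node has the same view of $\conflictset$ and of $\AW(c)$ for every $c\in\conflictset$; running the deterministic Algorithm~\ref{alg:selectionConflictGraph} (with $\min\hash$ tie-breaking) then yields the same preferred reality $R^{*}$ at every honest node. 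Phase~(c) adds honest votes for $R^{*}$, so afterwards $\supporterAW(R^{*})\supseteq\nodeset_h$ and hence $\AW(R^{*})\ge 1-q>1/2$, while any conflicting reality receives at most $q<1/2$ of weight from adversarial supporters visible to honest nodes. Consequently the output of Algorithm~\ref{alg:selectionConflictGraph} remains $R^{*}$ for every honest node from this moment on, regardless of adversarial behaviour.

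Finally, I would iterate: define stopping times $\tau_1<\tau_2<\cdots$ at which fresh good-event windows begin and apply the conditional Borel--Cantelli lemma to the sequence of trials, each with conditional success probability at least $p$. This yields $\P$-almost surely that the good event occurs at some finite time, after which all honest nodes share the same preferred reality. The main obstacle will be making the uniform lower bound $p$ rigorous across all adversarial strategies and histories: the adversary may issue blocks and arrange its schedule arbitrarily, so one must argue that the desired pattern of deliveries, silences, and honest issuances can be realised with probability bounded away from zero under any such strategy. This relies on modelling issuance times and delays as independent random variables whose laws the adversary cannot influence, together with the fact that the number of packages in transit at any fixed time is $\P$-almost surely finite --- so a finite intersection of positive-probability events is again positive.
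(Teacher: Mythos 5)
Your proposal is correct and follows essentially the same route as the paper: the paper's proof likewise extends Lemma~\ref{lem:consensusIsPossible} by considering the positive-probability event that the adversary neither issues blocks nor delays honest packages during a finite window, after which all honest nodes align on one reality, and concludes by the accessibility-of-a-closed-class argument (your conditional Borel--Cantelli iteration is the same mechanism made explicit). Your added detail that the aligned state is absorbing because the honest weight $1-q>1/2$ dominates any adversarial support is exactly the observation the paper uses to identify $\supporter_{\ledger_i}(R)\supset\nodeset_h$ as a consensus state.
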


\begin{proof}
Since $q<1/2$ a consensus state is reached if all honest nodes have the same preferred reality and all nodes know about it, i.e. 
\begin{equation}
    \supporter_{\ledger_i}(R) \supset \nodeset_h, \quad \forall i\in \nodeset_h,
\end{equation}
for some reality $R$. We have to prove that for every given configuration $\omega$ there exists an available consensus state. This is proven similar to Lemma \ref{lem:consensusIsPossible} together with the situation where an adversary is neither issuing a block nor can delay the honest packages,  which occurs with a positive probability under Assumption \ref{ass: random Package Delay}.
\end{proof}

\section{Liveness and Safety}\label{sec: security}

In the previous section, we were interested in the eventual convergence and proved an optimal result in Theorem \ref{thm: security Random Blocks} under the assumption of random blocks issuance and random package delay. This section adds  the confirmation status of transactions into our considerations. We  divide security into liveness and safety to allow a more detailed and quantitative analysis.

From a general point of view, liveness means that eventually, good things will happen, and safety means that nothing wrong will ever happen. In our situation, this translates into the following.
The safety condition is that any two honest nodes should always reach an agreement and that this decision satisfies the specified validity conditions. Furthermore, no two nodes should ever confirm conflicting transactions. The liveness property is that each honest node should eventually make a  decision on the confirmation status of a transaction, i.e.  in our case all nodes reach the confirmation threshold $\theta$, see Definitions~\ref{def: confirmed block} and \ref{def: confirmed transaction}, eventually.  

\begin{remark}
In general, one requires in addition that the consensus protocol satisfies integrity.  Integrity requires that the eventual outcome of the consensus protocol was initially proposed by at least one node. Since in OTV honest nodes always pick a maximal branch, the integrity property is satisfied once the protocol terminates. 
\end{remark}

\subsection{Non-Conflicting Transactions}\label{sec:livenessHonest}

Liveness of a non-conflicting transaction is the property that it will eventually be included in the ledger state. In the strongest form, it means that every non-conflicting transaction will be confirmed, see Definitions~\ref{def: confirmed block} and \ref{def: confirmed transaction}.  
Therefore, the security threshold for liveness is at most a proportion $(1-\theta)$ of the weight, as an attacker or faulty nodes holding a proportion $(1-\theta)$ can stop the confirmation by not issuing any blocks anymore. 

Liveness is inherently linked with the TSA and the orphanage problem. 
We assume the following Assumption on the TSA\footnote{Formally it is more a requirement on the definition of ``tips''.} and we refer to Section~\ref{sec: liveness of transactions} for a discussion.

\begin{proposition}[Liveness and safety  of non-conflicting transactions]\label{prop:Liveness}
We assume in the asynchronous model that the tip pool size is stationary, and that Assumption~\ref{ass: TSA} is satisfied.  The weight of the malicious nodes is $q.$ Then, eventually every non-conflicting transaction is confirmed for all honest nodes if $q < 1- \theta$. 
\end{proposition}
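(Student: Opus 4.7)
My plan is to exploit the monotonicity of the supporter set for non-conflicting transactions, together with Assumption~\ref{ass: TSA}, to argue that every honest node eventually becomes a supporter of $\tx{x}$ and that this alone suffices to exceed the threshold $\theta$ under the bound $q < 1-\theta$. The first step is the simple observation that, for a non-conflicting transaction $\tx{x}$, no honest node ever revokes a vote for $\tx{x}$: by Definition~\ref{def: change and current vote} revocation requires issuing a block whose contained transaction is in direct conflict with $\tx{x}$, but by hypothesis no such conflicting transaction exists. Hence, along every honest local view, the intersection of $\supporter_{\ledger}(\tx{x})$ with the set of honest nodes is non-decreasing in time.

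Next I would argue that each honest node $i$ almost surely eventually issues a block whose voting past cone contains $\tx{x}$. Because $\tx{x}$ is non-conflicting it lies in every honest node's preferred reality $R$; by Assumption~\ref{ass: TSA}, the tip set $\tips_{\blockset}(R)\cup\tips_{\ledger}(R)$ always contains at least one element in $\cone{f}{\votingset}{x}$. Combined with the stationarity of the tip pool and the uniform random selection of Algorithm~\ref{alg: URTS}, this should furnish a positive lower bound, uniform in time, on the probability that a freshly issued block by $i$ picks a tip that induces a vote for $\tx{x}$. Since by Assumption~\ref{ass:issuingRate} each honest node issues infinitely many blocks almost surely, a Borel--Cantelli argument then yields that node $i$ almost surely eventually issues such a supporting block.

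The asynchronous communication model guarantees eventual delivery of every package; hence every honest node $j$ eventually observes, in its local view $\blockset_{j,t}$, a block from each honest issuer voting for $\tx{x}$. By the first paragraph these supporters are never lost, so once all honest confirming blocks have arrived at $j$ we have $\AW_{j,t}(\tx{x}) \ge \sum_{i \in \nodeset_h}\weight(i) = 1-q > \theta$, which exceeds the threshold of Definition~\ref{def: confirmed transaction} and delivers liveness. Safety for a non-conflicting transaction is then automatic: the absence of any conflicting transaction means there is nothing whose simultaneous confirmation could violate safety, and since the supporter set of $\tx{x}$ is monotone non-decreasing, the confirmation status, once reached at an honest node, is preserved forever.

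The main obstacle is making precise the positive lower bound on the probability that a freshly issued block ends up with $\tx{x}$ in its voting past cone. A naive reading of Assumption~\ref{ass: TSA} only guarantees the existence of such a tip, not that the TSA selects it with non-negligible probability; pinning this down requires the stationary regime of Assumption~\ref{ass:constantWidth} and control of $|\tips_{\blockset}(R)\cap \cone{f}{\blockset}{x}|$ relative to the total tip count, ideally via heuristics of the type used in Section~\ref{sec: heuristics WW}. In the asynchronous setting additional care is needed because arbitrary but finite package delays can temporarily hide honest blocks from some nodes, so one must argue on the eventual (limiting) state of each local Tangle rather than on any particular finite-time snapshot, and verify that the Borel--Cantelli conclusion survives the node-dependent delays.
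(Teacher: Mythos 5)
Your proposal is correct and follows essentially the same route as the paper's proof: a positive, time-uniform lower bound on the probability that a fresh block votes for $\tx{x}$ (from Assumption~\ref{ass: TSA} and the stationary tip pool), a Borel--Cantelli argument to show each honest node eventually votes for $\tx{x}$, and finiteness of the node set plus the weight bound $1-q>\theta$ to conclude. Your additional observations (monotonicity of the honest supporter set for non-conflicting transactions, and the caveat that Assumption~\ref{ass: TSA} only asserts existence of a suitable tip rather than a quantitative selection probability) make explicit two points the paper's proof leaves implicit, but they do not change the argument.
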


\begin{proof} 
Let $x$ be a block containing a non-conflicting transaction $\tx{x}$ and consider an arbitrary honest node $i$. Each time this node issues a new block, the probability that it refers to (and votes for) $x$ is positive, due to Assumption~\ref{ass: TSA}.  At this point, it is important to have the second type of reference that allows to only  vote for the transaction $\tx{x}$ and not the whole Tangle past cone of $x$.
Let us denote by $p_j$ this last probability for the $j$th issued block. Then, due to the assumption on the stationarity of the tip pool size, there exists some $\varepsilon>0$ such that $p_j \geq \varepsilon$ for infinitely many indices $j$. Assumption~\ref{ass: TSA} guarantees the independence of these events, and the Lemma of Borel-Cantelli implies that node $i$ eventually votes for block $x$. Then, since the number of nodes is finite, all nodes eventually vote for $x$.
\end{proof}

Some discussion on the validity of the stationary tip pool size assumption is appropriate. This kind of assumption was also made throughout Section~\ref{sec:TTC} as Assumption~\ref{ass:constantWidth}.
Let us review this assumption in the light of the communication and the adversary model. 

An attacker can delay blocks with honest transactions such that the network delay $h$ increases. This, in turn, will inflate the tip pool size and the time to confirmation~\cite{penzkofer2021}. In the asynchronous model, this could lead to memory overflow of the nodes or halt confirmation of certain transactions.
While this attack is theoretically possible in this model, it is more of a theoretical interest than a practical issue. We also want to note here that nodes do have an efficient way to ``synchronize'' their perceptions of the Tangle due to the solidification process; see Section~\ref{sec:solidicationSynchronicity}. 

On the Tangle layer, the ``worst-case scenario'' seems to be the following. The adversary issues blocks, referencing already referenced blocks, not removing any tips from the tip pool. Under the assumption that nodes can issue blocks proportionally to their weight, we obtain that $q/(1-q)$ malicious blocks are issued for each honest block. Honest nodes can increase the number of references to keep the Tangle width stationary. More precisely, it is sufficient that the honest nodes' blocks, on average, remove $K:=(q/(1-q))+1$ tips. In other words, we can choose the number of references $k>K$ to guarantee robustness against this attack. For instance, $q=1/2$ leads to $K=2$.

\subsection{Conflicting Transactions}

Theoretical results on the liveness and safety of conflicting transactions rely heavily on the assumptions of the underlying communication and adversary model. Moreover, the analysis of the OTV protocol is complex: it requires modeling of the networking part, modeling of the weight distribution, and various (even an infinite number of) adversarial strategies. The following section shows that an adversary can hinder consensus finding in specific situations or edge cases. However, we want to emphasize that this interference only influences the liveness of conflicting transactions and that an appropriate TSA guarantees liveness of non-conflicting transactions; see Proposition ~\ref{prop:Liveness}. In Section~\ref{sec: SOTV}, we add a feature to the protocol that allows us to obtain theoretical results on the liveness of conflicting transactions.

\section{Impossibility Results and Metastability}\label{sec: impossibility Metastability}

Impossibility results play an essential role in the theory of consensus protocols, as they emphasize the limitations and critical edge cases. The most famous impossibility result is the FLP-result, \cite{FLP}, which states that achieving consensus in the asynchronous communication model is in general impossible for deterministic protocols.
From a general point of view, this impossibility is due to the possible delay of packages in the P2P communication and the resulting ``symmetric'' situation that hinders consensus finding.

We will consider the situation of two or more directly conflicting transactions. It is the role of the consensus mechanism to reach an agreement on which transaction should eventually be accepted. One may consider that keeping conflicting transactions in an undecided state, i.e.  violating the liveness, is acceptable. However, this is problematic for several reasons. For example, if nodes keep transactions indefinitely undecided, this could drastically inflate the communication required on the voting layer and prevent the pruning capability of the ledger. Transactions that are undecided for a long time can also harm safety. There is always a chance that some node confirms an ``undecided'' transaction. While  the probability of this event might be small, it is still positive, and hence this unlikely event will happen at some point in time.
We also note that simply rejecting malicious transactions does not provide a solution since this would allow delayed cancellation of transactions, thus, violating the system's safety.

In this section, we give examples where the liveness and safety of conflicting transactions are not satisfied; more complicated examples can be constructed following the same principles. They constitute an impossibility result in the sense that the proposed protocol does not guarantee liveness or safety under the asynchronous communication model. 
These situations rely on strong assumptions about the attackers. We distinguish between attacks on the communication level and those on the voting level. By requiring both levels we give a theoretical result when safety cannot be guaranteed, Lemma \ref{lem: negative result safety}.

\subsection{Communication Level}
We start with an example where an attacker does not take part directly in the voting but only controls the schedule of the honest nodes' blocks. Let us point out that the attacker does not need to control any weight in this scenario.

The first adversary attack is dubbed a \emph{metastability} attack since it tries to keep the honest nodes in an undecided situation. We refer to~\cite{PoMu:21} for more details and analysis of these kinds of attacks. On a conceptual level, these kinds of attacks exploit a situation where the system is kept in a roughly symmetric condition between two incompatible options. Once the symmetric scenario is broken, nodes likely converge quickly on one of the options.

\begin{example}[Metastability Attack I]\label{ex:meta1}
We consider $N=4$ participating nodes $\{1,2,3,4\}$ that communicate directly; the communication graph is the complete graph with four vertices. We assume that every node has the same weight, i.e.  $m_i=1/4$ for all $i\in \nodeset$. We consider the scenario of a simple double spend. The set of conflicts is, therefore, $\{\tx{x}, \tx{y}\}$. We assume for the sake of simplicity that a node prefers its own opinion if both conflicts have $50\%$ of AW.
Nodes $1$ and $2$ starts with an initial like of conflict $\tx{x}$ and nodes  $3$ and $4$ prefer $\tx{y}$. At the time $t_0$, every node $i$ communicates its vote to each of its neighbors by attaching a block $x_i$. The attacker delays these blocks (more precisely, the corresponding packages) either by some $\delta>0$ or $\gamma>\delta$. 
More precisely, we have the following edges, as defined in Equation (\ref{eq:edgeCommunication}), in our communication graph: 
\begin{equation*}
    (x_i, i,3, t_0, \delta),  (x_1, i,4, t_0, \delta), \quad i\in\{1,2\},
\end{equation*}
\begin{equation*}
    (x_j, j,1, t_0, \delta),  (x_j, j,2, t_0, \delta), \quad j\in\{3,4\},
\end{equation*}
and
\begin{align*}
    (x_1, 1,2, t_0, \gamma), &  (x_2, 2,1, t_0, \gamma), \cr (x_3, 3,4, t_0, \gamma), & (x_4, 4,3, t_0, \gamma).
\end{align*}
At time $\gamma$ this schedule leads to an inversion of the preferred conflicts, see Figure~\ref{fig:ex:meta1}. An attacker that controls the communication level could therefore delay consensus finding arbitrarily. 
To make the description of the former attacker more formal, we must specify the assumption on the issuance of block and the communication model.   For instance, in the synchronous model, with a known upper bound $\Delta$ on the network delay, such an attack is successful if the $\delta,\gamma < \Delta$ and the honest nodes issue blocks periodically.  In the asynchronous setting, an attacker can adjust the delays $\delta$ and $\gamma$ even if the honest nodes do not continuously issue their transactions simultaneously. 
\end{example}

\begin{figure}
    \centering
    \definecolor{megalightgray}{RGB}{244, 244, 244}
\definecolor{mygray}{RGB}{240, 240, 240}
\definecolor{myblue}{RGB}{102, 178, 255}
\definecolor{myred}{RGB}{255, 102, 102}
\definecolor{myorange}{RGB}{255, 178, 102}
\tikzstyle{rounded_block}=[draw, rectangle, thick, minimum height=\heightBlock cm, minimum width = \widthBlock cm, text centered, rounded corners, draw=darkgray, font = \small]

\begin{tikzpicture}[use Hobby shortcut, scale = 0.9]
\def\xCoordinate{0.0}
\def\yCoordinate{0.0}
\def\yAdd{1.5}
\def\xAdd{0.75}
\def\innerSepar{-1.5}
\def\lineWidth{1.5}
\def\roundedCorners{1}
\def\opacityInternal{0.5}
\def\minHeight{20}
\def\fracyAdd{1/6}
\def\lineWidthBelow{3}
\def\minWidth{20}
\def\scaleFactorSupp{0.12}
\def\minWidthCM{\minWidth*0.0352778*3/4}
\tikzstyle{block}=[draw, rectangle, minimum height=\minHeight pt, minimum width = \minWidth pt, text centered, rounded corners=\roundedCorners pt, draw=darkgray, font=\large]
\tikzstyle{block_colored}=[draw, rectangle, minimum height= 2 pt, minimum width = \minWidth * 5 / 6 pt, rounded corners=\roundedCorners  pt, draw=darkgray]
\def\arrowStyle{-latex}
\tikzstyle{block_circle}=[draw, circle, minimum height= 0.6 cm,  draw=teal]
\newcommand{\localRecord}[6]{
\node [rounded_block, fill=white, minimum height=0.4 cm, minimum width = 1.5cm] at (#5,#6) {} ;
\node[block_circle, thick, minimum size= 1 pt,  fill = #1] at (#5-0.5,#6) {};
\node[block_circle, thick, minimum height= 0.08, fill = #2] at (#5-0.17,#6) {};
\node[block_circle, thick, minimum height= 0.08, fill = #3] at (#5+0.17,#6) {};
\node[block_circle, thick, minimum height= 0.08, fill = #4] at (#5+0.5,#6) {};
}
\node [rounded_block, fill=megalightgray, minimum height=0.75 cm, minimum width = 8cm] at (0,0) {} ;

\node[block_circle, very thick, fill = myorange]  at (-3.6,0) {};
\localRecord{myorange}{white}{white}{white}{-3.6}{-0.75}

\node[block_circle, very thick, fill = myorange]  at (-1.2,0) {};
\localRecord{white}{myorange}{white}{white}{-1.2}{-0.75}

\node[block_circle, very thick, fill = myblue] at (1.2,0) {};
\localRecord{white}{white}{myblue}{white}{1.2}{-0.75}

\node[block_circle, very thick, fill = myblue] at (3.6,0) {};
\localRecord{white}{white}{white}{myblue}{3.6}{-0.75}

\node [rounded_block, fill=megalightgray, minimum height=0.75 cm, minimum width = 8cm] at (0,-2) {} ;

\node[block_circle, very thick, fill = myblue] at (-3.6,-2) {};
\localRecord{myorange}{white}{myblue}{myblue}{-3.6}{-2.75}

\node[block_circle, very thick, fill = myblue]  at (-1.2,-2) {};
\localRecord{white}{myorange}{myblue}{myblue}{-1.2}{-2.75}

\node[block_circle, very thick, fill = myorange] at (1.2,-2) {};
\localRecord{myorange}{myorange}{myblue}{white}{1.2}{-2.75}

\node[block_circle, very thick, fill = myorange] at (3.6,-2) {};
\localRecord{myorange}{myorange}{white}{myblue}{3.6}{-2.75}

\node [rounded_block, fill=megalightgray, minimum height=0.75 cm, minimum width = 8cm] at (0,-4) {} ;

\node[block_circle, very thick, fill = myblue] at (-3.6,-4) {};
\localRecord{myorange}{myorange}{myblue}{myblue}{-3.6}{-4.75}

\node[block_circle, very thick, fill = myblue]  at (-1.2,-4) {};
\localRecord{myorange}{myorange}{myblue}{myblue}{-1.2}{-4.75}

\node[block_circle, very thick, fill = myorange] at (1.2,-4) {};
\localRecord{myorange}{myorange}{myblue}{myblue}{1.2}{-4.75}

\node[block_circle, very thick, fill = myorange] at (3.6,-4) {};
\localRecord{myorange}{myorange}{myblue}{myblue}{3.6}{-4.75}

\node [rounded_block, fill=megalightgray, minimum height=0.75 cm, minimum width = 1.2cm] at (-3.6,-6) {} ;
\node[block_circle, very thick, fill = none] at (-3.6,-6) {};
\node[font = \small] at (-1.6,-6) {\begin{tabular}{l}
    Current opinion \\
    of a node
\end{tabular}};

\localRecord{white}{white}{white}{white}{1.2}{-6}
\node[font = \small] at (3.2,-6) {\begin{tabular}{l}
    Local record \\
    of votes
\end{tabular}};

\node [single arrow,top  color=red, bottom color=violet,
single arrow head extend=3pt,transform shape, opacity = 0.4, minimum height=85, text opacity=1, rotate = 270, anchor=west] 
at (4.9, 0.5){$\delta$};
\node [single arrow,top  color=violet, bottom color=blue,
single arrow head extend=3pt,transform shape, opacity = 0.4, minimum height=55, text opacity=1, rotate = 270, anchor=west] 
at (4.9, -2.5){$\gamma-\delta$};

\end{tikzpicture}
    \caption{Illustration of Example~\ref{ex:meta1}.  Nodes are voting for transaction $\tx{x}$ (blue) or $\tx{y}$ (orange). Each node ends up with the opposite opinion it started with, thus, creating a deadlock.}
    \label{fig:ex:meta1}
\end{figure}

\begin{remark}
The situation described above is undoubtedly a special case and mainly of theoretical interest. However, it raises the question under which conditions such schedules exist and how realistic they appear in real applications.
\end{remark}

\subsection{Voting Level}
In this section, we describe situations, where an attacker can successfully interfere in the consensus finding by using the voting layer. We do not need conditions to control communication between honest nodes but relatively strong assumptions about the adversary's ability to issue new blocks and reliably forward them to the honest nodes.

\begin{example}[Metastability Attack II]\label{ex:meta2}
We again consider the situation of one double spend, i.e.  a set of conflicts $\{\tx{x}, \tx{y}\}$. In this attack, the adversary votes for the minority, i.e.  the conflict that has less AW. The attack is supposed not to influence the communication layer, and we work under the assumption of the synchronous communication model. 
We assume that the propagation of blocks happens fast, i.e.  each block causes a state update in all other nodes. Furthermore, we assume that the adversary can issue at a high rate, such that for every other honest node's block, the adversary can issue a block.

We consider an even number $N_h$ of honest nodes and three malicious nodes, and where each node holds the same weight. 
We say if a node votes for $\tx{x}$ or $\tx{y}$ it is in set $X$ and $Y$, respectively. The protocol starts with $\sfrac{1}{2}N_h$ honest nodes initially voting for $\tx{x}$ and $\sfrac{1}{2} N_h$ honest nodes voting for $\tx{y}$. We refer to Figure~\ref{fig:ex:meta2} for an illustration.
Next, the adversary votes for $\tx{x}$ (with all three nodes), resulting in a vote of $|X|/|Y| = ( \sfrac{1}{2}N_h+3) / ( \sfrac{1}{2}N_h)$. Nodes in $X$ will continue to vote in favour of $\tx{x}$. 
On the other hand, an honest node in $Y$ will eventually change its vote and issue a transaction in favor of $\tx{x}$, thus, changing from set $Y$ to $X$. Now, before any other honest nodes can express their vote, the attacker switches its vote to $\tx{y}$. Hence, in total we have $|X|/|Y| = ( \sfrac{1}{2}N_h + 1) / ( \sfrac{1}{2}N_h+2)$. 
Honest nodes will now vote for $\tx{y}$. However, as soon as a node from $X$ changes its vote, the resulting situation is symmetric to the initial condition. Thus, the adversary can repeat this ad infinitum.

\end{example}
\begin{figure}
    \centering
    \definecolor{megalightgray}{RGB}{244, 244, 244}
\definecolor{mygray}{RGB}{240, 240, 240}
\definecolor{myblue}{RGB}{102, 178, 255}
\definecolor{myred}{RGB}{255, 102, 102}
\definecolor{myorange}{RGB}{255, 178, 102}
\tikzstyle{rounded_block}=[draw, rectangle, thick, minimum height=\heightBlock cm, minimum width = \widthBlock cm, text centered, rounded corners, draw=darkgray, font = \small]

\begin{tikzpicture}[use Hobby shortcut, scale = 0.9]
\def\xCoordinate{0.0}
\def\yCoordinate{0.0}
\def\yAdd{1.5}
\def\xAdd{0.75}
\def\innerSepar{-1.5}
\def\lineWidth{1.5}
\def\roundedCorners{1}
\def\opacityInternal{0.5}
\def\minHeight{20}
\def\fracyAdd{1/6}
\def\lineWidthBelow{3}
\def\minWidth{20}
\def\scaleFactorSupp{0.12}
\def\minWidthCM{\minWidth*0.0352778*3/4}
\tikzstyle{block}=[draw, rectangle, minimum height=\minHeight pt, minimum width = \minWidth pt, text centered, rounded corners=\roundedCorners pt, draw=darkgray, font=\large]
\tikzstyle{block_colored}=[draw, rectangle, minimum height= 2 pt, minimum width = \minWidth * 5 / 6 pt, rounded corners=\roundedCorners  pt, draw=darkgray]
\def\arrowStyle{-latex}
\tikzstyle{block_circle}=[draw, circle, minimum height= 0.6 cm,  draw=teal]

\node [rounded_block, fill=megalightgray, minimum height=0.75 cm, minimum width = 8cm] at (0,0) {} ;

\node[block_circle, very thick, fill = myorange] at (-3.6,0) {};
\node[block_circle, very thick, fill = myorange] at (-2.8,0) {};
\node[block_circle, very thick, fill = myorange] at (-2.0,0) {};
\node[block_circle, very thick, fill = myblue] (node_1_4) at (-1.2,0) {};
\node[block_circle, very thick, fill = myblue] at (-0.4,0) {};
\node[block_circle, very thick, fill = myblue] at (0.4,0) {};

\node[block_circle, very thick, fill = myorange] at (2,0) {};
\node[block_circle, very thick, fill = myorange] at (2.8,0) {};
\node[block_circle, very thick, fill = myorange] at (3.6,0) {};

\node [rounded_block, fill=megalightgray, minimum height=0.75 cm, minimum width = 8cm] at (0,-1.5) {} ;

\node[block_circle,  thick, top color=white, bottom color=myorange, dashed] at (-3.6,-1.5) {};
\node[block_circle,  thick, top color=white, bottom color=myorange, dashed] at (-2.8,-1.5) {};
\node[block_circle,  thick, top color=white, bottom color=myorange, dashed] at (-2.0,-1.5) {};
\node[block_circle, very thick, fill = myorange] (node_2_4) at (-1.2,-1.5) {};
\draw[\arrowStyle, very thick] (node_1_4) -- (node_2_4); 
\node[block_circle,  thick, top color=white, bottom color=myblue, dashed] at (-0.4,-1.5) {};
\node[block_circle,  thick, top color=white, bottom color=myblue, dashed] at (0.4,-1.5) {};

\node[block_circle,  thick, top color=white, bottom color=myorange, dashed] (node_2_7) at (2,-1.5) {};
\node[block_circle,  thick, top color=white, bottom color=myorange, dashed] (node_2_8) at (2.8,-1.5) {};
\node[block_circle,  thick, top color=white, bottom color=myorange, dashed] (node_2_9) at (3.6,-1.5) {};

\node [rounded_block, fill=megalightgray, minimum height=0.75 cm, minimum width = 8cm] at (0,-3) {} ;

\node[block_circle,  thick, top color=white, bottom color=myorange, dashed] at (-3.6,-3) {};
\node[block_circle,  thick, top color=white, bottom color=myorange, dashed] at (-2.8,-3) {};
\node[block_circle,  thick, top color=white, bottom color=myorange, dashed] (node_3_3)  at (-2.0,-3) {};
\node[block_circle, thick, top color=white, bottom color=myorange, dashed] at (-1.2,-3) {};
\node[block_circle,  thick, top color=white, bottom color=myblue, dashed] at (-0.4,-3) {};
\node[block_circle,  thick, top color=white, bottom color=myblue, dashed] at (0.4,-3) {};

\node[block_circle,  very thick, fill = myblue] (node_3_7) at (2,-3) {};
\node[block_circle,  very thick, fill = myblue] (node_3_8) at (2.8,-3) {};
\node[block_circle,  very thick, fill = myblue] (node_3_9) at (3.6,-3) {};
\draw[\arrowStyle, very thick] (node_2_7) -- (node_3_7); 
\draw[\arrowStyle, very thick] (node_2_8) -- (node_3_8); 
\draw[\arrowStyle, very thick] (node_2_9) -- (node_3_9);

\node [rounded_block, fill=megalightgray, minimum height=0.75 cm, minimum width = 8cm] at (0,-4.5) {} ;

\node[block_circle,  thick, top color=white, bottom color=myorange, dashed] at (-3.6,-4.5) {};
\node[block_circle,  thick, top color=white, bottom color=myorange, dashed] at (-2.8,-4.5) {};
\node[block_circle,  very thick, fill = myblue] (node_4_3)  at (-2.0,-4.5) {};
\draw[\arrowStyle, very thick] (node_3_3) -- (node_4_3);
\node[block_circle, thick, top color=white, bottom color=myorange, dashed] at (-1.2,-4.5) {};
\node[block_circle,  thick, top color=white, bottom color=myblue, dashed] at (-0.4,-4.5) {};
\node[block_circle,  thick, top color=white, bottom color=myblue, dashed] at (0.4,-4.5) {};

\node[block_circle,  thick, top color = white, bottom color = myblue, dashed]  at (2,-4.5) {};
\node[block_circle,  thick, top color = white, bottom color = myblue, dashed] at (2.8,-4.5) {};
\node[block_circle,  thick, top color = white, bottom color = myblue, dashed] at (3.6,-4.5) {};

\node [single arrow,top  color=red, bottom color=blue,
single arrow head extend=3pt,transform shape, opacity = 0.4, minimum height=155pt, text opacity=1, rotate = 270, anchor=west] 
at (4.9, 0.5){time
};
\node[font = \small] at (-1.6, 1) {\textbf{Honest nodes}};
\node[font = \small] at (2.8, 1) {\textbf{Malicious nodes}};
\end{tikzpicture}
    \caption{Illustration of Example~\ref{ex:meta2}. Nodes are voting for transaction $\tx{x}$ (blue) or $\tx{y}$ (orange). }
    \label{fig:ex:meta2}
\end{figure}
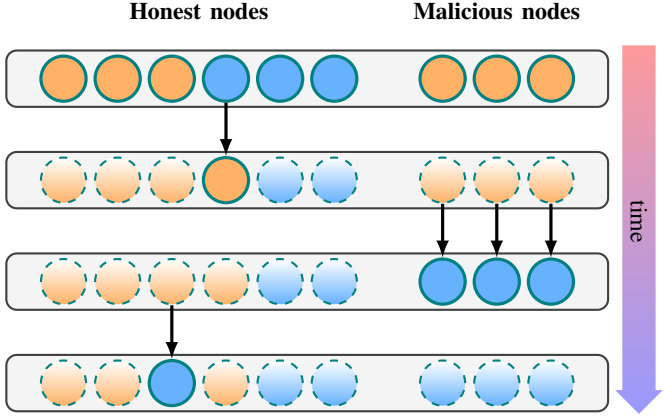

\begin{remark}
We want to note that in Example~\ref{ex:meta2} the attacker heavily relies on the capability of an adversary to immediately adapt its opinion before more than $2$ honest nodes changed their vote to the majority. 
\end{remark}

The next example, the \emph{Bait-and-Switch Attack}, depends less on the adversaries issuance rate but requires a higher amount of weight.

\begin{example}[Bait-and-Switch Attack]\label{ex:BaS}
We consider a situation where the adversary possesses the node with the highest weight. The strategy is to switch frequently the opinions such that the honest nodes are constantly `` chasing the ever-changing heaviest branch''. 
For example, consider $N_h$ honest nodes with total weight $w_h$ and individual weight $w_h/N_h$ and one adversary node with weight $w_a$. 
Let $n_{cr}$ be the largest natural number such that
$$
n_{cr} \cdot \frac{w_h}{N_h} < w_a.
$$
In the beginning, the malicious node spends an output in a transaction $\tx{x}_1$. Then, before $n_{cr}$ nodes with a total weight of less than $w_A$ express their vote, the adversary spends the same output in transaction $\tx{x}_2$, i.e.  creates a conflicting transaction with $\tx{x}_1$, and (implicitly) votes for the new transaction $\tx{x}_2$. Since $\tx{x}_2$ becomes the heaviest branch, all honest nodes will vote for this transaction. The adversary repeats this procedure by creating additional double spends repetitively.
\end{example}

\subsection{Communication and Voting Level}
In the previous sections, we presented examples of how an adversary can harm the liveness of conflicting transactions. The attacker strategies required either substantial control of the communication layer or a high issuance rate combined with considerable weight.
In this section, we prove an impossibility result for safety that involves an attack strategy that uses both levels. 

\begin{definition}[Broken safety]\label{def: broken safety}
We say that \emph{safety is broken} if and only if there exist two honest nodes $i,j$ and conflicting transactions $\tx{x}$ and $\tx{y}$ such that  for some times $s,t$ we have
$$
\AW_{{i,t}}(\tx{x}) > \theta \mbox{ and } \AW_{{j,s}}(\tx{y}) > \theta.
$$
\end{definition}

We have the following ``negative'' result.

\begin{lemma}\label{lem: negative result safety}
Let $q > \theta -0.5$ be the weight of the adversary. Assume that the weight of the honest nodes is equally distributed on sufficiently many honest nodes.
Then, there exists an adversary strategy that breaks safety.
\end{lemma}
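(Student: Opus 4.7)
The plan is to exhibit an explicit adversary schedule inside the configuration graph of Section~\ref{sec: comm Adversary Model} that drives two honest nodes to confirm a pair of directly conflicting transactions. The adversary first issues a double-spend producing $\tx{x}$ and $\tx{y}$ and, exploiting its Byzantine freedom at the voting layer, equivocates by publishing two contradictory blocks, each casting its weight $q$ behind a different side. Because delays are unbounded in the asynchronous model, the adversary partitions the honest nodes into two perceptual groups $H_x$ and $H_y$: in the view of $H_x$ only the $\tx{x}$-supporting vote of the adversary has been delivered, whereas in the view of $H_y$ only the $\tx{y}$-supporting vote has arrived. Honest nodes then follow their local heaviest-branch rule, so $H_x$ votes for $\tx{x}$ and $H_y$ for $\tx{y}$.

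By continuing to delay cross-group traffic, the adversary arranges a time $t$ at which a chosen honest node $i \in H_x$ observes $\AW_{i,t}(\tx{x}) = w(H_x^{(i)}) + q \ge \theta$, thereby irreversibly confirming $\tx{x}$ (Definition~\ref{def: confirmed transaction}), and a time $s$ at which a chosen honest node $j \in H_y$ observes $\AW_{j,s}(\tx{y}) = w(H_y^{(j)}) + q \ge \theta$, confirming $\tx{y}$. The even-distribution hypothesis ensures that the partition can be chosen so that $w(H_x^{(i)})$ and $w(H_y^{(j)})$ are both arbitrarily close to $(1-q)/2$, and that no individual honest node is heavy enough to destroy this balance. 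Once both confirmations have occurred, the definition of broken safety in Definition~\ref{def: broken safety} is satisfied, which is exactly what the lemma claims.

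The main obstacle is the numerical accounting. A naive split of the $1-q$ honest weight into two equal parts yields $\AW = (1-q)/2 + q = (1+q)/2$, which only reaches $\theta$ when $q \ge 2\theta - 1$. To tighten this to the claimed $q > \theta - 1/2$, the adversary must additionally exploit the temporal dimension: some honest nodes in $H_x$ first vote for $\tx{x}$ (visibly to $i$) and then, after seeing the adversary's equivocating $\tx{y}$-vote in the absence of other $\tx{x}$-votes in their own local view, switch to $\tx{y}$ (visibly to $j$). Such a switch is locally justified by each node's heaviest-branch rule precisely because, in the asynchronous and uniform-weight regime, an isolated honest node sees only its own tiny weight $w_k \ll q$ supporting $\tx{x}$ against the full $q$ supporting $\tx{y}$. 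The same honest weight is then counted into both Approval Weights at disjoint instants, which is what permits the sharper threshold. Certifying that every local step of this cascade is a valid edge in the configuration graph---so that the resulting schedule is admissible and not merely hypothetical---is the delicate part of the argument, and is where the uniform weight hypothesis is used.
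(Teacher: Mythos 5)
Your attack architecture is the right one, and your third paragraph contains the paper's essential idea: the same honest weight is counted toward $\AW(\tx{x})$ at one instant and toward $\AW(\tx{y})$ at a later instant, which is exactly what beats the naive symmetric bound $q \ge 2\theta-1$. But the proof is not closed. You announce that ``the numerical accounting'' is the main obstacle and then never carry it out, so the hypothesis $q > \theta - \tfrac12$ is never actually used anywhere in your argument. The missing step is the following choice, which is the whole content of the paper's proof: pick a subgroup $X$ of honest nodes whose total weight $w^*$ lies in the open interval $(\theta-q,\ \tfrac12)$. This interval is nonempty precisely when $q > \theta - \tfrac12$, and the equidistribution over sufficiently many honest nodes is what guarantees a subgroup hitting it. The lower endpoint yields the first confirmation ($w^* + q > \theta$ at a node of $X$ once the adversary casts its vote for $\tx{x}$), and the upper endpoint is what makes the subsequent defection protocol-compliant at the \emph{group} level: after the adversary revokes its vote and reveals $Y$'s blocks to $X$, every node of $X$ sees $\AW(\tx{x}) = w^* < \tfrac12 < 1-w^* \le \AW(\tx{y})$, so the reality-selection rule forces it to switch, and $\AW(\tx{y})$ then climbs past $\theta$ for everyone.

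The mechanism you substitute for this accounting --- isolating each honest voter so that it sees only its own weight $w_k$ against the adversary's $q$ and is thereby induced to switch --- is where the gap shows. As stated, that cascade makes no reference to $\theta$ at all: if it were admissible exactly as written, it would break safety for every $q$ exceeding the largest single honest weight, and the fact that your argument nowhere needs $q > \theta-\tfrac12$ should have been the warning sign that the hypothesis has not been located. You also leave unverified the schedule it requires (all of $H_x$'s votes delivered to $i$ while withheld from one another, then a coordinated wave of switches delivered to $j$), which you yourself flag as ``the delicate part.'' Doing the bookkeeping at the group level instead, as above, removes the need for per-node isolation entirely and makes the role of $q > \theta - \tfrac12$ explicit.
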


\begin{proof}
Let us choose a number of honest nodes $N_h$ sufficiently large such that there exists some $N_h^*<N_h$ such that
\[
\frac{\theta -q}{1-q} < \frac{N_h^*}{N_h} < \frac{0.5}{1-q}.
\]
An attacker starts issuing two conflicting transactions $\tx{x}$ and $\tx{y}$.
The attacker decomposes the honest nodes into two groups $X$ and $Y$ such that each of these groups forms a connected subgraph of the underlying communication layer, while the attacker is connected to both groups. Group $X$ consists of $N_h^*$ nodes and group $Y$ of $N_h-N_h^*$ nodes.
The attacker interferes with the schedule such that nodes in each group only receive blocks from their group. The attacker changes the schedule such that the nodes in $X$ receive transaction~$\tx{x}$ before $\tx{y}$ and the nodes in $Y$ receive~$\tx{y}$ before~$\tx{x}$. All honest nodes prepare their initial statement of their preferred transaction ($\tx{x}$ for group  $X$ and $\tx{y}$ for group $Y$) and send them to their neighbours.  

The attacker sends to $X$ blocks that state that it prefers $\tx{x}$.
As a consequence, nodes from $X$ 
confirm transaction~$\tx{x}$ since $\AW(\tx{x})=(1-q)\frac{N_h^*}{N_h} + q > \theta$. 

After this, the attacker sends blocks to $Y$ (and $X$) that it votes now for transaction~$\tx{y}$. Without the vote of the attacker for transaction $\tx{x}$ the AW of $\tx{x}$ in $X$  reduces to  $\AW(\tx{x})=(1-q)\frac{N_h^*}{N_h} < 0.5$.

Next, the attacker lets $X$ know about the preferences of $Y$. At this point $\AW(\tx{y})>\AW(\tx{x})$  
and as a consequence nodes from $X$ update their preferred reality and vote for $\tx{y}$. This  eventually leads to $\AW(\tx{y})>\theta$ for all nodes. As by Definition~\ref{def: broken safety} safety is broken.
\end{proof}

The above proof indicates that the attacker needs very strong control over the communication layer to conduct such an attack. Nevertheless, it gives a reasonable theoretical security threshold for the protocol's safety. All the more since we can prove safety under the  assumption $q < \theta -0.5$ in Section \ref{sec: SOTV}.

\subsection{Realistic Conditions}

The above examples illustrate that the \emph{two dimensions}, namely the communication and voting level, may interact either in favor of the attacker or in favor of the robustness of the protocol. In all cases, it seems that the attacker needs excellent control of the communication layer of the protocol. Randomness or uncertainty on the communication layer may interfere with the adversary strategy and finally lead to convergence of the honest nodes' opinions. 

We conjecture that these strong assumptions are not met in most reasonable real-world scenarios and that the attacks that rely solely on the communication level are hard to perform in practice.

With a completely random schedule of packages, the system will eventually converge to a consensus state in situations where an attacker controls not more than half of the total weight,  see Theorem~\ref{thm: security Random Blocks}. However, this convergence time can be \emph{impracticably long} for real-world applications and it is possible that safety (for the confirmation) can be broken as shown by Lemma~\ref{lem: negative result safety}.
The theoretical treatment of the inherent randomness of real-world implementation systems is at best in an early state, and a quantification or even its control seems currently out of reach. We refer to~\cite{BBBG:15} for a theoretical approach to describe the entropy related to the scheduling of the transactions. 

The following section proposes a more sophisticated variation that allows a more straightforward theoretical treatment and provides the ``optimal'' safety thresholds.

\section{Synchronized Random Reality Selection}\label{sec: SOTV}

In the previous section, we demonstrated that under several conditions, the protocol presented so far might lead to situations where nodes cannot come to an agreement between several valid options. This section offers a mechanism to overcome this scenario by utilising external randomness. As shown in~\cite{POPOV2021, Capossele2021, PoMu:21} common randomness can successfully navigate a system away from such an undesired situation.

Pre-consensus classes are those classes from which the network reaches a consensus eventually.
The aim of the design of the consensus protocol is, therefore, to construct the protocol so that its global state reaches such a pre-consensus state fast and that from there, the actual consensus state is inevitable.

The OTV is an asynchronous protocol and comes with advantages and disadvantages. One disadvantage is the lack of synchronization possibilities between nodes that could be used against adversarial attacks on the communication level. The arguments and examples in the previous section showed that it is theoretically possible for an attacker to keep the honest nodes in an undecided situation for a long time. To exclude these cases and obtain theoretical results, we use a distributed random number generation (dRNG) process to synchronize the nodes and interfere with a possible adversary. 

We choose a parameter $\dFPCS$ describing the length of epochs between synchronizations times. In other words, once in every $\dFPCS$ time units, we synchronize the nodes with the help of a given dRNG process. This procedure is inspired by the paper~\cite{POPOV2021}, where a dRNG is used to construct a voting-based consensus protocol in a Byzantine environment. The dRNG  allows the consensus protocol to reach a pre-consensus state with a positive (non-zero) probability. This probability is uniform in the opinions and votes of the nodes, and hence, the protocol enters a pre-consensus class in a geometrically distributed number of periods of length $\dFPCS$. In the last step, we then prove that consensus is reached from the pre-consensus state.

We consider a system of $N=N_h+N_a$ nodes with $N_h$ honest nodes and $N_a$ adversarial nodes. The  honest nodes are identified with the set $\nodeset_h =\{1,\ldots, N_h\}$ and the adversarial nodes with $\nodeset_a = \{ N_h +1 , \ldots, N_h+N_a\}$.

We start with stating our model assumptions.

\begin{ass}\label{ass: Synchronised Random Reality Selection} 
We make the following assumptions:
\begin{enumerate}[label=\ref{ass: Synchronised Random Reality Selection}.\arabic*]
\item Every block from an honest node is received by another honest node during time $\ddRNG=\ddRNG(\varepsilon)$ with probability of at least $1-\varepsilon$. The constant $\varepsilon>0$ can be chosen arbitrarily small.
The events for each block are independent of each other.
\item  The adversary controls a proportion $q$ of the weight. The adversary might have an influence on the schedule of the blocks to the extent of~\ref{ass: Synchronised Random Reality Selection}.1.
\item  The set of conflicts $\conflictset$ is fixed and does not vary in time. All nodes perceive the same $\conflictset$.
\item There exists a dRNG that publishes a random variable every $\dFPCS$ unit of times. 
The random variable is uniformly distributed on the interval $[0.5, \theta]$, where $\theta$ is the confirmation threshold; see Section~\ref{sec: WW confirmation rule}. This value is received (independently) by every given node before time $\ddRNG$ (in every epoch) with a probability of at least $1-\varepsilon$. 
\item Honest nodes of cumulative weight of at least $\theta$ issue blocks expressing support for their preferred reality\footnote{In other words, for every conflict $c$  in the preferred reality the node issues at least one block stating that the node votes for this conflict $c$ and  it doesn't issue any block stating that the node votes for a transaction conflicting with $c$.} at least every $\dFPCS/2$ time units with a probability of at least $1-\varepsilon$. 
\end{enumerate}
\end{ass}

Let us comment on the validity of the above assumptions. Assumption~\ref{ass: Synchronised Random Reality Selection}.1  is essentially a probabilistic synchronicity assumption. The fact that the probability $\varepsilon$ can be chosen arbitrarily small is supported by the fact that votes are blocks in the Tangle that can be re-broadcast or obtained by solidification requests; see Section~\ref{sec:solidicationSynchronicity}. The independence assumption is essential and the study of correlated errors is out of the scope of this paper. Assumption~\ref{ass: Synchronised Random Reality Selection}.2 is natural in a probabilistic synchronous model. Assumption~\ref{ass: Synchronised Random Reality Selection}.3 is essentially for ease of presentation.  As nodes will consider only conflicts of a certain age, older than $\dFPCS$,  Assumption~\ref{ass: Synchronised Random Reality Selection}.1 ensures that nodes already have the same perception of the sets of conflicts with a very high probability. 
Assumption~\ref{ass: Synchronised Random Reality Selection}.4 was used in previous work,~\cite{POPOV2021, Capossele2021, PoMu:21}. A sequence of such common random numbers can be either provided by an external source or generated
by the nodes of the system themselves; 
see e.g.~\cite{cascudo2017scrape, Lenstra_Wes17, popov2017decentralized, 
schindlerhydrand, syta2017scalable, wes}.
Let us stress that it is necessary that the randomness of the dRNG is not predictable and obtained in each epoch by the majority of the weight with a positive probability. However, we do not require that all honest nodes agree on this random number.\footnote{The idea is that a ``weak consensus'' on the randomness of the dRNG leads to an eventual ``strong consensus'' on the ledger state.}
The last Assumption~\ref{ass: Synchronised Random Reality Selection}.5 is an (almost) necessary condition to ensure that transaction  have a chance to be confirmed.

\begin{figure}[h!]
    \centering
    \begin{tikzpicture}
        \draw (\linewidth*0.0,20pt-90pt) -- (\linewidth*0.9,20pt-90pt);
        \draw[dashed] (\linewidth*0.9,20pt-90pt) -- (\linewidth*0.95,20pt-90pt);
        \draw (\linewidth*0.0,25pt-90pt) -- (\linewidth*0.0,15pt-90pt) node[anchor=north, scale=0.6] {$0$};
        \draw (\linewidth*0.9,23pt-90pt) -- (\linewidth*0.9,17pt-90pt) node[anchor=north, scale=0.6] {$3\dFPCS$};
        \draw (\linewidth*0.6,23pt-90pt) -- (\linewidth*0.6,17pt-90pt) node[anchor=north, scale=0.6] {$2\dFPCS$};
        \draw (\linewidth*0.65,23pt-90pt) -- (\linewidth*0.65,17pt-90pt) node[anchor=south, yshift=1em, scale=0.6] {$2 \dFPCS+ \ddRNG$};
         \draw (\linewidth*0.3,23pt-90pt) -- (\linewidth*0.3,17pt-90pt) node[anchor=north, scale=0.6] {$\dFPCS$};
          \draw (\linewidth*0.35,23pt-90pt) -- (\linewidth*0.35,17pt-90pt) node[anchor=south, scale=0.6, yshift=1em] {$\dFPCS+ \ddRNG$};
        \draw[opacity=0.3, line width=5pt, red] (\linewidth*0.0,20pt-90pt) -- (\linewidth*0.3,20pt-90pt);
        \draw [decorate,decoration={brace,amplitude=5pt,raise=1.5ex}]
  (\linewidth*0.0,20pt-90pt) -- (\linewidth*0.3,20pt-90pt) node[midway,yshift=2em, scale=0.6, text width=\linewidth*0.45]{first accumulation of AW - active voting};
        \draw[opacity=0.3, line width=5pt, green] (\linewidth*0.35,20pt-90pt) -- (\linewidth*0.6,20pt-90pt);
        \draw [decorate,decoration={brace, mirror, amplitude=5pt,raise=3.5ex}]
  (\linewidth*0.35,20pt-90pt) -- (\linewidth*0.6,20pt-90pt) node[midway,yshift=-3em, scale=0.6, text width=\linewidth*0.45]{first synchronization - one change of vote per node };
        \draw[opacity=0.3, line width=5pt, green] (\linewidth*0.65,20pt-90pt) -- (\linewidth*0.9,20pt-90pt);
        \draw [decorate,decoration={brace, mirror, amplitude=5pt,raise=3.5ex}]
  (\linewidth*0.65,20pt-90pt) -- (\linewidth*0.9,20pt-90pt) node[midway,yshift=-3em, scale=0.6, text width=\linewidth*0.45]{second synchronization - one change of vote per node};
\end{tikzpicture}
\caption{The different epochs in the synchronisation. }
\label{fig:preconsensus}
\end{figure}
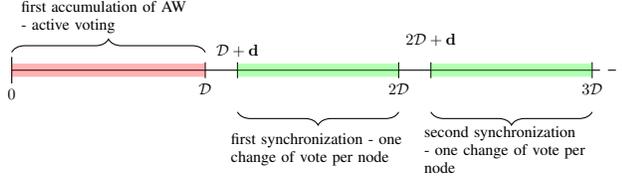

In the beginning, before time $\dFPCS$, the AWs for each conflict $c\in \conflictset$ grow through votes according to the mechanism described in Section~\ref{sec: tip selection algorithms}. At the end of this initial interval, every node has its own perception of the AW of a conflict $c$, written as $\AW_{{i,\dFPCS}}(c)$.

After the arrival of the first dRNG randomness $X$ (between $\dFPCS$ and $\dFPCS+\ddRNG$), every honest node chooses its preferred reality and adheres with it during the next interval of length $\dFPCS$. 

In Algorithm~\ref{alg: selection Conflict Graph}, we describe an iterative  procedure, inspired by \cite{FPCS}, for choosing a preferred reality by a node. 
First, it initialises set $R$ to be the empty set and $U$ to be the set of conflicts $\conflictset$. At every step of the first while-loop, the node finds a conflict $c^*$ in $U$ with the highest AW. If $\AW(c^*)> X$, then we add $c^*$ to $R$, remove all transactions from $U$ conflicting with $R$ and repeat this step. We additionally require $c^*$ to be from $\max_{\conflictset}(U)$ (see Definition~\ref{def: min and max elements}) to guarantee that after adding $c^*$ to $R$, the updated set $R$ is a branch.   If  $\AW(c^*)\le X$, then we run the next iterative procedure (while-loop) which updates $R$ by $c^*$, where  $c^{*}$ is the conflict $c$ in $U$ attaining the largest hash of the concatenation $c||X$\footnote{We assume that $c||X$ can be treated as a binary string for a proper usage of the hash function as noted in Remark~\ref{rem: hash function}} and proceed similarly until $U$ becomes empty.  By construction, the resulting set $R$ is a maximal branch or a reality. We summarize these results in the following proposition.

\begin{proposition}
The resulting set $R$ in Algorithm~\ref{alg: selection Conflict Graph} is a reality.
\end{proposition}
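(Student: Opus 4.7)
My plan is to unpack Definition~\ref{def: reality} and reduce the claim to showing three properties of $R$: (i) $R$ is conflict-free; (ii) $R$ is $\ConflictDAG$-past-closed; and (iii) no branch strictly contains $R$. Properties (i) and (ii) together give $R \in \branchset$ by Definition~\ref{def:branch}, and (iii) upgrades branchhood to reality. I would establish (i) and (ii) inductively along the iterations of Algorithm~\ref{alg: selection Conflict Graph}, and derive (iii) from the fact that the loops run until $U$ is exhausted.

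For conflict-freeness, the key observation is that whenever a conflict $c^*$ is added to $R$, the algorithm also removes $N_{\conflictset}(c^*) \cup \{c^*\}$ from $U$, and $N_{\conflictset}(c^*)$ consists exactly of the conflicts that conflict with $c^*$. Hence any $c^*$ selected from $U$ at a later iteration cannot conflict with any element already in $R$, so the updated $R$ remains conflict-free. For past-closure, I would exploit the restriction $c^*\in\maximal{\conflictset}{U}$: any strict ancestor $v\in\cone{p}{\conflictset}{c^*}\setminus\{c^*\}$ has already been removed from $U$ at some earlier iteration, so it is either in $R$ or was eliminated for being $\ConflictGraph$-adjacent to some $c'\in R$. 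Ruling out the second case is the main technical step; I would proceed by contradiction. Assuming $v$ and $c'$ are conflicting, Definition~\ref{def:conflictingTx} provides $v_2, c_2'\in\ledger$ with $v\le_{\ledger} v_2$, $c'\le_{\ledger} c_2'$, and $v_2, c_2'$ directly conflicting. Because $\ConflictDAG$ is the minimal subDAG of $\LedgerDAG$ induced by $\conflictset\cup\{\tx{\genesis}\}$, $c^*\le_{\conflictset} v$ implies $c^*\le_{\ledger} v \le_{\ledger} v_2$. Taking $(c^*,c')$ as the pair and $(v_2, c_2')$ as the witnesses in Definition~\ref{def:conflictingTx} then shows that $c^*$ and $c'$ are themselves conflicting, contradicting $c^*\in U$.

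For maximality, each iteration of either while-loop removes at least the element $c^*$ from $U$, so after finitely many steps $U=\emptyset$. Every conflict $c\in\conflictset\setminus R$ was therefore eliminated at a step where some $c'\in R$ was added with $c\in N_{\conflictset}(c')$, meaning $c$ and $c'$ are conflicting. Hence $R\cup\{c\}$ fails conflict-freeness and cannot be a branch, so no branch strictly contains $R$. The expected obstacle throughout is the transitivity argument for past-closure, since it relies on carefully tracking how the partial order $\le_{\conflictset}$ embeds into $\le_{\ledger}$ through the minimal-subDAG construction; the hash-based second while-loop introduces no new structural issue and serves only to guarantee termination when all remaining AWs drop below $X$.
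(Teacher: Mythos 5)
Your proof is correct and fleshes out exactly the reasoning the paper only sketches: the paper attaches no formal proof to this proposition, asserting merely that $R$ is a reality ``by construction,'' with the neighbour-removal step securing conflict-freeness, the restriction of the choice of $c^*$ to $\maximal{\conflictset}{U}$ securing past-closure, and the exhaustion of $U$ securing maximality. Your contradiction argument showing that a $\ConflictDAG$-ancestor of a selected $c^*$ cannot have been eliminated as a $\ConflictGraph$-neighbour of an earlier choice $c'$ (since $c^*$ would then itself be conflicting with $c'$ and already removed from $U$) supplies the one genuinely non-trivial step that the paper leaves implicit.
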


\begin{algorithm}[t]
\caption{Reality selection algorithm with common coin}
\label{alg: selection Conflict Graph}
\KwData{Conflict Graph $\ConflictGraph=(\conflictset,E)$, common randomness $X$ distributed uniformly in $[0.5,\theta]$}
\KwResult{preferred reality ${R}\in \branchset$}
${R}\gets \emptyset$\\
$U\gets \conflictset$\\
\While{$|U|\neq 0$}{
$c^*\gets \argmax \{ \AW(c): c\in \max_{\conflictset}(U)\}$ {\scriptsize{\Comment*[r]{use $\max\hash(c)$ for breaking ties}}}
\eIf{$\AW(c^*)>X$}{
${R} \gets {R} \cup \{c^*\}$\\ $U \gets U  \setminus \{N_{\conflictset}(c^{*}) \cup\{c^{*}\}\}$
}
{
break the while-loop}
}
\While{$|U|\neq 0$}{
$c^*\gets \argmax \{ \hash(c||X): c\in \max_{\conflictset}(U)\}$ \\
${R} \gets {R} \cup \{c^*\}$\\ $U \gets U  \setminus \{N_{\conflictset}(c^{*}) \cup\{c^{*}\}\}$
}
\end{algorithm}

Denote by $\supporter_{\ledger_{i,t}}^{(h)}(\tx{x})$ the set of honest nodes seen from node $i$ at time $t$ that issued a block that votes for a  transaction $\tx{x}$ (for a similar definition of supporters, see Definition~\ref{def: AW and supporters}). The \emph{honest AW} of $\tx{x}$ seen from node $i$ at time $t$ is defined as
$$
\AW_{{i,t}}^{(h)} (\tx{x}) := \sum_{j \in \supporter_{\ledger_{i,t}}^{(h)}(\tx{x})} \weight(j)
$$

Due to Assumption~\ref{ass: Synchronised Random Reality Selection}.5 and since the honest nodes change their vote at most once, every other honest node sees this vote with a very high probability.  In other words, every honest node has the same perception of the votes of all other honest nodes (with high probability).
In this case, we can speak of the  \emph{honest AW seen by the honest nodes} of a transaction $\tx{x}$:
\begin{equation}\label{eq:honestAW}
    \AW_{t}^{(h)}(\tx{x}) := \AW_{1,t}^{(h)}(\tx{x})
\end{equation}
 if it holds that $\AW_{{i,t}}^{(h)}(\tx{x}) = \AW_{{j,t}}^{(h)}(\tx{x})$ for all $1\le i,j \le N_h$.
 
Adversarial nodes may change their opinions. In particular, they can do this close to the threshold time $\dFPCS$ such that honest nodes may have different perceptions of the adversarial votes. However, this difference in perception is bounded by the weight of the adversary. 
For every $c\in \conflictset$ we define, similar to \cite{FPCS}, the \emph{regions (or intervals) of adversarial control} as
\begin{equation}
    I_t(c) = [\AW_{t}^{(h)}(c), \AW_{t}^{(h)}(c)+ q];
\end{equation}
see  Fig.~\ref{fig:regionsOfControl}.
The lower (resp. upper boundary) of this interval is precisely the overall AW of the conflict when all malicious nodes vote against (resp. for) it. 
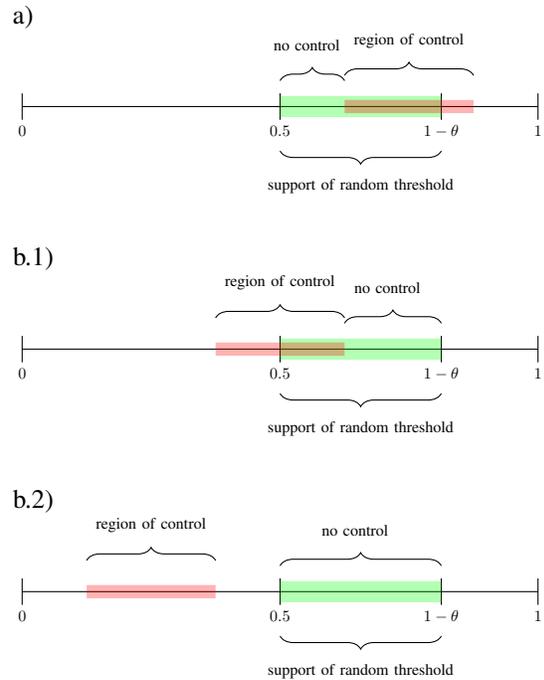
\begin{figure}[h!]
    
    a)
    
    \begin{tikzpicture}
        \draw (\linewidth*0.1,0pt) -- (\linewidth*0.9,0pt);
        \draw (\linewidth*0.1,5pt) -- (\linewidth*0.1,- 5pt) node[anchor=north, scale=0.6] {$0$};
        \draw (\linewidth*0.5,5pt) -- (\linewidth*0.5,-5pt) node[anchor=north, scale=0.6] {$0.5$};
        \draw (\linewidth*0.75,5pt) -- (\linewidth*0.75,-5pt) node[anchor=north, scale=0.6] {$1-\theta$};
        \draw (\linewidth*0.9,5pt) -- (\linewidth*0.9,-5pt) node[anchor=north, scale=0.6] {$1$};
        \draw[opacity=0.3, line width=8pt, green] (\linewidth*0.5,0pt) -- (\linewidth*0.75,0pt);
        \draw [decorate,decoration={brace, mirror, amplitude=5pt,raise=1.5ex}]
  (\linewidth*0.5,-10pt) -- (\linewidth*0.75,-10pt) node[midway,yshift=-2em, scale=0.6]{support of random threshold};
        \draw[opacity=0.3, line width=5pt, red] (\linewidth*0.6,0pt) -- (\linewidth*0.8,0pt);
        \draw [decorate,decoration={brace, amplitude=5pt,raise=1.5ex}]
  (\linewidth*0.6,5pt) -- (\linewidth*0.8,5pt) node[midway,yshift=2em, scale=0.6]{region of control};
  \draw [decorate,decoration={brace, amplitude=5pt,raise=1.5ex}]
  (\linewidth*0.5,3pt) -- (\linewidth*0.6,3pt) node[midway,yshift=2em, scale=0.6, text width=\linewidth*0.2]{no control};
\end{tikzpicture}
\vspace{0.5cm}

b.1)

\begin{tikzpicture}
        \draw (\linewidth*0.1,0pt) -- (\linewidth*0.9,0pt);
        \draw (\linewidth*0.1,5pt) -- (\linewidth*0.1,- 5pt) node[anchor=north, scale=0.6] {$0$};
        \draw (\linewidth*0.5,5pt) -- (\linewidth*0.5,-5pt) node[anchor=north, scale=0.6] {$0.5$};
        \draw (\linewidth*0.75,5pt) -- (\linewidth*0.75,-5pt) node[anchor=north, scale=0.6] {$1-\theta$};
        \draw (\linewidth*0.9,5pt) -- (\linewidth*0.9,-5pt) node[anchor=north, scale=0.6] {$1$};
        \draw[opacity=0.3, line width=8pt, green] (\linewidth*0.5,0pt) -- (\linewidth*0.75,0pt);
        \draw [decorate,decoration={brace, mirror, amplitude=5pt,raise=1.5ex}]
  (\linewidth*0.5,-10pt) -- (\linewidth*0.75,-10pt) node[midway,yshift=-2em, scale=0.6]{support of random threshold};
        \draw[opacity=0.3, line width=5pt, red] (\linewidth*0.4,0pt) -- (\linewidth*0.6,0pt);
        \draw [decorate,decoration={brace, amplitude=5pt,raise=1.5ex}]
  (\linewidth*0.4,5pt) -- (\linewidth*0.6,5pt) node[midway,yshift=2em, scale=0.6]{region of control};
  \draw [decorate,decoration={brace, amplitude=5pt,raise=1.5ex}]
  (\linewidth*0.6,3pt) -- (\linewidth*0.75,3pt) node[midway,yshift=2em, scale=0.6, text width=\linewidth*0.2]{no control};
\end{tikzpicture}
\vspace{0.5cm}

b.2)

\begin{tikzpicture}
        \draw (\linewidth*0.1,0pt) -- (\linewidth*0.9,0pt);
        \draw (\linewidth*0.1,5pt) -- (\linewidth*0.1,- 5pt) node[anchor=north, scale=0.6] {$0$};
        \draw (\linewidth*0.5,5pt) -- (\linewidth*0.5,-5pt) node[anchor=north, scale=0.6] {$0.5$};
        \draw (\linewidth*0.75,5pt) -- (\linewidth*0.75,-5pt) node[anchor=north, scale=0.6] {$1-\theta$};
        \draw (\linewidth*0.9,5pt) -- (\linewidth*0.9,-5pt) node[anchor=north, scale=0.6] {$1$};
        \draw[opacity=0.3, line width=8pt, green] (\linewidth*0.5,0pt) -- (\linewidth*0.75,0pt);
        \draw [decorate,decoration={brace, mirror, amplitude=5pt,raise=1.5ex}]
  (\linewidth*0.5,-10pt) -- (\linewidth*0.75,-10pt) node[midway,yshift=-2em, scale=0.6]{support of random threshold};
        \draw[opacity=0.3, line width=5pt, red] (\linewidth*0.2,0pt) -- (\linewidth*0.4,0pt);
        \draw [decorate,decoration={brace, amplitude=5pt,raise=1.5ex}]
  (\linewidth*0.2,5pt) -- (\linewidth*0.4,5pt) node[midway,yshift=2em, scale=0.6]{region of control};
  \draw [decorate,decoration={brace, amplitude=5pt,raise=1.5ex}]
  (\linewidth*0.5,3pt) -- (\linewidth*0.75,3pt) node[midway,yshift=2em, scale=0.6, text width=\linewidth*0.2]{no control};
\end{tikzpicture}
\caption{Region of adversarial control.}
\label{fig:regionsOfControl}
\end{figure}

We summarize the above considerations in the following statement. 
\begin{lemma}
Assume that the honest nodes have the same perceptions on the honest AWs. Then, for all $i$, $1\le i \le N_h$, it holds that
\begin{equation}
    \AW_{{i,t}}(c) \in I_t(c).
\end{equation}
\end{lemma}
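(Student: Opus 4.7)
The plan is to decompose the Approval Weight seen by an honest node into an honest part and an adversarial part, apply the hypothesis to the honest part, and bound the adversarial part trivially by the adversary's total weight.

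More concretely, first I would split the supporter set at node $i$ for conflict $c$ into two disjoint pieces:
\[
\supporter_{\ledger_{i,t}}(c) = \bigl(\supporter_{\ledger_{i,t}}(c) \cap \nodeset_h\bigr) \;\cup\; \bigl(\supporter_{\ledger_{i,t}}(c) \cap \nodeset_a\bigr).
\]
By Definition~\ref{def: AW and supporters} this induces a corresponding additive decomposition
\[
\AW_{{i,t}}(c) = \AW^{(h)}_{{i,t}}(c) + \AW^{(a)}_{{i,t}}(c),
\]
where $\AW^{(a)}_{{i,t}}(c) := \sum_{j \in \supporter_{\ledger_{i,t}}(c) \cap \nodeset_a} \weight(j)$ is the contribution of the adversarial supporters as seen by node $i$.

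Next, by the standing hypothesis that all honest nodes have the same perception of the honest AWs, the definition~\eqref{eq:honestAW} applies, giving $\AW^{(h)}_{{i,t}}(c) = \AW^{(h)}_{t}(c)$ for every $i$ with $1 \le i \le N_h$. For the adversarial part, since the adversary controls a proportion $q$ of the total weight (Assumption~\ref{ass: Synchronised Random Reality Selection}.2), we have
\[
0 \;\le\; \AW^{(a)}_{{i,t}}(c) \;=\; \sum_{j \in \supporter_{\ledger_{i,t}}(c) \cap \nodeset_a} \weight(j) \;\le\; \sum_{j \in \nodeset_a} \weight(j) \;=\; q.
\]

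Combining these two observations yields
\[
\AW^{(h)}_{t}(c) \;\le\; \AW_{{i,t}}(c) \;\le\; \AW^{(h)}_{t}(c) + q,
\]
which is exactly the statement $\AW_{{i,t}}(c) \in I_t(c)$. The argument is essentially bookkeeping; there is no real obstacle beyond keeping the honest/adversarial split clean and invoking the hypothesis at the correct place, so the proof should be quite short.
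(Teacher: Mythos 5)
Your proof is correct and follows exactly the reasoning the paper leaves implicit: the paper states the lemma as a summary of the preceding discussion, where the endpoints of $I_t(c)$ are described as the overall AW when all malicious nodes vote against (resp.\ for) the conflict, which is precisely your honest/adversarial decomposition with the adversarial contribution bounded between $0$ and $q$. Making that bookkeeping explicit, as you do, is all that is needed.
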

The above holds for every adversary strategy that satisfies Assumption~\ref{ass: Synchronised Random Reality Selection}.2. 
The idea is now to choose the support of the dRNG in such a way that independent of the honest AWs and the adversarial strategy all honest nodes will decide on the same reality with a positive probability. Every $\dFPCS$ time units we have therefore also a positive probability that all nodes decide on the same reality. It takes, hence, a geometrically distributed number of such intervals until all honest nodes agree on the same reality. 

\begin{definition}\label{def:asymptConsensusState}[Convergence to a consensus state]
We say that the protocol \emph{converges to a consensus state} if and only if there exist some reality $R$ and some (random) time $T$ such that
\begin{equation}
    \AW_{{i,t}} (R) > \theta, \forall i\in\{1,\ldots, N_h\}, \forall t>T.
\end{equation}
\end{definition}
\begin{remark}
Definition~\ref{def:asymptConsensusState} is similar to the definition of a consensus state; see Definition~\ref{def:consensusState}. While it describes the asymptotic behaviour of the protocol, it delivers not a practicable criterion for confirmation.\footnote{The ``probabilistic'' reason for this is that $T$ is not a stopping time.} A ``confirmation rule'', as in Definition~\ref{def:TTC}, however, is  always  susceptible to possible ``re-orgs''\footnote{A re-org is the procedure that a transaction that was confirmed is no longer in the preferred reality.} of the ledger state; see also Lemma~\ref{lem: negative result safety}. Quantifying the probabilities that such re-orgs happen depends on the precise communication and adversarial models and is out of this paper's scope. 
\end{remark}

This discussion can be turned into a formal protocol description written in Algorithm~\ref{alg: voting protocol} and we obtain  the following theorem.

\begin{algorithm}[t]
\caption{Voting protocol for a node $i$}
\label{alg: voting protocol}
$e\gets 0${\scriptsize{\Comment*[r]{epochs index}}}
$X_e\gets 0${\scriptsize{\Comment*[r]{common random variable}}}
\While{the node did not confirm a reality}
{
Obtain reality $R_i$ by Algorithm~\ref{alg: selection Conflict Graph} with $X_e$ \\
Before time $(e+1)\dFPCS$ issue new blocks with references selected by Algorithm~\ref{alg: URTS} for $R_i$\\
Before time $(e+1) \dFPCS$ receive new blocks\\
$e\gets e+1$\\
Wait time $\ddRNG$ to get common r.v. $X_e$\\
}
\end{algorithm}

\begin{theorem}[Liveness and safety - Synchronisation]\label{thm: SOTV}
Let 
$$
    q< \min\left\{1-\theta, \theta - \tfrac12\right\}
$$
be the weight of the adversary.
Then, under Assumption~\ref{ass: Synchronised Random Reality Selection}, the protocol (described by Algorithm~\ref{alg: voting protocol}) converges to a consensus state.
\end{theorem}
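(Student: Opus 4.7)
The plan is to split the argument into two stages and then combine them via a Borel--Cantelli-style tail argument. Stage one: show that in each synchronization epoch there is a uniform positive probability $p>0$ that all honest nodes select the same preferred reality $R$ via Algorithm~\ref{alg: selection Conflict Graph}. Stage two: once such a ``success'' event occurs, the bound $q<1-\theta$ drives $\AW_{i,t}(R)$ above $\theta$ at every honest node from the following epoch onwards and keeps it there, realising the consensus state of Definition~\ref{def:asymptConsensusState}.

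For stage one, I first invoke Assumptions~\ref{ass: Synchronised Random Reality Selection}.1 and~\ref{ass: Synchronised Random Reality Selection}.5 to argue that, up to an $O(\varepsilon)$ failure event that can be absorbed into the final error bound, all honest nodes share the same honest AW $\AW^{(h)}_t(c)$ for every $c\in\conflictset$, so each perceived $\AW_{i,t}(c)$ lies in the common adversarial interval $I_t(c)=[\AW^{(h)}_t(c),\AW^{(h)}_t(c)+q]$. Conditioning on a realisation of the dRNG random variable $X$ that avoids every $I_t(c^*)$ queried by the first while-loop of Algorithm~\ref{alg: selection Conflict Graph} and leaves the corresponding argmax unambiguous (the latter handled analogously by viewing pairwise AW differences as adversarial intervals of length $2q$), all honest nodes make identical include/exclude decisions, enter the second while-loop with the same residual set $U$, and use the same $X$ to break ties via the hash, so they produce the same $R$.

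The main obstacle is a uniform positive lower bound on the probability of this agreement event, independent of the epoch. The hypothesis $q<\theta-\tfrac12$ is exactly what makes this work: the support $[\tfrac12,\theta]$ of $X$ has length strictly greater than $q$, so $\P(X\in I_t(c))\le q/(\theta-\tfrac12)<1$ for any fixed interval $I_t(c)$. A naive union bound may fail when the number of queried intervals is large, but since $\conflictset$ is finite and fixed by Assumption~\ref{ass: Synchronised Random Reality Selection}.3, and since a single execution of Algorithm~\ref{alg: selection Conflict Graph} performs at most $|\conflictset|$ sequential queries, a conditioning argument along the execution yields a uniform bound $p\ge\bigl(1-q/(\theta-\tfrac12)\bigr)^{2|\conflictset|}-O(\varepsilon)>0$. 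The precise value of the exponent is irrelevant; only the strict positivity and its independence of the epoch matter.

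For stage two, suppose that at the end of some epoch $e^*$ all honest nodes share preferred reality $R$. During epoch $e^*+1$ each honest node issues blocks under the $R$-restricted tip selection of Algorithm~\ref{alg: URTS}; by Assumptions~\ref{ass: Synchronised Random Reality Selection}.5 and~\ref{ass: Synchronised Random Reality Selection}.1, with probability $1-O(\varepsilon)$ every conflict $c\in R$ is supported by all honest nodes, so $\AW_{i,t}(R)\ge\AW^{(h)}_t(R)\ge 1-q>\theta$ by Definition~\ref{def: AW of branch} together with $q<1-\theta$. In the re-execution of Algorithm~\ref{alg: selection Conflict Graph} for the next epoch the output is necessarily $R$ again: each $c\in R$ satisfies $\AW(c)\ge 1-q>\theta\ge X$ and is added, while each $c'\notin R$ is $\ConflictGraph$-adjacent to some element of $R$ and hence has $\AW(c')\le q<\tfrac12\le X$ (using $q<1-\theta<\tfrac12$), so $c'$ is rejected. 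Thus $R$ persists as the common preferred reality in every subsequent epoch, and $\AW_{i,t}(R)>\theta$ holds for all honest $i$ and all $t$ beyond $(e^*+1)\mathcal{D}$. Since the waiting time until the first successful epoch is stochastically dominated by a geometric random variable with parameter $p>0$, it is finite almost surely, yielding convergence to a consensus state as in Definition~\ref{def:asymptConsensusState}.
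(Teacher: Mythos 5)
Your overall architecture matches the paper's: a per-epoch event of uniformly positive probability on which all honest nodes leave Algorithm~\ref{alg: selection Conflict Graph} with the same reality, followed by the lock-in argument from $q<1-\theta$ and a geometric number of epochs until the first success. Your stage two is correct and is essentially the paper's closing step.

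The gap is in the uniform lower bound of stage one. The event you condition on is that $X$ avoids every adversarial interval $I_t(c)$ queried during the run, but $X$ is drawn \emph{once} per epoch, so no ``conditioning along the execution'' is available: the queried intervals are fixed (given the honest AWs and the adversary's behaviour) subsets of $[\tfrac12,\theta]$, each of length $q$, and nothing prevents the adversary from arranging $\lceil(\theta-\tfrac12)/q\rceil+1$ pairwise non-conflicting double spends whose honest AWs are spread so that these intervals cover the entire support of $X$. In that configuration your success event has probability $0$, not $\bigl(1-q/(\theta-\tfrac12)\bigr)^{2|\conflictset|}$; the product bound would be legitimate only if a fresh coin were drawn at each query. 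A second, related problem: conditioning on $X$ cannot ``leave the argmax unambiguous'', since the argmax in the first while-loop is taken over the \emph{perceived} AWs, which the adversary sets inside the intervals independently of $X$; two conflicts whose honest AWs differ by less than $q$ can be ordered differently at different nodes for every realisation of $X$, and this changes where the first loop breaks. The paper's proof handles stage one differently, by a case split that reduces the constraints on $X$: for conflicts with honest AW above $\tfrac12$ (the selected ones form an independent set, so at most $K$ of them occur, $K$ being the size of the largest maximal independent set of $\ConflictGraph$) it asks for $X$ to fall \emph{below} the honest AW, while all remaining conflicts are excluded by the single event $X>\tfrac12+q$, whose probability is positive precisely because $q<\theta-\tfrac12$. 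You would need to restructure your stage one along such lines --- bounding the number of intervals that $X$ must actually avoid by a quantity controlled by the conflict structure rather than by $|\conflictset|$ --- before the geometric-trials conclusion becomes available.
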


\begin{proof}
We start the protocol at time $t_0=0$ with a fixed set of conflicts $\conflictset$ of size $|\conflictset|$ and let the nodes exchange their votes until time $\dFPCS$. We let $\varepsilon>0$ be arbitrary but fixed and determine its value at the end of the proof. Every node waits until time $\dFPCS+\ddRNG$. If the node received the first random  number $X_1$ it will perform Algorithm~\ref{alg: selection Conflict Graph} with $X_1$ as a random number. If a node did not receive the random number on time it will use Algorithm~\ref{alg: selection Conflict Graph} with the threshold of $\theta$ (instead of random $X_1$). Between $\dFPCS+\ddRNG$ and $2 \dFPCS$ every honest node will not change its preferred reality. Let $A_1$ be the event that all honest nodes voted for their preferred reality and that these votes are seen by all other honest nodes. Let $B_1$ be the event that all honest nodes expressed their preferred reality on time, see Assumption~\ref{ass: Synchronised Random Reality Selection}.5, and $C_1$ that all these blocks arrived at every other honest node before time $2\dFPCS$. Since $A_1 = B_1 \cap C_1$ we have that 
\begin{align*}
     \P (A_1) &= \P(C_1 | B_1) \P(B_1)  \\
     &\geq   (1-\varepsilon)^{|\conflictset| N_h} (1-\varepsilon)^{N_h} \cr &=(1-\varepsilon)^{N_h(|\conflictset|+1)}.  
\end{align*}
At time $2\dFPCS+ \ddRNG$ with probability  of at least  $(1-\varepsilon)^{N_h}$ the new random number $X_2$ is received by all honest nodes. Hence, with probability 
\begin{equation*}\label{eq:peps}
    p(\varepsilon):= (1-\varepsilon)^{N_h(|\conflictset|+1)}
\end{equation*}
all honest nodes agree on the honest AWs, defined in Equation~\eqref{eq:honestAW} and the threshold $X_2$. We write $\AW^{(h)} (c) := \AW^{(h)}_{{2\dFPCS}}(c)$.
Let us note here that no honest node can perceive the honest AW. However, for the analysis, we assume a \emph{perfect view} or total information on the status of the system.

We start a recursive argument on the Conflict Graph by initialising $R=\emptyset$ and $U=\conflictset$. Define the conflict chosen by Algorithm~\ref{alg: selection Conflict Graph} inside the first while-loop at every iteration  $ c^* := \argmax \{\AW^{(h)}(c), \quad c\in \max_{\conflictset}(U)\}$.
We distinguish two cases.

\textbf{Case A:} $\AW^{(h)}(c^*) >0.5$. The support of the random threshold does lie above $0.5$; see also Figure~\ref{fig:regionsOfControl}. More, precisely, the probability $\xi_A$ that every node will include this conflict in its preferred reality (using Algorithm~\ref{alg: selection Conflict Graph}) satisfies $\xi_A > \AW^{(h)}(c^*) - 0.5>0$.  
All conflicts that conflict with $c^*$, i.e.  the neighbours in the Conflict Graph $N_{\conflictset}(c^*)$, are not preferred. 
Note here, that since every honest node might have a different perception of the actual AWs, it may run Algorithm~\ref{alg: selection Conflict Graph} in a different ``order''. However, as no two neighbours in the Conflict Graph can have more than $0.5$ of the honest AW,  the algorithm  treats all ``A cases`` before the following case.

\textbf{Case B:} $\AW^{(h)}(c^*) \leq 0.5$. In this case, all conflicts in $c^* \cup N_{\conflictset}(c^*)$ have an honest AW of less than $0.5.$ (This is because, in Algorithm~\ref{alg: selection Conflict Graph}, nodes treat conflicts in the order of ``decreasing AW''.) Since $q< \theta - 0.5$, with a positive probability $\xi_B$ none of these conflicts will have AWs above the threshold $X_2$ and none of them will be added to the preferred reality in the first while-loop of Algorithm~\ref{alg: selection Conflict Graph}.

We now remove the conflicts $c^* \cup N_{\conflictset}(c^*) $ from the set $U$ and continue this procedure until the set $U$ is the empty set. We set $\xi = \min\{\xi_A, \xi_B\}.$ Let $K$ be the size of the largest maximal independent set in the Conflict Graph.
Eventually, with a positive probability of at least $\xi^K$ the nodes agree on the preferred conflicts originating from case A. The nodes have to fill up the maximal branch with the second while-loop in Algorithm~\ref{alg: selection Conflict Graph}. Since they agree on the value of $X_2$ they also agree on the preferred reality.

Altogether, with a positive probability of at least $p(\varepsilon) \cdot  \xi^K$ all honest nodes vote for the same reality during the next epoch of length $\dFPCS$.
If this happens, an AW of more than~$\theta$ is obtained in the next epoch. Otherwise, we repeat this procedure until it is satisfied. The number of epochs necessary follows a geometric random variable.
\end{proof}

\begin{remark}
The above proof offers a possibility to estimate the ``consensus time'' $T$. In fact, its expectation  is bounded above by $\dFPCS \cdot ( 1 + (p(\varepsilon) \cdot \xi^K)^{-1}).$ This quantitative analysis is one main difference to Theorem~\ref{thm: security Random Blocks}, where no bounds on the ``consensus time'' are obtained. Another crucial difference is that Theorem~\ref{thm: SOTV} does not require assumptions on the randomness of the packages and issuance as in Assumption~\ref{ass: random Package Delay}.
\end{remark}

\begin{remark}
The assumption that the set of conflicts is fixed reduces to the assumption that the set of conflicts is bounded during the run-time of the protocol. The results, therefore, also apply to  sets of conflicts that may evolve over time. However, the quantitative bounds in the proof get worse for larger sets of conflicts. 
\end{remark}


\section{Performance studies}\label{sec: implementation}

\noindent We summarize some of the performance analysis obtained in  \cite{RobustnessTangle} via agent-based simulations to  validate the performance of the presented concepts.
The used simulator \cite{otv-simulator}
is written in Go and is open source. 
In this simulator, the necessary components of the consensus protocol are implemented, however, some of them are simplified. In the following we give a short description but refer to \cite{RobustnessTangle} for more details and further simulation results.

The simulated environment reflects a situation in which network participants are connected in a peer-to-peer network, where each node has  the same number of neighbors. Nodes can gossip, receive blocks, request for missing blocks, and state their opinions whenever conflicts occur. The  underlying network topology is modeled by a  Watts-Strogatz network. In order to mimic a real world behaviour the simulator allows to specify the network delay and packet loss for each node's connection.

Nodes are modeled as different independent agents that concurrently issue new blocks. This means that different nodes can have different perceptions of the Tangle and Approval Weights, at any given moment of time. The number of nodes does not change during the simulation period, and all the honest actors are actively participating in the consensus mechanism.  While the simulator allows to model different weight distributions, we focus here on the case of a Zipf distribution with $s=0$, i.e. every node has the same weight. 

Here, we focus on the robustness of the consensus protocol against the Bait-and-Switch attack, \ref{ex:BaS}, and illustrate the influence of the Synchronized Random Reality Selection (SRRS) introduced in Section \ref{sec: reality selection algorithm}. 

We present simulation studies with the following specific setup. We consider $N=100$ honest nodes with equal weight and one adversary node with weight $q$ (out of a total weight of 1). The block issuance time interval of nodes follows a Poisson distribution with issuance rates proportional to the nodes' weight.  The total  throughput is approximately constant at about 100 blocks per second. The  parents count (or number of references) is set to $k=8$.
The default confirmation threshold is set to $\theta=2/3$.
The peer-to-peer network is a realization of a Watts-Strogatz network with rewiring probability $1$ and 8 neighbors for each node.
The latency between two nodes in the peer-to-peer network is  set to be $0.1$ seconds and we assume the adversary to have no influence on the communication layer. 
The maximal simulation time is set to 60 seconds.

\begin{figure}
        \includegraphics[width=0.5\textwidth]{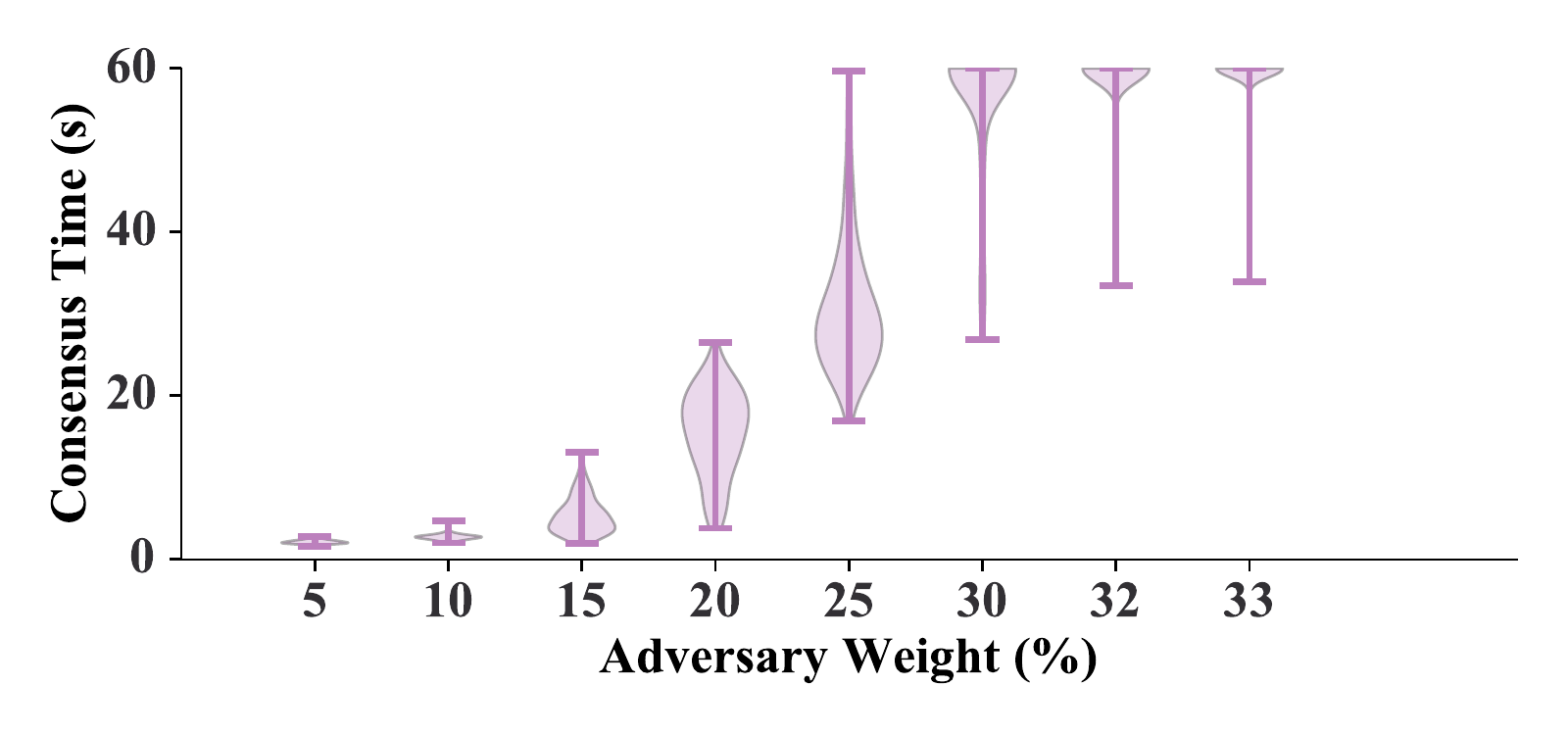}
        \caption{Consensus time distributions under Bait-and-Switch attack, without SRRS ($N=100$), taken from \cite{RobustnessTangle}.}
        \label{fig:bs_wo_srrs}
\end{figure}
\begin{figure}
        \includegraphics[width=0.5\textwidth]{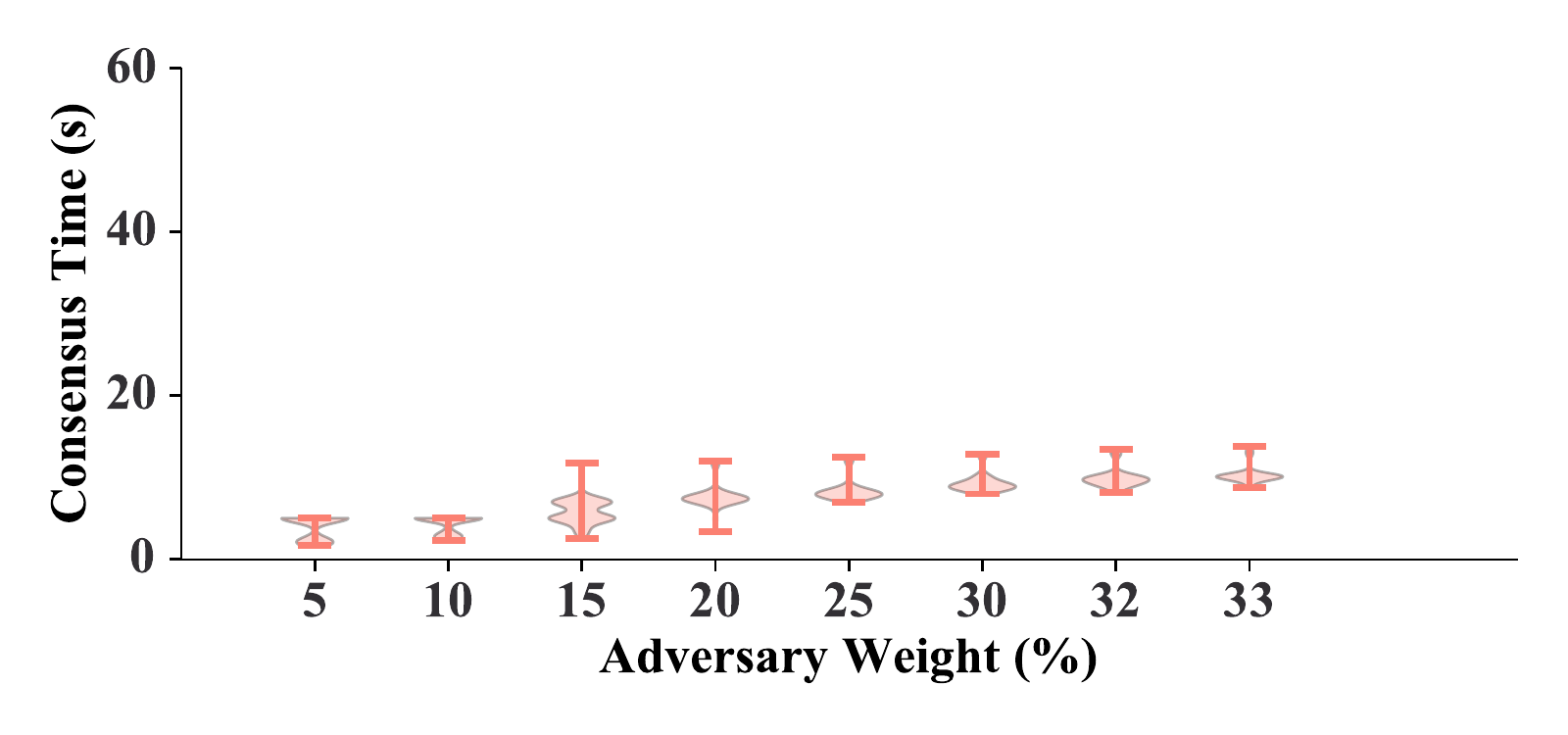} 
        \caption{Consensus time distributions under Bait-and-Switch attack, with SRRS ($N=100$), taken from \cite{RobustnessTangle}.}
        \label{fig:bs_w_srrs}
\end{figure}

The access to all Tangles of all nodes in the simulator allows to ``objectively'' measure the confirmation time as proposed in \cite{RobustnessTangle} for each node. These can be combined to extract the consensus time, which is defined as the time between the creation of a conflict and the time when all honest nodes confirm the same spending or branch. As such, for any given conflict, it is strictly larger than the confirmation time at any node. By measuring the consensus time, the safety and liveness of the protocol can be analyzed. 

Figure \ref{fig:bs_wo_srrs} shows the consensus time for the Bait-and-Switch strategy as a function of the adversarial weight if SRRS is disabled.
It is interesting to note that there this some ``inherent randomness'' in the protocol as blocks are issued randomly. This seems sufficient to guarantee the security against an attacker with at most $20\%$ of total weight. In Figure~\ref{fig:bs_w_srrs} we see the effectiveness of the SRRS, that makes the protocol robust against the Bait-and-Switch attack up to the theoretical limit of $q=1/3$.

We conclude this section with a brief analysis of the performance with the degree of decentralization and the size of the network. This also allows to support the values for the growth of the Witness Weight in Figure~\ref{fig: growth of the approval weight}. 
Figure \ref{fig:ct_s} shows the confirmation time distributions for several Zipf parameters $s$ with $N = 100$. The confirmation time increases with the ``decentralization'' of the network, as also discussed in Section~\ref{sec: Approval Weight}.  Nevertheless, Figure~\ref{fig: growth of the approval weight} shows, that in the extreme case where all nodes have equal weight, i.e. $s=0$, transaction are still confirmed within 2 seconds.
In  Figure~\ref{fig:ct_n} we show the dependence of the confirmation times with respect to the size of the network, for $s=0.9$. As described in  Section \ref{fig: growth of the approval weight}, the Witness Weight increases slower with a larger number of nodes. However, as Figure~\ref{fig:ct_s} shows the increase is sublinear, resulting in low confirmation times of $\sim$3 seconds, even for 1000 nodes.

\begin{figure}
        \includegraphics[width=0.5\textwidth]{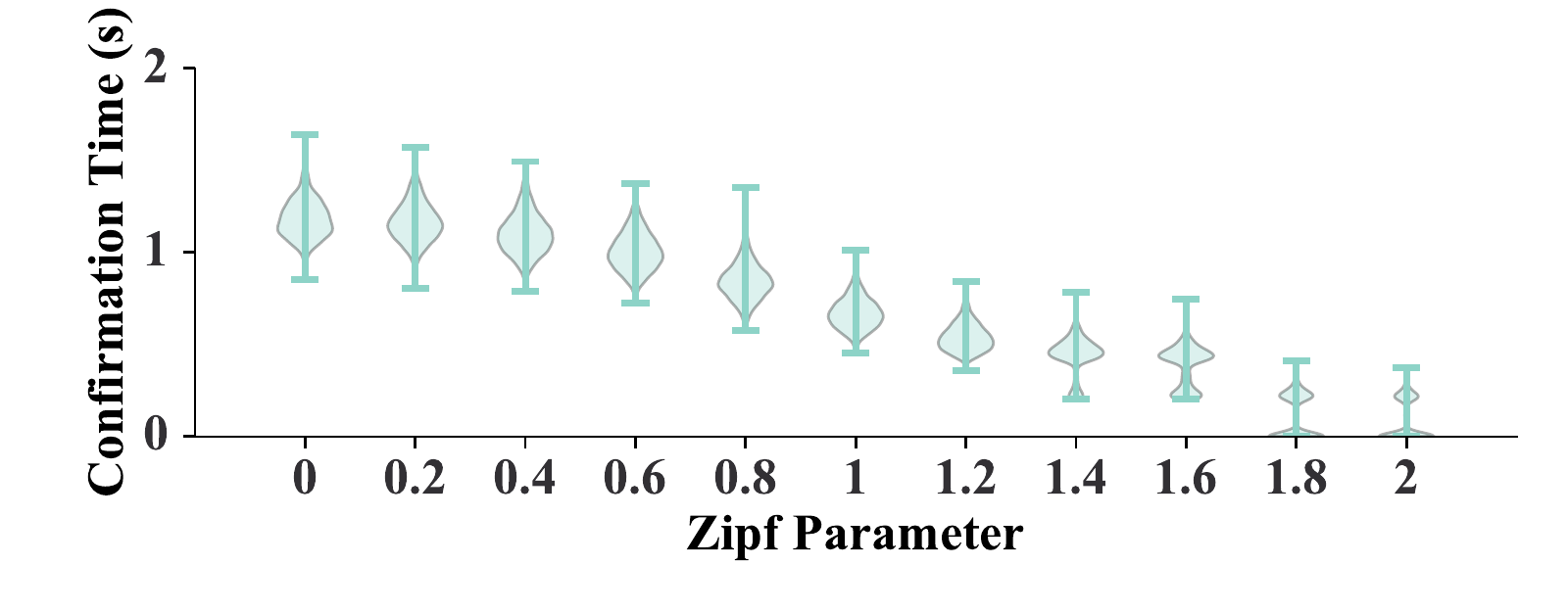} 
        \caption{Confirmation time distributions of blocks with the Zipf parameter $s$, taken from \cite{RobustnessTangle}.}
        \label{fig:ct_s}
        
\end{figure}
\begin{figure}
        \includegraphics[width=0.5\textwidth]{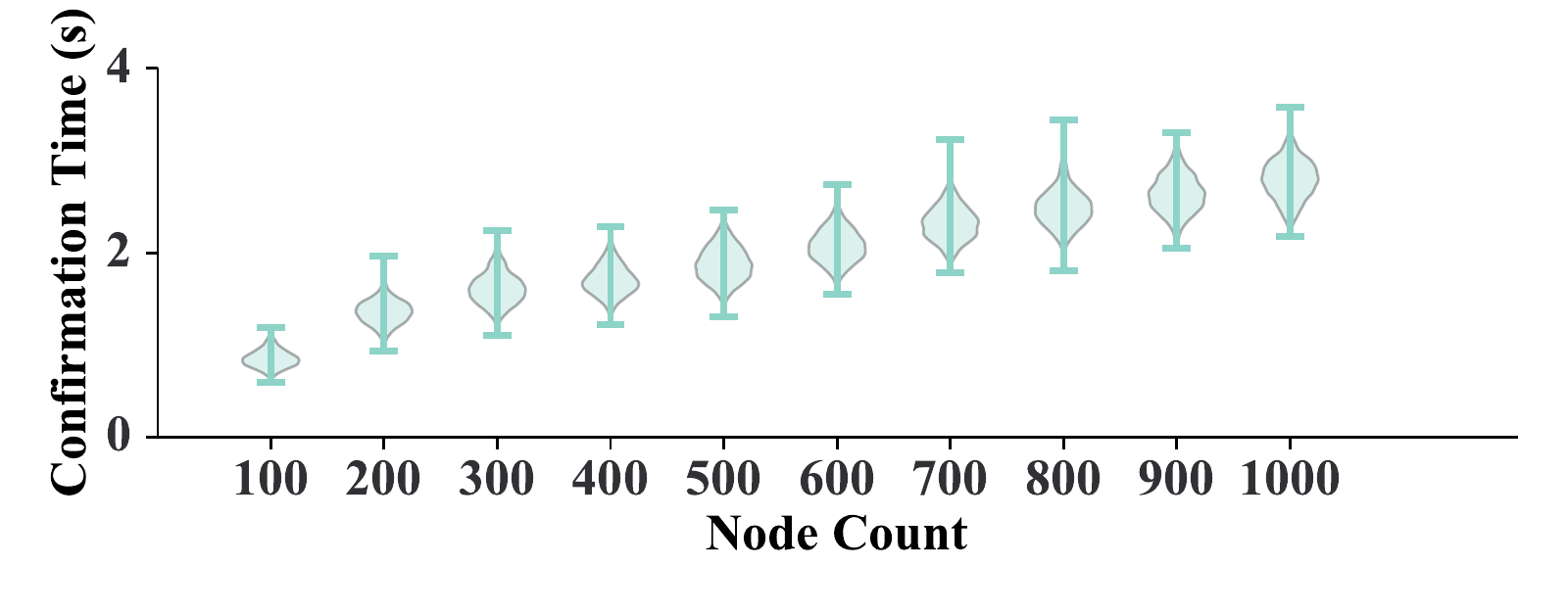} 
        \caption{Confirmation time distributions of blocks with the number of nodes, for $s=0.9$, taken from \cite{RobustnessTangle}.}
        \label{fig:ct_n}
\end{figure}

\section{Outlook - Future Research} \label{sec: outlook}

The proposed consensus mechanism in combination with the Reality-based Ledger supports the parallelisation of many processes, such as processing,  booking and voting. This can lead to a significant performance boost since it can enable multi-threaded concurrency. The potential for multi-threadedness of our solution, the capability to work in an asynchronous setting and  the leaderless approach can offer a highly performant consensus and ledger solution. Detailed and sound performance analysis will be necessary to validate theoretically predicted properties.

Since the ledger can be progressed without having global knowledge of new transaction additions to the ledger, it is possible that nodes can reach consensus with our mechanism  even without learning about all blocks. As a consequence, the approach may enable certain sharding solutions directly on the Tangle layer, in which nodes only observe  a  proportion of the total ledger. However, this approach may lower performance and potentially lower security and/or liveness. To address the viability of our solution for a sharded scenario key questions such as necessary assumptions and a full security analysis are vital.

The weight system from which the Approval Weight is derived can be constructed from multiple sources and in various settings. For example, the weight may be derived from the token value and the system can be operated permissioned or permissionless. A different approach is to obtain the weights through reputation systems, which has so far received little attention.

By introducing the transaction reference in addition to the block reference in Section \ref{sec: Voting}, the orphanage of transactions can be reduced through Algorithm~\ref{alg: URTS}.
However, it does not solve the problem entirely. For instance, an honest transaction can be referenced (directly) only by eventually rejected transactions and may never reach sufficient AW to be considered confirmed. This can be improved in several ways.
First, nodes may keep their ``own'' transactions as tips until they are confirmed.  This resembles an automated way of reattaching blocks. 
Second, nodes may also retain transactions that are in their preferred reality but for which they have not yet voted for in the tip pool. The transactions may then be supported via a transaction reference. 
Third, one could allow block and transaction references to be conflicting for a given block. The transaction can then be prioritised over block references in a transaction. This enables an efficient way to remove parts of branches from the referenced aggregated branch. Another possible solution for a more accurate voting is to introduce more reference types which would eventually allow nodes to remove more explicitly certain branches from the supported branches of referenced blocks.
The above examples demonstrate that solutions for the Tip Selection Algorithm can be found that mitigate or reduce orphanage, however, they require thorough analysis to cover edge cases.

\section{Conclusion}\label{sec: conclusion}

We have introduced a new leaderless consensus protocol that can be seen as a generalisation of the Nakamoto consensus. Our protocol is based on the Tangle, which not only forms a partially ordered  communication record between participants in a peer-to-peer network, but also serves as an efficient way to implicitly vote on the history of the underlying ledger.
These nodes are associated with reputation-based weights which are used to reach consensus on the acceptance of transactions to the ledger. 
The leaderless nature of the protocol allows asynchronous and concurrent writing access to the ledger. 
It also eliminates the need for shared ``memory pools'' for pending transactions and the special roles of miners or validators.

We provide formal definitions and proofs for the functionalities of the protocol, as well as pseudo-code for the various core algorithms. 
Furthermore, liveness and security of the protocol are analysed and several attack scenarios discussed in detail. We proved an impossibility result for safety in the asynchronous communication model. However, by introducing a synchronisation mechanism that  utilises a common random coin, we proved theoretical results on the safety of the protocol. Finally, we presented initial simulation studies that confirm the performance of the protocol with confirmation times in the order of second, and robustness up to a theoretical upper bound of the adversary weight of $1/3$.

\appendices 

\section{Estimates on Confluence Time}\label{sec:confluenceTime}

This section gives an upper bound on the confluence time $\tau_c$.

In the case where the network is in a low load regime, we can assume that the tip pool size is small. Then after several approvals, all new transactions will indirectly reference this transaction.
In the high load regime, the tip pool size $L_0 \gg k$ and the confluence time can be larger. Denote $K(t)$ the number of tips that approve the given transaction~$x$ at time~$t$. A new transaction at time~$t$ chooses~$k$ tips based on the state of the Tangle at time $t-h$. Hence, the probability of a new transaction approving at least one of the $K(t-h)$ tips that are approving $x$ is given by
\begin{equation}
    1 - \left(1 - \frac{K(t-h)}{L_0}\right)^k.
\end{equation}
As mentioned above, during a time interval $h$  we have that $\lambda h$ new tips arrive and $\lambda h $ tips are approved. Hence,  the probability that a transaction that was a tip at time $t-h$ is no longer a tip at time $t$ is 
\begin{equation}
    \frac{\lambda h}{L_0} = \frac{k-1}k.
\end{equation}
Therefore, at time $t$ we have that $(1)/k K(t-h)$ previous tips are still tips and $(k-1)/k K(t-h)$ have been referenced and are no longer tips. We denote by $A$ the set of the tips referencing $x$ that are still tips and by $B$ the tips referencing $x$ that got approved in $[t-h, h]$. We write
\begin{equation}
    p_A = \frac{K(t-h)}{k L_0} \mbox{, resp.  }p_B=\frac{ (k-1) K(t-h)}{k L_0}
\end{equation}
 for the probabilities to choose a given parent from the set $A$, resp. the set $B$.
Let $p_1$ be the probability to approve at least one transaction from $B$ but not from $A$ and let $p_2$ be the probability that at least two parents are chosen from the set $A$. Let $Y_A$ be the number of tips approved from set $A$. Then, note that in the first event, the number of tips that reference the given transaction increases by a factor $1$ and in the second event the number of tips decreases by a factor $Y_A-1$.

The probability of the first event can  be described by a binomial distribution. In fact, 
\begin{equation}
p_1 = \sum_{i=1}^k { k \choose i} p_B^i (1-p_A-p_B)^{k-i}.    
\end{equation}
Since $p_B$ is assumed to be small the two leading  terms are for $i\in \{1,2\}$ and we obtain
\begin{equation}
    p_1 \approx k p_B + \frac12 k(k-1) p_B^2.
\end{equation}

The random variable $Y_A$ follows a Binomial distribution $Bin(k, p_A)$, hence, 
\begin{equation}
\P [Y_1 \ge 2] = \sum_{i=2}^k    {k \choose i } p_A^i (1-p_A)^{k-i}.
\end{equation}
For $K(t-h)$ small, and, thus, $p_A$ small, the leading term in the above expression is for $i=2$. Hence, the second event happens  with probability approximately equal to
\begin{equation}
p_2= \frac12 k (k-1) p_A^2,
\end{equation}
and the tip pool size is reduced essentially by $1$.
Similarly to~\cite{popov2015} we can write now a differential equation for $K(t).$ We consider only the first order terms of $p_1$ and $p_2$ since we assume $K(t)$ to be small:
\begin{equation}
    \frac{d K(t)}{dt} = (p_1 -p_2) \lambda \approx \lambda \frac{ (k-1) K(t-h)}{ L_0} 
\end{equation}
Using Equation~\eqref{eq:L0} we can write
\begin{equation}
    \frac{d K(t)}{dt} \approx  \frac{ (k-1)^2 K(t-h)}{k h },
\end{equation}
with boundary condition $K(0)=1$. Following the lines of~\cite{popov2015} we obtain a solution of the form
\begin{equation}
    K(t) = \exp \left( W\left(\frac{(k-1)^2}{k} \right) \frac{t}{h}    \right),
\end{equation}
where $W(\cdot)$ is the so-called Lambert $W$-function. Taking the logarithm at both sides we find that the time when $K(t)$ reaches $\varepsilon L_0$ is roughly
\begin{equation}
    \tau_c \approx \frac{h}{W\left(\frac{(k-1)^2}{k} \right)} \left( \log L_0 + \log \varepsilon \right).
\end{equation}
For large $k$ we can approximate $W\left(\frac{(k-1)^2}{k} \right)\approx 2 \log (k-1) - \log k \approx \log k$ and obtain
\begin{equation}\label{eq:tauCklarge}
    \tau_c \approx \frac{h}{\log k} \log(L_0) \approx \frac{1}{\log k} h \log(\lambda h).
\end{equation}

\section{Illustrative example}\label{sec: toy example}
\begin{figure*}
    \centering
    \definecolor{megalightgray}{RGB}{244, 244, 244}
\definecolor{mygray}{RGB}{240, 240, 240}
\definecolor{myblue}{RGB}{102, 178, 255}
\definecolor{myred}{RGB}{255, 102, 102}
\tikzstyle{rounded_block}=[draw, rectangle, thick, minimum height=\heightBlock cm, minimum width = \widthBlock cm, text centered, rounded corners, draw=darkgray, font = \small]

\begin{tikzpicture}[use Hobby shortcut, scale = 0.95]
\def\xCoordinate{0.0}
\def\yCoordinate{0.0}
\def\yAdd{1.5}
\def\xAdd{0.75}
\def\innerSepar{-1.5}
\def\lineWidth{1.5}
\def\roundedCorners{1}
\def\opacityInternal{0.5}
\def\minHeight{20}
\def\fracyAdd{1/6}
\def\lineWidthBelow{3}
\def\minWidth{20}
\def\scaleFactorSupp{0.12}
\def\minWidthCM{\minWidth*0.0352778*3/4}
\tikzstyle{block}=[draw, rectangle, minimum height=\minHeight pt, minimum width = \minWidth pt, text centered, rounded corners=\roundedCorners pt, draw=darkgray, font=\large]
\tikzstyle{block_colored}=[draw, rectangle, minimum height= 2 pt, minimum width = \minWidth * 5 / 6 pt, rounded corners=\roundedCorners  pt, draw=darkgray]
\def\arrowStyle{-latex}

\node[font = \small] at (0, -0.5) {\textbf{Tangle}};
\node[block] (node_1_0) at (0, -1.5) {$\genesis$}; 
\node[block]  (node_2_0) at (-1, -3) { $x$}; 
\draw[line width = 3, line cap = round, myred](-1.25, -3.25)--(-0.75, -3.25);
\node[block] (node_2_1) at (1, -3) { $y$};
\draw[line width = 3, line cap = round, myblue](0.75, -3.25)--(1.25, -3.25);
\draw[\arrowStyle] (node_2_0) -- (node_1_0); 
\draw[\arrowStyle] (node_2_1) -- (node_1_0); 
\node[block] (node_3_0) at (-1, -4.5) {$u$}; 
\node[block] (node_3_1) at (1, -4.5) {$z$}; 
\draw[line width = 3, line cap = round, brown](0.75, -4.75)--(1.25, -4.75);
\draw[line width = 3, line cap = round, myred](-1.25, -4.75)--(-0.75, -4.75);
\draw[\arrowStyle] (node_3_1) -- (node_2_1);
\draw[\arrowStyle] (node_3_0) -- (node_2_0);
\node[block] (node_4_0) at (0.5, -6) {$w$}; 
\draw[line width = 3, line cap = round, teal](0.25, -6.25)--(0.75, -6.25);
\node[block] (node_4_1) at (2, -6) {$v$}; 
\draw[line width = 3, line cap = round, myblue](1.75, -6.25)--(2.25, -6.25);
\draw[\arrowStyle] (node_4_0) -- (node_2_0); 
\draw[\arrowStyle,dashed] (node_4_0) -- (node_3_1);
\draw[\arrowStyle] (node_4_1) -- (node_3_1);

\def\xAdd{4.5}
\node[font = \small] at (1+\xAdd, -0.5) {\textbf{Ledger DAG}};

\node[block] (ledger_1_0) at (1+\xAdd, -1.5) {$\tx{\genesis}$}; 
\node[block]  (ledger_2_0) at (\xAdd, -3) { $\tx{x}$}; 
\node[block]  (ledger_2_1) at (1+\xAdd, -3) { $\tx{y}$};
\node[block]  (ledger_2_2) at (2+\xAdd, -3) { $\tx{z}$}; 
\node[block] (ledger_3_0) at (-0.5+\xAdd, -4.5) { $\tx{u}$};
\node[block] (ledger_3_1) at (0.5+\xAdd, -4.5) { $\tx{w}$};
\node[block] (ledger_3_2) at (2+\xAdd, -4.5) { $\tx{v}$};
\draw[\arrowStyle] (ledger_2_0) -- (ledger_1_0); 
\draw[\arrowStyle] (ledger_2_1) -- (ledger_1_0); 
\draw[\arrowStyle] (ledger_2_2) -- (ledger_1_0);
\draw[\arrowStyle] (ledger_3_0) -- (ledger_2_0);
\draw[\arrowStyle] (ledger_3_1) -- (ledger_2_0); 
\draw[\arrowStyle] (ledger_3_2) -- (ledger_2_2); 


\def\xAdd{10}
\node[font = \small] at (\xAdd, -0.5) {\textbf{Conflict DAG}};

\node[block, fill = white] (conflict_1_0) at (\xAdd, -1.5) {$\tx{\genesis}$}; 
\node[block, fill = white]  (conflict_2_0) at (\xAdd-1, -3) { $\tx{x}$}; 
\node[block]  (conflict_2_1) at (1+\xAdd, -3) { $\tx{y}$};
\draw[\arrowStyle] (conflict_2_0) -- (conflict_1_0); 
\draw[\arrowStyle] (conflict_2_1) -- (conflict_1_0); 
\node[block] (conflict_3_0) at (-1.5+\xAdd, -4.5) { $\tx{u}$};
\node[block] (conflict_3_1) at (-0.5+\xAdd, -4.5) { $\tx{w}$};
\draw[\arrowStyle] (conflict_3_0) -- (conflict_2_0); 
+\xAdd 
\draw[\arrowStyle] (conflict_3_1) -- (conflict_2_0); 
+\xAdd 


\def\xAdd{14}
\path
  (15.2, -0.8) coordinate (z0)
  (14.4, -1.1) coordinate (z1)
  (14.4, -1.5) coordinate (z12)
  (14.5, -3) coordinate (z2)
  (14, -3.95) coordinate (z22)
  (13.5, -4.2) coordinate (z23)
 (14 , -5.1) coordinate (z3)
 (15.2,-3) coordinate (z4);
  \draw[closed, draw = black, dashed, fill = megalightgray] (z0) .. (z1) .. (z12) .. (z2) .. (z22) .. (z23) .. (z3) .. (z4);
\node at (\xAdd+.7, -3.5) {$R$};


\node[font = \small] at (\xAdd, -0.5) {\textbf{Conflict Graph}};

\node[block, fill = white] (conflict_1_0) at (\xAdd, -4.5) {$\tx{x}$}; 
\node[block, fill = white]  (conflict_1_1) at (\xAdd, -3) { $\tx{y}$}; 

\node[block, fill = white]  (conflict_2) at (\xAdd-1, -1.5) { $\tx{u}$}; 
\node[block, fill = white]  (conflict_3) at (\xAdd+1, -1.5) { $\tx{w}$}; 

\draw (conflict_1_0) -- (conflict_1_1);
\draw (conflict_2) -- (conflict_1_1);
\draw (conflict_3) -- (conflict_1_1);
\draw (conflict_2) -- (conflict_3);

\matrix[ampersand replacement=\&] (mtrx) at (0,-8) {
        \node[draw] (species1) {
            \begin{tabular}{c c}
            \multicolumn{2}{c}{\small{Node's weight}}\\
                \colorbox{myred}{0.3} & \colorbox{myblue}{0.1} \\ \colorbox{brown}{0.2} & \colorbox{teal}{0.4}
            \end{tabular}
        };
\\
} ;
\matrix[ampersand replacement=\&] (ww) at (5.3,-8) {
        \node (species1) [shape=rectangle,draw] {
            \begin{tabular}{ c c c c c c c}
                \multicolumn{7}{c}{\small{Block's Witness Weight}}
               \\
               $\genesis$ & $x$ & $y$ & $u$ & $z$ & $w$ & $v$ \\
                $1$ & $0.7$ & $0.7$ & $0.3$ & $0.7$ & $0.4$ & $0.1$
            \end{tabular}
        };
\\
} ;

\matrix[ampersand replacement=\&] (ww) at (12.5,-8) {
        \node (species1) [shape=rectangle,draw] {
            \begin{tabular}{ c c c c c c c}
                \multicolumn{7}{c}{\small{Transaction's Approval Weight}}
               \\
                 $\tx{\genesis}$ & $\tx{x}$ & $\tx{y}$ & $\tx{u}$ & $\tx{z}$ & $\tx{w}$ & $\tx{v}$ \\
                 $1$ & $0.7$ & $0.3$ & $0.3$ & $0.7$ & $0.4$ & $0.1$
            \end{tabular}
        };
\\
} ;

\end{tikzpicture}
    \caption{The Tangle, the Ledger DAG, the Conflict DAG and the Conflict Graph are shown. The Tangle starts with the genesis $\genesis$ and includes six other blocks $x,y,z,u,v,w$. Blocks $x$ and $y$ contain directly conflicting transactions $\tx{x}$ and $\tx{y}$. Similarly, blocks $u$ and $w$ contain directly conflicting transactions $\tx{u}$ and $\tx{w}$. Weights of four issuing nodes, which are identified with unique colors, are depicted. The WW of blocks and AW of transactions are computed.  In addition, the preferred reality $R$ is highlighted on the Conflict Graph. }
    \label{fig:example 1}
\end{figure*}
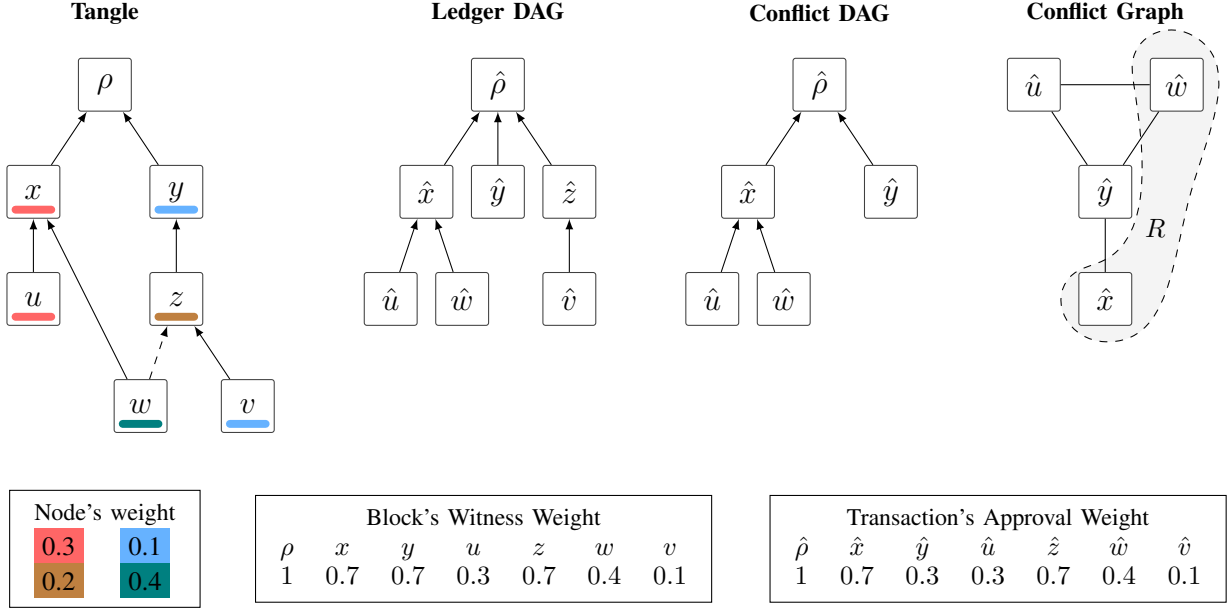
In this section, we demonstrate in Figure~\ref{fig:example 1} the most important concepts introduced in the paper using a toy example. In this example, blocks have two references which are identical in some cases.

The Tangle starts with the genesis $\genesis$ and six blocks are issued in the order $x,y,z,v,w,u$ by four distinct nodes which are identified with unique colors (red, blue, brown, green) and have weights $0.3,0.1,0.2,0.4$. In Figure~\ref{fig:example 1} we demonstrate the Tangle, the Ledger DAG, the Conflict DAG and the Conflict Graph. Transactions $\tx{x}$ and $\tx{y}$ consume the same output of $\tx{\genesis}$, thereby they are directly conflicting transactions. Similarly,  we say that $\tx{u}$ and $\tx{w}$ are directly conflicting as their input is the same output of $\tx{x}$. Thus, the Conflict DAG consists of the genesis $\tx{\genesis}$ and $\tx{x},\tx{y},\tx{u},\tx{w}$ and can be seen as the subDAG of the Ledger DAG induced by its vertices (see Section~\ref{sec:ledgerState}). The Conflict Graph shows the conflicting dependencies between $\tx{x},\tx{y},\tx{u},\tx{w}$, e.g. $\tx{y}$ is connected with $\tx{x}$ as they are directly conflicting and $\tx{y}$ is connected with all conflict-successors of $\tx{x}$, i.e. $\tx{u}$ and $\tx{w}$.

To demonstrate the steps of our protocol we discuss the actions from the point of view of the ``green'' node for issuing block $w$.
Before block $w$ was issued (i.e. at time when blocks $x,y,z,v$ were issued only), the preferred reality  (see Algorithm~\ref{alg:selectionConflictGraph}) for the node was $R=\{\tx{x}\}$ as $\AW(\tx{x})>\AW(\tx{y})$. Suppose that the  node decided to issue a block $w$ and selected the two tips $x$ and $z$ by Algorithm~\ref{alg: URTS}. Since the voting branch of $x$ is $\branch^{(p)}_{\votingset}(x)=\{\tx{x}\}\subseteq R$ and the voting branch of $z$ is $\branch^{(p)}_{\votingset}(z)=\{\tx{y}\}\not\subseteq R$ (see Definition~\ref{def: voting branch}), the node set a block reference from $w$ to $x$ only. After checking that the maximal contained branch  of transaction $\tx{z}$ is the main branch (or the empty set), the node put a transaction reference from $w$ to $z$ shown in Figure~\ref{fig:example 1} by the dashed arrow.

We observe that the Approval Weight of transactions is often equal to the Witness Weight of the corresponding blocks. However, this is not always the case. For instance, the Approval Weight of transaction $\tx{y}$  is the sum of weights of nodes supporting it. In this case, the ``brown'' and ``blue'' nodes are the supporters of $\tx{y}$, but not the ``green'' node because of the transaction reference from $w$ to $z$. Therefore, $\AW(\tx{y})=0.2+0.1=0.3$. On the other hand, $\WW(y) = 0.2+0.1+ 0.4 = 0.7$ since the ``green'' node witnesses the block $y$.

To find the preferred reality, a node must follow Algorithm~\ref{alg:selectionConflictGraph}. Specifically, the reality $R$ is constructed step-by-step by looking at the Conflict Graph (see Figure~\ref{fig:example 1}). At the first step, one includes $\tx{x}$ in $R$ as it attains the highest Approval Weight and it is the closest vertex to the genesis. Then we remove $\tx{x}$ and all conflicts which are conflicting with $\tx{x}$, i.e. $\tx{y}$ is removed. At the second step, we choose $\tx{w}$ as its Approval Weight is higher that the one of $\tx{u}$. After this step, we remove both $\tx{w}$ and $\tx{u}$. Since the empty set remains, we finish with constructing reality $R=\{\tx{x},\tx{w}\}$.

We also highlight that if at the next moment the ``brown'' node, which is supposed to be honest, decides to issue a new block and attach it to block $w$ (with a block
reference), then it would change its vote on conflicting transaction $\tx{y}$ (see Definition~\ref{def: change and current vote}). Specifically, the Approval Weight of $\tx{y}$ would be dropped by the weight of the ``brown'' node and become $0.1$. In contrast, the Approval Weight of $\tx{x}$ would gain and become $0.9$.

\section{Glossary}\label{sec: glossary}

\textbf{Approval Weight} A function that computes the ``relative'' part of the network that approves a given transaction 

\textbf{Conflict} A transaction that consumes the same output as a distinct transaction

\textbf{Conflicting transactions} Two transactions that contain two transactions in their past cones which consume the same output of some transaction

\textbf{Cone} A set of vertices in a DAG that are reachable from a given vertex by following the directions (past cone) and the opposite directions (future cone) of edges in the DAG. 

\textbf{Branch} A set of conflicts which does not contain conflicting transactions and is past-closed

\textbf{Branch DAG} A DAG that represents the relations between branches 

\textbf{Ledger DAG} A data structure that stores all transactions in the form of a DAG

\textbf{Tangle DAG} A data structure that stores all blocks in the form of a DAG

\textbf{Voting DAG} An augmented DAG that represents a combination of the Tangle DAG and the Ledger DAG and is used for determining voting cones

\textbf{Genesis} The transaction that is the ultimate predecessor of any transaction of the UTXO ledger.

\textbf{Block} An element of the Tangle DAG, constituted of identified data that refer to at least two blocks

\textbf{Node} A machine that is a part of the network. Its role is to issue new blocks and validate pre-existing ones

\textbf{Reality} A maximal branch

\textbf{Solidification} The process of retrieving missing blocks in the past cone of a given block which can be requested by a node

\textbf{Witness Weight} A function that computes the ``relative'' part of the network that approves a given block

\section*{Acknowledgment}

The authors would like to thank the developer team of the GoShimmer software, for supporting this study with the prototype implementation of the IOTA 2.0 protocol.
They also thank precious staff members of the IOTA Foundation and members of the IOTA community for their feedback and criticism. 
\newpage

\bibliographystyle{IEEEtran}
\bibliography{bibliography}

\begin{thebibliography}{10}
\providecommand{\url}[1]{#1}
\csname url@samestyle\endcsname
\providecommand{\newblock}{\relax}
\providecommand{\bibinfo}[2]{#2}
\providecommand{\BIBentrySTDinterwordspacing}{\spaceskip=0pt\relax}
\providecommand{\BIBentryALTinterwordstretchfactor}{4}
\providecommand{\BIBentryALTinterwordspacing}{\spaceskip=\fontdimen2\font plus
\BIBentryALTinterwordstretchfactor\fontdimen3\font minus
  \fontdimen4\font\relax}
\providecommand{\BIBforeignlanguage}[2]{{%
\expandafter\ifx\csname l@#1\endcsname\relax
\typeout{** WARNING: IEEEtran.bst: No hyphenation pattern has been}%
\typeout{** loaded for the language `#1'. Using the pattern for}%
\typeout{** the default language instead.}%
\else
\language=\csname l@#1\endcsname
\fi
#2}}
\providecommand{\BIBdecl}{\relax}
\BIBdecl

\bibitem{nakamoto2008bitcoin}
S.~Nakamoto, ``{Bitcoin: A peer-to-peer electronic cash system},'' 2008.

\bibitem{GHOST}
Y.~Sompolinsky and A.~Zohar, ``Secure high-rate transaction processing in
  bitcoin,'' in \emph{Financial Cryptography and Data Security}, R.~B{\"o}hme
  and T.~Okamoto, Eds.\hskip 1em plus 0.5em minus 0.4em\relax Berlin,
  Heidelberg: Springer Berlin Heidelberg, 2015, pp. 507--527.

\bibitem{popov2015}
S.~{Popov}, ``{The Tangle},'' 2015.

\bibitem{DEVRIES2021105901}
A.~{de Vries} and C.~Stoll, ``Bitcoin's growing e-waste problem,''
  \emph{Resources, Conservation and Recycling}, vol. 175, p. 105901, 2021.

\bibitem{NBERw29396}
I.~Makarov and A.~Schoar, ``Blockchain analysis of the bitcoin market,''
  National Bureau of Economic Research, Working Paper 29396, October 2021.

\bibitem{MEV}
P.~Daian, S.~Goldfeder, T.~Kell, Y.~Li, X.~Zhao, I.~Bentov, L.~Breidenbach, and
  A.~Juels, ``Flash boys 2.0: Frontrunning in decentralized exchanges, miner
  extractable value, and consensus instability,'' in \emph{2020 IEEE Symposium
  on Security and Privacy (SP)}, 2020, pp. 910--927.

\bibitem{Rosenthal22}
\BIBentryALTinterwordspacing
D.~{Rosenthal}, ``{EE380 Talk},'' 2022. [Online]. Available:
  \url{https://blog.dshr.org/2022/02/ee380-talk.html}
\BIBentrySTDinterwordspacing

\bibitem{SOK}
M.~Dotan, Y.-A. Pignolet, S.~Schmid, S.~Tochner, and A.~Zohar, ``{SOK:
  Cryptocurrency Networking Context, State-of-the-Art, Challenges},'' in
  \emph{Proceedings of the 15th International Conference on Availability,
  Reliability and Security}, ser. ARES '20.\hskip 1em plus 0.5em minus
  0.4em\relax New York, NY, USA: Association for Computing Machinery, 2020.

\bibitem{vademecum}
M.~{Belotti}, N.~{Božić}, G.~{Pujolle}, and S.~{Secci}, ``{A Vademecum on
  Blockchain Technologies: When, Which, and How},'' \emph{IEEE Communications
  Surveys Tutorials}, vol.~21, no.~4, pp. 3796--3838, 2019.

\bibitem{Wang2020SoKDI}
Q.~Wang, J.~Yu, S.~Chen, and Y.~Xiang, ``Sok: Diving into dag-based blockchain
  systems,'' \emph{ArXiv preprint arXiv:2012.06128}, 2020.

\bibitem{buterin2013ethereum}
V.~Buterin, ``{Ethereum: A Next-Generation Smart Contract and Decentralized
  Application Platform},'' 2013.

\bibitem{cardanoEUTXO}
M.~M.~T. Chakravarty, J.~Chapman, K.~MacKenzie, O.~Melkonian, M.~Peyton~Jones,
  and P.~Wadler, ``{The Extended UTXO Model},'' in \emph{Financial Cryptography
  and Data Security}, M.~Bernhard, A.~Bracciali, L.~J. Camp, S.~Matsuo,
  A.~Maurushat, P.~B. R{\o}nne, and M.~Sala, Eds.\hskip 1em plus 0.5em minus
  0.4em\relax Cham: Springer International Publishing, 2020, pp. 525--539.

\bibitem{Ava19}
T.~Rocket, M.~Yin, K.~Sekniqi, R.~van Renesse, and E.~G. Sirer, ``Scalable and
  probabilistic leaderless {BFT} consensus through metastability,'' \emph{arXiv
  preprint arXiv:1906.08936}, 2019.

\bibitem{2020coordicide}
S.~Popov, H.~Moog, D.~Camargo, A.~Capossele, V.~Dimitrov, A.~Gal, A.~Greve,
  B.~Kusmierz, S.~Mueller, A.~Penzkofer, O.~Saa, W.~Sanders, L.~Vigneri,
  W.~Welz, and V.~Attias, ``{The Coordicide},'' 2019.

\bibitem{RealitiesLedger2022}
S.~Müller, A.~Penzkofer, N.~Polyanskii, J.~Theis, W.~Sanders, and H.~Moog,
  ``{Reality-based UTXO Ledger},'' \emph{arXiv preprint arXiv:2205.01345},
  2022.

\bibitem{ChTuTo:96}
T.~D. Chandra and S.~Toueg, ``Unreliable failure detectors for reliable
  distributed systems,'' \emph{J. ACM}, vol.~43, no.~2, p. 225–267, mar 1996.

\bibitem{penzkofer2020}
A.~Penzkofer, B.~Kusmierz, A.~Capossele, W.~Sanders, and O.~Saa, ``{Parasite
  Chain Detection in the IOTA Protocol},'' in \emph{2nd International
  Conference on Blockchain Economics, Security and Protocols (Tokenomics
  2020)}, 2020, pp. 8:1--8:18.

\bibitem{LaShPe:82}
L.~Lamport, R.~Shostak, and M.~Pease, ``The byzantine generals problem,''
  \emph{ACM Trans. Program. Lang. Syst.}, vol.~4, no.~3, p. 382–401, jul
  1982.

\bibitem{FLP}
M.~J. Fischer, N.~A. Lynch, and M.~S. Paterson, ``Impossibility of distributed
  consensus with one faulty process,'' \emph{J. ACM}, vol.~32, no.~2, p.
  374–382, apr 1985.

\bibitem{Dwork1988}
C.~Dwork, N.~Lynch, and L.~Stockmeyer, ``{Consensus in the Presence of Partial
  Synchrony},'' \emph{J. ACM}, vol.~35, no.~2, p. 288–323, apr 1988.

\bibitem{CrFe:99}
F.~Cristian and C.~Fetzer, ``The timed asynchronous distributed system model,''
  \emph{IEEE Transactions on Parallel and Distributed Systems}, vol.~10, no.~6,
  pp. 642--657, 1999.

\bibitem{everythingIsRace}
A.~Dembo, S.~Kannan, E.~N. Tas, D.~Tse, P.~Viswanath, X.~Wang, and O.~Zeitouni,
  \emph{Everything is a Race and Nakamoto Always Wins}.\hskip 1em plus 0.5em
  minus 0.4em\relax New York, NY, USA: Association for Computing Machinery,
  2020, p. 859–878.

\bibitem{tightConsistency}
P.~Ga\v{z}i, A.~Kiayias, and A.~Russell, \emph{Tight Consistency Bounds for
  Bitcoin}.\hskip 1em plus 0.5em minus 0.4em\relax New York, NY, USA:
  Association for Computing Machinery, 2020, p. 819–838.

\bibitem{PSS:17}
R.~Pass, L.~Seeman, and A.~Shelat, ``Analysis of the blockchain protocol in
  asynchronous networks,'' in \emph{Advances in Cryptology -- EUROCRYPT 2017},
  J.-S. Coron and J.~B. Nielsen, Eds.\hskip 1em plus 0.5em minus 0.4em\relax
  Cham: Springer International Publishing, 2017, pp. 643--673.

\bibitem{balanceAttack}
C.~Natoli and V.~Gramoli, ``The balance attack against proof-of-work
  blockchains: The {R3} testbed as an example,'' \emph{CoRR}, vol.
  abs/1612.09426, 2016.

\bibitem{Guerraoui97}
R.~Guerraoui and A.~Schiper, ``Consensus: the big misunderstanding [distributed
  fault tolerant systems],'' in \emph{Proceedings of the Sixth IEEE Computer
  Society Workshop on Future Trends of Distributed Computing Systems}, 1997,
  pp. 183--188.

\bibitem{Aguilera}
M.~K. Aguilera, \emph{Stumbling over Consensus Research: Misunderstandings and
  Issues}.\hskip 1em plus 0.5em minus 0.4em\relax Berlin, Heidelberg:
  Springer-Verlag, 2010, p. 59–72.

\bibitem{Ben-Or}
M.~Ben-Or, ``Another advantage of free choice (extended abstract): Completely
  asynchronous agreement protocols,'' in \emph{Proceedings of the Second Annual
  ACM Symposium on Principles of Distributed Computing}, ser. PODC '83.\hskip
  1em plus 0.5em minus 0.4em\relax New York, NY, USA: Association for Computing
  Machinery, 1983, p. 27–30.

\bibitem{Rabin}
M.~O. Rabin, ``Randomized byzantine generals,'' in \emph{24th Annual Symposium
  on Foundations of Computer Science (sfcs 1983)}, 1983, pp. 403--409.

\bibitem{Misic2019}
V.~B. Mišić, J.~Mišić, and X.~Chang, ``{On Forks and Fork Characteristics
  in a Bitcoin-Like Distribution Network},'' in \emph{2019 IEEE International
  Conference on Blockchain (Blockchain)}, 2019, pp. 212--219.

\bibitem{Iqbal2021}
M.~Iqbal and R.~Matulevičius, ``{Exploring Sybil and Double-Spending Risks in
  Blockchain Systems},'' \emph{IEEE Access}, vol.~9, pp. 76\,153--76\,177,
  2021.

\bibitem{neuder2021}
M.~Neuder, D.~J. Moroz, R.~Rao, and D.~C. Parkes, ``{Low-cost attacks on
  Ethereum 2.0 by sub-1/3 stakeholders},'' 2021.

\bibitem{lovejoy202}
J.~P.~T. Lovejoy, ``{An empirical analysis of chain reorganizations and
  double-spend attacks on proof-of-work cryptocurrencies},'' \emph{Thesis,
  Massachusetts Institute of Technology. Department of Electrical Engineering
  and Computer Science}, 2020.

\bibitem{Prism}
V.~Bagaria, S.~Kannan, D.~Tse, G.~Fanti, and P.~Viswanath, ``Prism:
  Deconstructing the blockchain to approach physical limits,'' in
  \emph{Proceedings of the 2019 ACM SIGSAC Conference on Computer and
  Communications Security}, ser. CCS '19.\hskip 1em plus 0.5em minus
  0.4em\relax New York, NY, USA: Association for Computing Machinery, 2019, p.
  585–602.

\bibitem{Sompolinsky2013AcceleratingBT}
Y.~Sompolinsky and A.~Zohar, ``Accelerating bitcoin's transaction processing.
  fast money grows on trees, not chains,'' \emph{IACR Cryptol. ePrint Arch.},
  vol. 2013, p. 881, 2013.

\bibitem{lewenberg2015inclusive}
Y.~Lewenberg, Y.~Sompolinsky, and A.~Zohar, ``Inclusive block chain
  protocols,'' in \emph{International Conference on Financial Cryptography and
  Data Security}.\hskip 1em plus 0.5em minus 0.4em\relax Springer, 2015, pp.
  528--547.

\bibitem{SPECTRE}
Y.~Sompolinsky, Y.~Lewenberg, and A.~Zohar, ``Spectre: A fast and scalable
  cryptocurrency protocol,'' Cryptology ePrint Archive, Report 2016/1159, 2016.

\bibitem{PHANTOM}
Y.~Sompolinsky, S.~Wyborski, and A.~Zohar, \emph{PHANTOM GHOSTDAG: A Scalable
  Generalization of Nakamoto Consensus: September 2, 2021}.\hskip 1em plus
  0.5em minus 0.4em\relax New York, NY, USA: Association for Computing
  Machinery, 2021, p. 57–70.

\bibitem{HoneyBadger}
A.~Miller, Y.~Xia, K.~Croman, E.~Shi, and D.~Song, ``{The Honey Badger of BFT
  Protocols},'' in \emph{Proceedings of the 2016 ACM SIGSAC Conference on
  Computer and Communications Security}, ser. CCS '16.\hskip 1em plus 0.5em
  minus 0.4em\relax New York, NY, USA: Association for Computing Machinery,
  2016, p. 31–42.

\bibitem{BEAT}
S.~Duan, M.~K. Reiter, and H.~Zhang, ``{BEAT: Asynchronous BFT Made
  Practical},'' in \emph{Proceedings of the 2018 ACM SIGSAC Conference on
  Computer and Communications Security}, ser. CCS '18.\hskip 1em plus 0.5em
  minus 0.4em\relax New York, NY, USA: Association for Computing Machinery,
  2018, p. 2028–2041.

\bibitem{Hashgraph}
L.~{Baird} and A.~{Luykx}, ``{The Hashgraph Protocol: Efficient Asynchronous
  BFT for High-Throughput Distributed Ledgers},'' in \emph{2020 International
  Conference on Omni-layer Intelligent Systems (COINS)}, 2020, pp. 1--7.

\bibitem{Gagol2019}
A.~Gagol, D.~Lesniak, D.~Straszak, and M.~Swietek, ``Aleph: Efficient atomic
  broadcast in asynchronous networks with byzantine nodes,'' \emph{CoRR}, vol.
  abs/1908.05156, 2019.

\bibitem{Narwhal22}
G.~Danezis, L.~Kokoris-Kogias, A.~Sonnino, and A.~Spiegelman, ``{Narwhal and
  Tusk: A DAG-Based Mempool and Efficient BFT Consensus},'' in
  \emph{Proceedings of the Seventeenth European Conference on Computer
  Systems}, ser. EuroSys '22.\hskip 1em plus 0.5em minus 0.4em\relax New York,
  NY, USA: Association for Computing Machinery, 2022, p. 34–50.

\bibitem{DAGRider}
I.~Keidar, E.~Kokoris-Kogias, O.~Naor, and A.~Spiegelman, ``{All You Need is
  DAG},'' in \emph{Proceedings of the 2021 ACM Symposium on Principles of
  Distributed Computing}, ser. PODC'21.\hskip 1em plus 0.5em minus 0.4em\relax
  New York, NY, USA: Association for Computing Machinery, 2021, p. 165–175.

\bibitem{Bullshark}
\BIBentryALTinterwordspacing
N.~Giridharan, L.~Kokoris{-}Kogias, A.~Sonnino, and A.~Spiegelman, ``Bullshark:
  {DAG} {BFT} protocols made practical,'' \emph{CoRR}, vol. abs/2201.05677,
  2022. [Online]. Available: \url{https://arxiv.org/abs/2201.05677}
\BIBentrySTDinterwordspacing

\bibitem{BlockDAG}
M.~A. Schett and G.~Danezis, ``{Embedding a Deterministic BFT Protocol in a
  Block DAG},'' in \emph{Proceedings of the 2021 ACM Symposium on Principles of
  Distributed Computing}, ser. PODC'21.\hskip 1em plus 0.5em minus 0.4em\relax
  New York, NY, USA: Association for Computing Machinery, 2021, p. 177–186.

\bibitem{Prism10000}
L.~Yang, V.~Bagaria, G.~Wang, M.~Alizadeh, D.~Tse, G.~Fanti, and P.~Viswanath,
  ``Prism: Scaling bitcoin by 10,000x,'' \emph{arXiv preprint
  arXiv:1909.11261}, 2019.

\bibitem{prismSC}
G.~Wang, S.~Wang, V.~Bagaria, D.~Tse, and P.~Viswanath, ``Prism removes
  consensus bottleneck for smart contracts,'' in \emph{2020 Crypto Valley
  Conference on Blockchain Technology (CVCBT)}, 2020, pp. 68--77.

\bibitem{muller2021}
S.~M{\"u}ller, A.~Penzkofer, B.~Ku{\'{s}}mierz, D.~Camargo, and W.~J. Buchanan,
  ``{Fast Probabilistic Consensus with Weighted Votes},'' in \emph{Proceedings
  of the Future Technologies Conference (FTC) 2020, Volume 2}, 2021, pp.
  360--378.

\bibitem{PoMu:21}
S.~Popov and S.~M{\"u}ller, ``Voting-based probabilistic consensuses and their
  applications in distributed ledgers,'' \emph{Annals of Telecommunications},
  vol.~77, no.~1, pp. 77--99, 2022.

\bibitem{cullen2021}
A.~Cullen, P.~Ferraro, W.~Sanders, L.~Vigneri, and R.~Shorten, ``{Access
  Control for Distributed Ledgers in the Internet of Things: A Networking
  Approach},'' \emph{IEEE Internet of Things Journal}, pp. 1--1, 2021.

\bibitem{Kusmierz2019}
B.~Kusmierz, W.~Sanders, A.~Penzkofer, A.~Capossele, and A.~Gal, ``{Properties
  of the Tangle for Uniform Random and Random Walk Tip Selection},'' in
  \emph{2019 IEEE International Conference on Blockchain (Blockchain)}, 2019,
  pp. 228--236.

\bibitem{PKS:18}
P.~Ferraro, C.~King, and R.~Shorten, ``Distributed ledger technology for smart
  cities, the sharing economy, and social compliance,'' \emph{IEEE Access},
  vol.~6, pp. 62\,728--62\,746, 2018.

\bibitem{Li2002ZipfsLE}
W.~Li, ``{Zipf's Law everywhere},'' \emph{Glottometrics}, vol.~5, pp. 14--21,
  2002.

\bibitem{Adamic2002ZipfsLA}
L.~A. Adamic and B.~Huberman, ``{Zipf's law and the Internet},''
  \emph{Glottometrics}, vol.~3, pp. 143--150, 2002.

\bibitem{kondor2014dotherich}
D.~Kondor, M.~Pósfai, I.~Csabai, and G.~Vattay, ``{Do the Rich Get Richer? An
  Empirical Analysis of the Bitcoin Transaction Network},'' \emph{PLOS ONE},
  vol.~9, no.~2, pp. 1--10, 02 2014.

\bibitem{com_sel}
B.~Ku{\'{s}}mierz, S.~M{\"u}ller, and A.~Capossele, ``{Committee Selection in
  DAG Distributed Ledgers and Applications},'' in \emph{Intelligent Computing},
  K.~Arai, Ed.\hskip 1em plus 0.5em minus 0.4em\relax Cham: Springer
  International Publishing, 2021, pp. 840--857.

\bibitem{king:21}
C.~King, ``The fluid limit of a random graph model for a shared ledger,''
  \emph{Advances in Applied Probability}, vol.~53, no.~1, p. 81–106, 2021.

\bibitem{IOTASC}
E.~Dr\c{a}sutis, ``{IOTA Smart Contracts},'' accessed January 2022.

\bibitem{DLL:88}
C.~Dwork, N.~Lynch, and L.~Stockmeyer, ``{Consensus in the Presence of Partial
  Synchrony},'' \emph{J. ACM}, vol.~35, no.~2, p. 288–323, Apr. 1988.

\bibitem{ChTo:96}
T.~D. Chandra and S.~Toueg, ``{Unreliable Failure Detectors for Reliable
  Distributed Systems},'' \emph{J. ACM}, vol.~43, no.~2, p. 225–267, Mar.
  1996.

\bibitem{BBBG:15}
J.~Beauquier, P.~Blanchard, J.~Burman, and R.~Guerraoui, ``{The Benefits of
  Entropy in Population Protocols},'' in \emph{19th International Conference on
  Principles of Distributed Systems (OPODIS 2015)}, ser. Leibniz International
  Proceedings in Informatics (LIPIcs), E.~Anceaume, C.~Cachin, and
  M.~Potop-Butucaru, Eds., vol.~46.\hskip 1em plus 0.5em minus 0.4em\relax
  Dagstuhl, Germany: Schloss Dagstuhl--Leibniz-Zentrum fuer Informatik, 2016,
  pp. 1--15.

\bibitem{penzkofer2021}
A.~Penzkofer, O.~Saa, and D.~Dziubałtowska, ``{Impact of delay classes on the
  data structure in IOTA},'' in \emph{CBT2021 - 5th Cryptocurrencies and
  Blockchain Technology workshop}, 2021.

\bibitem{POPOV2021}
S.~Popov and W.~J. Buchanan, ``{FPC-BI}: Fast probabilistic consensus within
  byzantine infrastructures,'' \emph{Journal of Parallel and Distributed
  Computing}, vol. 147, pp. 77--86, 2021.

\bibitem{Capossele2021}
A.~Capossele, S.~Müller, and A.~Penzkofer, ``{Robustness and efficiency of
  voting consensus protocols within byzantine infrastructures},''
  \emph{Blockchain: Research and Applications}, vol.~2, no.~1, p. 100007, 2021.

\bibitem{cascudo2017scrape}
I.~Cascudo and B.~David, ``{SCRAPE: Scalable randomness attested by public
  entities},'' in \emph{International Conference on Applied Cryptography and
  Network Security}.\hskip 1em plus 0.5em minus 0.4em\relax Springer, 2017, pp.
  537--556.

\bibitem{Lenstra_Wes17}
A.~K. Lenstra and B.~Wesolowski, ``{Trustworthy public randomness with sloth,
  unicorn, and trx},'' \emph{International Journal of Applied Cryptography},
  vol.~3, no.~4, pp. 330--343, 2017.

\bibitem{popov2017decentralized}
S.~Popov, ``{On a decentralized trustless pseudo-random number generation
  algorithm},'' \emph{Journal of Mathematical Cryptology}, vol.~11, no.~1, pp.
  37--43, 2017.

\bibitem{schindlerhydrand}
P.~Schindler, A.~Judmayer, N.~Stifter, and E.~Weippl, ``{HydRand: Practical
  Continuous Distributed Randomness},'' Cryptology ePrint Archive, Report
  2018/319, 2018.

\bibitem{syta2017scalable}
E.~Syta, P.~Jovanovic, E.~K. Kogias, N.~Gailly, L.~Gasser, I.~Khoffi, M.~J.
  Fischer, and B.~Ford, ``{Scalable bias-resistant distributed randomness},''
  in \emph{2017 IEEE Symposium on Security and Privacy (SP)}, 2017, pp.
  444--460.

\bibitem{wes}
B.~Wesolowski, ``{Efficient verifiable delay functions},'' \emph{J. Cryptol.},
  vol.~33, pp. 2113--2147, 2020.

\bibitem{FPCS}
S.~M{\"u}ller, R.~C. Nitchai, and S.~Popov, ``{FPC on a set},'' 2021.

\bibitem{RobustnessTangle}
B.-Y. Lin, D.~Dziubałtowska, P.~Macek, A.~Penzkofer, and S.~Müller,
  ``{Robustness of the Tangle 2.0 Consensus},'' \emph{arXiv preprint
  arXiv:2208.08254}, 2022.

\bibitem{otv-simulator}
\BIBentryALTinterwordspacing
{\relax IOTA Foundation}. (2021) {Tangle 2.0 simulator}. [Online]. Available:
  \url{https://github.com/iotaledger/multiverse-simulation}
\BIBentrySTDinterwordspacing

\end{thebibliography}


\end{document}